\theoremstyle{remark}
\newtheorem{thm}{Theorem}
\newtheorem{defn}[thm]{Definition}
\newtheorem{assu}[thm]{Assumption}
\newtheorem{algo}[thm]{Algorithm}
\newtheorem{lem}[thm]{Lemma}
\newtheorem{lrop}[thm]{Proposition}
\newtheorem{coro}[thm]{Corollary}
\theoremstyle{remark}
\newtheorem{rmk}[thm]{Remark}
\title{Data-driven inference on optimal input-output properties of polynomial systems with focus on nonlinearity measures }
\author{Tim Martin and Frank Allg{\"o}wer
	\thanks{{We thank the foundation by the Deutsche Forschungsgemeinschaft (DFG, German Research Foundation) under Germany's Excellence Strategy - EXC 2075 - 390740016 and under grant 468094890. We acknowledge the support by the Stuttgart Center for Simulation Science (SimTech).}}
	\thanks{The authors are with the University of Stuttgart, Institute for Systems Theory and Automatic Control, Germany. (e-mail: tim.martin@ist.uni-stuttgart.de; frank.allgöwer@ist.uni-stuttgart.de).}
}
\begin{document}

\IEEEoverridecommandlockouts

\IEEEpubid{\begin{minipage}{\textwidth}\ \\[40pt] \copyright 2022 IEEE. Personal use of this material is permitted. Permission from IEEE must be obtained for all other uses, in any current or future media,
		including reprinting/republishing this material for advertising or promotional purposes, creating new collective works, for resale or redistribution to servers
		or lists, or reuse of any copyrighted component of this work in other works.\end{minipage}}

\maketitle
\pagestyle{empty}	

%\maketitle
%\thispagestyle{empty} % Removes the page number in the first page

\begin{abstract}
	
In the context of dynamical systems, nonlinearity measures quantify the strength of nonlinearity by means of the distance of their input-output behaviour to a set of linear input-output mappings. In this paper, we establish a framework to determine nonlinearity measures and other optimal input-output properties for nonlinear polynomial systems without explicitly identifying a model but from a finite number of input-state measurements which are subject to noise. To this end, we deduce from data for the unidentified ground-truth system three possible set-membership representations, compare their accuracy, and prove that they are asymptotically consistent with respect to the {amount} of samples. Moreover, we leverage these representations to compute guaranteed upper bounds on nonlinearity measures and the corresponding optimal linear approximation model via semi-definite programming. Furthermore, we {extend} the {established} framework to determine optimal input-output properties described by time domain hard integral quadratic constraints.
\end{abstract}

\begin{IEEEkeywords}
Data-driven system analysis, identification for control, polynomial dynamical systems.
\end{IEEEkeywords}

\section{Introduction}\label{Intrduction}

Most controller design techniques for nonlinear systems require a precise model of the system. However, the concurrently increasing complexity of plants in engineering leads to time-consuming modelling by first principles. Therefore, data-driven controller design techniques have been developed where a controller is derived from measured trajectories of the plant.\\\indent
For that purpose, a two-step procedure is usually applied where first a model of the control plant is retrieved by system identification techniques in order to apply controller design methods afterwards. To recover closed-loop stability from controller design techniques with inherent closed-loop guarantees, an estimation of the model error is required which is an active research field even for linear time-invariant (LTI) systems \cite{SysIdLin}. On the other hand, recent interests consider a controller design directly from measured trajectories with rigorous closed-loop guarantees. In this context, data-driven approaches for nonlinear systems include virtual reference feedback tuning \cite{VRT}, adaptive control \cite{Adaptive}, and set-membership \cite{MilanseControlDesign} and \cite{DePersisSOS}. \cite{DataSurvey} provides a broader overview of such kind of methods.\\\indent
%Although many controller design techniques have inherent closed-loop stability and performance guarantees, they are jeopardized due to the mostly inexact identified model such that no end-to-end guarantees for the closed-loop can be recovered. Similarly, an open field of research is that many data-driven methods fail in terms of rigorous closed-loop guarantees. In this context, we want to highlight two data-driven approach for nonlinear systems which constitute a controller design with stability guarantees for the closed-loop.\\\indent
%First, \cite{Milanese} proposes a data-driven set-membership and non-parametric representation of nonlinear systems relying on the Lipschitz constant of the system dynamics which can be estimated from measurements. Thereby, closed-loop stability can be guaranteed along \cite{MilanseControlDesign} and via robust model predictive control according to \cite{SetMemMPC}. For polynomial continuous-time systems, \cite{DePersisSOS} treats stabilizing state-feedback design from noise-corrupted measurements by computationally attractive sum-of-squares (SOS) conditions. For that purpose, the results from \cite{Groningen} are generalized for polynomial systems.\\\indent
In this paper, we follow the alternative direction of \cite{MontenbruckLipschitz} where system theoretic properties, as dissipativity \cite{DissiWillems}, of an unknown system are determined from data. System theoretic properties have a large relevance in system analysis and robust controller design as they provide insights into the system and facilitate a controller design without knowledge of the system dynamics. Thus, we can leverage the determination of these properties from measured trajectories for a data-driven controller design. Further motivations for deriving a controller by means of system properties are a modular controller design for large-scale systems, well-established feedback theorems \cite{Khalil} but also recent control methods, e.g., for network control systems \cite{Network}, and uncertainty characterization in application fields as (soft) robotics \cite{Softr}. \\\indent
If the system property, that represents the system behaviour, is chosen inappropriately, then this design ansatz can lead to conservative control performance. Therefore, we consider the extensive framework of integral quadratic constraints (IQCs) which achieve, compared to dissipativity, a more informative description of input-output properties, and hence a less conservative robust controller design \cite{IQCSyn}. A certain class of IQCs are nonlinearity measures (NLMs) \cite{FA_NLM} where the strength of nonlinearity of a dynamical system is quantified by means of an \textquoteleft optimal\textquoteright\ linear approximation model of the nonlinear input-output behaviour.\\\indent %Knowledge of this property allow to design of a linear robust (output-feedback) controller with closed-loop guarantees for the nonlinear system.
The estimation of system properties of this nature has been examined for a long time but to mention recent research, \cite{OneShot} and \cite{AnneIQC} determine dissipativity and IQCs, respectively, over a finite time horizon from a noise-free input-output trajectory for LTI systems. For any finite time horizon, \cite{AnneDissi} guarantees dissipativity properties from noisy input-state samples using the data-based system representation from \cite{Groningen}. Recently, we established in \cite{MartinDissi} two data-based set-membership frameworks %for cumulatively bounded noise, similar to \cite{DePersisSOS} and \cite{Groningen}, and pointwise bounded noise 
to verify dissipativity properties for unidentified polynomial systems via sum-of-squares (SOS) optimization. Note that the basis of \cite{AnneDissi}, \cite{MartinDissi}, and this work is a set-membership ansatz with deterministic noise description as it allows the direct application of robust control techniques to determine system properties by semi-definite programming (SDP). While a comparable parameter description by probability distributions from Bayesian estimation would yield to non-tractable optimization for verifying system properties, a probabilistic noise description could lead to less conservative parameter estimations.\\\indent
The contributions of this work are the following. First, while the determination of optimal IQCs, contrary to dissipativity \cite{MartinDissi}, requires intrinsically the non-convex optimization over a linear filter, we provide a comprising framework to retrieve for an unidentified polynomial system from noisy data the optimal, i.e., the tightest, system property specified by a certain class of IQC using computation tractable linear matrix inequalities (LMIs) including SOS multipliers. In particular, Algorithm~\ref{Algorithm1} yields together with Theorem~\ref{SynAENLM} to a data-driven inference on NLMs and infers together with Corollary~\ref{CoroSynIQC} on more general IQCs. In contrast to \cite{IterativeNLM} and \cite{MartinNLM}, we conclude on NLMs for any finite time horizon and for optimized linear approximation models which are specified by a general linear state-space model, and hence allow for less conservative inferences. Second, since we demonstrated in \cite{MartinDissi} that a cumulative noise description, as suggested in \cite{Groningen}, might lead to conservative parameter estimations if the noise is actually bounded in each time step, we constitute in Proposition~\ref{Primal_elli} and \ref{WSigma} two supersets for the unidentified coefficients which are more accurate than the superset in \cite{Groningen}. Third, the investigation of the asymptotic accuracy of these supersets in Theorem~\ref{ConverW}, \ref{Converp}, and \ref{SetofunFaCoeAv} is intrinsically an extension of \cite{SetMem_Con} and \cite{SetMem_MPC}, but shows for the first time the strong connection of the recent data-driven approaches, e.g., \cite{DePersisSOS, AnneDissi}, and \cite{Groningen}, and the set-membership literature, e.g., \cite{SetMem_Con} and \cite{Milanese}. Hence, this link could lead to further insights into the recent data-driven methods in the future.\\\indent
The paper is organized as follows. In Section~\ref{PreSet}, we state the problem setup and the connection of NLMs to system properties from the control literature. Section~\ref{Supersets} provides a data-based characterization of the unidentified coefficients which is the basis to establish the framework for determining NLMs from data in Section~\ref{SecGeneralSOS}. Section~\ref{Ext} contains extensions of this framework to determine, e.g., tight IQCs of certain classes. Section~\ref{DataPre2} compares the set-membership characterization from Section~\ref{Supersets} to two others and investigates their accuracy and asymptotic consistency.  
\IEEEpubidadjcol

\section{Problem formulation}\label{PreSet}

\subsection{Notation}

Along the paper, let $\mathbb{N}_0$ denote the set of all natural numbers including zero, $\mathbb{N}_{[a,b]}=\{n\in\mathbb{N}_0:a\leq n\leq b\}$ the set of all integers in the interval $[a,b]$, and $\mathbb{N}_{\geq0}=\mathbb{N}_{[0,\infty)}$. Analogous definitions hold for the set of real numbers $\mathbb{R}$. The floor value of a scalar $s$ is denoted by $\lfloor s\rfloor$. Let $\partial M$ denote the boundary of a set $M$ and $\oplus$ denote the Minkowski addition of two sets. Furthermore, the probability of an event $E$ is denoted by $\text{Pr}({E})$. The Euclidean norm of a vector $v\in\mathbb{R}^n$ is denoted as $||v||_2$. $I_n$ denotes the $n\times n$ identity matrix and $0$ the zero matrix of suitable dimensions. Moreover, we write $\text{vec}(A)\in\mathbb{R}^{nm}$ for the vectorization of $A\in\mathbb{R}^{n\times m}$ by stacking its columns. For some matrices $A_1$, $A_2$, and $A_3$, we write the block diagonal matrix
\begin{equation*}
	\text{diag}(A_1,A_2\big| A_3)=\begin{bmatrix}\begin{array}{c|c}
	\begin{matrix}A_1 & 0\\0 & A_2\end{matrix} & 0 \\\hline 0 & A_3 
	\end{array}	\end{bmatrix}{.}
\end{equation*}
Furthermore, we introduce for matrices $M,N\in\mathbb{R}^{m\times n}$ the inner product $\left<M,N\right>_\text{Fr}=\text{tr}(M^TN)$ which implies the Frobenius norm $||M||_\text{Fr}=\sqrt{\left<M,M\right>_\text{Fr}}$.\\\indent
Let $\ell_2^p$ denote the vector space of infinite sequences of real numbers $u:\mathbb{N}_0\rightarrow\mathbb{R}^p$ for which $||u||_{\ell_2}=(\sum_{t=0}^{\infty}||u(t)||_2^2)^{1/2}<\infty$. %Furthermore, $\ell_2^p$ is equipped with the inner product $\left<u,v\right>_{\ell_2}=\sum_{t=0}^{\infty}u(t)^Tv(t)$. 
By convention, let $\ell_{2e}^p$ be the space of infinite sequences satisfying $u_T\in\ell_2^{p}$ for all $T\in\mathbb{N}_0$ where $(\cdot)_T$ denotes the truncation operator
\begin{equation*}
	u_T(t)=\left\{
	\begin{array}{ll}
	u(t) & \text{for}\ t\leq T \\
	0 &  \text{for}\ t>T \\
	\end{array}
	\right. .
\end{equation*}\indent
For the investigation of polynomial systems, we define $\mathbb{R}[x]$ as the set of all polynomials $p$ in $x=\begin{bmatrix}
x_1 & \cdots & x_n\end{bmatrix}^T\in\mathbb{R}^n$, i.e.,
\begin{equation*}
	p(x)=\sum_{\alpha\in\mathbb{N}_0^n,|\alpha|\leq d} a_\alpha x^\alpha,
\end{equation*}
with vectorial indices $\alpha={\begin{bmatrix}\alpha_1 & \cdots & \alpha_n\end{bmatrix}^T}\in\mathbb{N}_0^n$, $|\alpha|=\alpha_1+\cdots+ \alpha_n$, monomials $x^\alpha=x_1^{\alpha_1}\cdots x_n^{\alpha_n}$, real coefficients $a_\alpha\in\mathbb{R}$, and $d$ as the degree of $p$. In addition, we define the set of all $m$-dimensional polynomial vectors $\mathbb{R}[x]^m$ and $m\times n$ polynomial matrices $\mathbb{R}[x]^{m\times n}$ where each entry is an element of $\mathbb{R}[x]$. The degree of a polynomial vector or matrix corresponds to the largest degree of its elements. For a polynomial matrix $P\in\mathbb{R}[x]^{n\times n}$ with even degree, if there exists a matrix $Q\in\mathbb{R}[x]^{m\times n}$ such that $P=Q^TQ$, then $P$ is an SOS matrix or SOS polynomial for $n=1$. {$\text{SOS}[x]^{n\times n}$ denotes the set of all ${n}\times {n}$ SOS matrices and $\text{SOS}[x]$ the set of all SOS polynomials.}

\subsection{NLM as input-output property}\label{NLMIntro}

In this section, we introduce a measure of the nonlinearity of the input-output behaviour of dynamical systems. We also relate this measure to various system properties from the control literature.\\\indent
As common in nonlinear control \cite{Khalil}, the input-output behaviour of dynamical systems can be represented by an operator $H:\mathcal{U}\subseteq\mathcal{\ell}_{2e}^{n_u}\rightarrow\mathcal{Y}\subseteq\mathcal{\ell}_{2e}^{n_y}$ that maps each input signal uniquely to an output signal and that satisfies $H(u_T)_T=H(u)_T$ for all $T\in\mathbb{N}_0$ to ensure causality. In the following, we assume that $H$ is stable, i.e., its $\ell_2$-gain
\begin{equation}\label{Def_ell2_gain}
	||H||^{\mathcal{U}}=\sup_{u\in\mathcal{U}\backslash \{0\},T\in\mathbb{N}_0}\frac{||{H}(u)_T||_{\ell_2}}{||u_T||_{\ell_2}}
\end{equation}
is finite. To quantify the nonlinearity of such an operator, \cite{FA_NLM} suggests the following notion of NLM.
\begin{defn}[Additive error NLM (AE-NLM)]\label{NLMDef}
The nonlinearity of a causal and stable nonlinear system $H:\mathcal{U}\rightarrow\mathcal{Y}$ is measured by
\begin{align}\label{AE-NLM}
	\Phi^{\mathcal{U},\mathcal{G}}_{\text{AE}}=\inf_{G\in\mathcal{G}}\sup_{\substack{u\in\mathcal{U}\backslash \{0\},T\in\mathbb{N}_0}}\frac{||H(u)_T-G(u)_T||_{\ell_2}}{||u_T||_{\ell_2}},
\end{align}
where $\mathcal{G}$ is a set of stable and linear mappings $G:\mathcal{U}\rightarrow\mathcal{\ell}_{2e}^{n_y}$.
\end{defn}

By the stability of $H$ and $G$, the AE-NLM exists as clarified in \cite{TS}. While the supremum of \eqref{AE-NLM} corresponds to the $\ell_2$-gain from input $u$ to the error $e(u)=H(u)-G(u)$ as illustrated in Figure~\ref{Fig:NLMDef}, the infimum yields the linear system $G^*$ in $\mathcal{G}$ that minimizes the $\ell_2$-gain of the error model $\Delta=H-G$. 
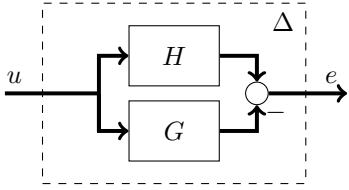
\begin{figure}
	\begin{center}
		\begin{tikzpicture}[scale=0.5]

\path (2,0) node[rectangle,draw,minimum width=12mm, minimum height=8mm](G) {$G$};
\path (G)+(0,2) node[rectangle,draw,minimum width=12mm, minimum height=8mm](Delta) {$H$};
\path (G)+(0,1) node[rectangle,draw,minimum width=35mm, minimum height=24mm, dashed](N) {};
\path (G)+(2.9,2.9) node(N1) {$\Delta$};
%\path (G)+(6,2.5) node(N1) {$||\Delta||^{\mathcal{U}}\leq\Phi^{\mathcal{U},\mathcal{G}}$};

\path (G)+(2.2,1) node[circle,draw,inner sep=0pt, minimum size=3mm](rSum){};
\path (rSum)+(0.5,-0.5) node(minus){$-$};
\draw[-,line width=1.5pt] (G)+(-4.5,1) -- (0,1) node[pos=0.1, above=3pt,inner sep=0pt](HP2){$u$};
\draw[->,line width=1.5pt] (G)+(-2,0) |- (Delta);
\draw[->,line width=1.5pt] (G)+(-2,1) |- (G);
\draw[->,line width=1.5pt] (Delta) -| (rSum);
\draw[->,line width=1.5pt] (G) -| (rSum);
\draw[->,line width=1.5pt] (rSum) -- +(2.4,0) node[pos=0.8, above=3pt,inner sep=0pt](HP2){$e$};
    	
\end{tikzpicture}
	\end{center}
	\caption{Illustration of the AE-NLM.}
	\label{Fig:NLMDef}
\end{figure}
Therefore, $G^*$ can be seen as the \textquoteleft optimal\textquoteright\ linear approximation of the nonlinear system behaviour of $H$ and could be  exploited as linear surrogate model of $H$ with known error bound, e.g., for a robust controller design with rigorous closed-loop guarantees. %Geometrically, $G^*$ minimizes the distance of $H$ to the set $\mathcal{G}$, i.e., $G^*$ is the orthogonal projection of $H$ on $\mathcal{G}$ as demonstrated in Figure~\ref{Fig:Projection}. 
%\begin{figure}
%	\begin{center}
%		\input{Projection}
%	\end{center}
%	\caption{\textquoteleft Optimal\textquoteright\ linear surrogate model $G^*$ as the orthogonal projection of system $H$ on the set $\mathcal{G}$.}
%	\label{Fig:Projection}
%\end{figure}
Since the error of a global linear approximation for a general nonlinear system is mostly unbounded, we define the AE-NLM locally over $\mathcal{U}\subseteq\mathcal{\ell}_{2e}^{n_u}$. Furthermore, \cite{TS} shows that the AE-NLM is equal to $||H||^{\mathcal{U}}$ with $G^*=0$ if $H$ is strongly nonlinear and the NLM is zero if $H$ has a linear input-output behaviour.\\\indent
In the sequel, we relate the AE-NLM to other system properties from control theory. First, Definition~\ref{NLMDef} includes the conic relations from \cite{Zames} as special case for a static center $\mathcal{G}=\{G=c\,\text{id}: c\in\mathbb{R}\}$ with $\text{id}:u\mapsto u$. Since $\Phi^{\mathcal{U},\mathcal{G}}_{\text{AE}}$ can be seen as the width of the tightest cone with center $G^*$ and containing $H$, the width for a static center is larger than for a dynamic center. Thus, we conclude that a stabilizing controller, obtained from a dynamic center, can be confined in a larger cone, and hence is less conservative than a controller by applying the feedback theorem from \cite{Zames}. Furthermore, we also showed in \cite{MartinNLM} that AE-NLM can be described as dynamic conic sector \cite{Teel} from which a feedback theorem can be deduced via topological graph separation.\\\indent
Second, if the nonlinear input-output mapping $H$ is specified by a nonlinear state-space representation
\begin{equation}\label{NLsys}
	H:\left\{\begin{aligned} x(t+1)&=f(x(t),u(t)), x(0)=0 \\ y(t)&=h(x(t),u(t)),t\in\mathbb{N}_0 \end{aligned}\right.
\end{equation}
with input $u(t)\in\mathbb{U}\subseteq\mathbb{R}^{n_u}$, state $x(t)\in\mathbb{X}\subseteq\mathbb{R}^{n_x}$, and output $y(t)\in\mathbb{Y}\subseteq\mathbb{R}^{n_y}$, then dissipativity theory \cite{DissiWillems} constitutes an elaborate framework to characterize input-output properties by simple inequality conditions. Contrary to \cite{DissiWillems}, we give here a local notion of dissipativity.
\begin{defn}[Dissipativity]\label{DissiDef}
	System~\eqref{NLsys} is dissipative on $\mathbb{Z}=\mathbb{X}\times\mathbb{U}$ {regarding} the supply rate $s:\mathbb{Z}\rightarrow\mathbb{R}$ if there exists a continuous storage function $\lambda:\mathbb{X}\rightarrow\mathbb{R}_{\geq0}$ such that
	\begin{align}\label{dissipativityInqu}
	\lambda(f(x,u))-\lambda(x)\leq s(x,u),\quad \forall (x,u)\in\mathbb{Z}.
	\end{align}
	\vspace{-0.3cm}
\end{defn} 

In particular, we are interested in the supply rate
\begin{equation}\label{SupplyRate2}
	s(x,u)=\gamma ||u||_2^2-\frac{1}{\gamma}||y||_2^2,
\end{equation}
with $y=h(x,u)$, because the corresponding dissipativity property is connected to gains of systems within invariant sets as follows from \cite{Gain} (Proposition 3.1.7)
\begin{lrop}[Gains of systems]\label{GainSys}
For an operator \eqref{NLsys}, assume $\mathbb{X}$ is invariant under $x(0)$ and $u(t)\in\mathbb{U}$. Then, its $\ell_2$-gain \eqref{Def_ell2_gain} with $\mathcal{U}=\{u\in\ell_2^{n_u}: u(t)\in\mathbb{U}, \forall t\in\mathbb{N}_0\}$ is given by the smallest $\gamma\geq0$ such that \eqref{NLsys} is dissipative on $\mathbb{Z}=\mathbb{X}\times\mathbb{U}$ regarding the supply rate $s(x,u)=\gamma ||u||_2^2-\frac{1}{\gamma}||y||_2^2$ and admits a storage function with $\lambda(0)=0$.
\end{lrop}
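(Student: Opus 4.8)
\emph{Proof sketch.} The plan is to prove the two halves of the claimed equality separately. Write $\gamma^\star:=||H||^{\mathcal{U}}$ and let $\Gamma\subseteq\mathbb{R}_{\geq 0}$ be the set of all $\gamma$ for which \eqref{NLsys} is dissipative on $\mathbb{Z}=\mathbb{X}\times\mathbb{U}$ with respect to $s(u,y)=\gamma^2||u||_2^2-||y||_2^2$ with a continuous storage function $\lambda:\mathbb{X}\to\mathbb{R}_{\geq 0}$ satisfying $\lambda(0)=0$; the goal is to show $\gamma^\star=\min\Gamma$. First, fix $\gamma\in\Gamma$ together with its storage function $\lambda$, take any $u\in\mathcal{U}$ and $T\in\mathbb{N}_0$, and observe that $x(0)=0\in\mathbb{X}$ together with invariance of $\mathbb{X}$ under inputs in $\mathbb{U}$ gives $(x(t),u(t))\in\mathbb{Z}$ for all $t$, so \eqref{dissipativityInqu} holds along the trajectory. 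Summing it from $t=0$ to $T$ telescopes the left-hand side to $\lambda(x(T+1))-\lambda(x(0))=\lambda(x(T+1))\geq 0$, hence $\sum_{t=0}^{T}||y(t)||_2^2\leq\gamma^2\sum_{t=0}^{T}||u(t)||_2^2$, i.e. $||H(u)_T||_{\ell_2}\leq\gamma\,||u_T||_{\ell_2}$; dividing by $||u_T||_{\ell_2}$ and taking the supremum over $u\in\mathcal{U}\backslash\{0\}$ and $T\in\mathbb{N}_0$ yields $\gamma^\star\leq\gamma$, so $\gamma^\star\leq\inf\Gamma$.

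For the reverse direction I would show $\gamma^\star\in\Gamma$ by exhibiting, for $\gamma=\gamma^\star$, the available storage
\begin{equation*}
  \lambda_a(x):=\sup_{N\in\mathbb{N}_0,\ u\in\mathcal{U}}\ \sum_{t=0}^{N-1}\big(||h(x(t),u(t))||_2^2-(\gamma^\star)^2||u(t)||_2^2\big)
\end{equation*}
evaluated along the trajectory of \eqref{NLsys} started at $x(0)=x$, where $N=0$ contributes the empty sum $0$ so that $\lambda_a\geq 0$. For $x=0$ every admissible contribution equals $||H(u)_{N-1}||_{\ell_2}^2-(\gamma^\star)^2||u_{N-1}||_{\ell_2}^2\leq 0$ by definition of $\gamma^\star$, so $\lambda_a(0)=0$. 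Splitting the supremum into the first input value and the remaining trajectory (the Bellman/dynamic-programming step) yields $\lambda_a(f(x,u))-\lambda_a(x)\leq s(x,u)$ for all $(x,u)\in\mathbb{Z}$, i.e. \eqref{dissipativityInqu}. That $\lambda_a$ is finite on $\mathbb{X}$ — and therefore that $\gamma^\star$ is actually attained and not merely approached — I would obtain from reachability of states from the origin: if $x$ is reached from $x(0)=0$ by some finite-horizon input $u^x$, concatenating $u^x$ with any continuation input produces an admissible signal from $0$, and applying $\gamma^\star=||H||^{\mathcal{U}}$ to it bounds every tail sum in the definition of $\lambda_a(x)$ by the finite constant $(\gamma^\star)^2||u^x||_{\ell_2}^2$. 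Together with the first part this gives $\gamma^\star=\min\Gamma$; for the routine estimates I would follow \cite{Gain} (Proposition~3.1.7).

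The delicate point, and the step I expect to be the main obstacle, is this last one: one must guarantee that $\lambda_a$ is a bona fide storage function — real-valued and continuous — on all of $\mathbb{X}$. Finiteness requires every state of $\mathbb{X}$ relevant to the gain to be reachable from the origin, so I would either take $\mathbb{X}$ to be the reachable set of \eqref{NLsys} (the smallest set invariant under $x(0)=0$ and $u(t)\in\mathbb{U}$, which is precisely the set that matters for $||H||^{\mathcal{U}}$) or add reachability as a standing hypothesis; continuity of $\lambda_a$ I would derive from continuity of $f$ and $h$ and the compactness/polynomial structure available here, or else invoke it directly from \cite{Gain}.
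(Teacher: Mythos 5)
The paper does not actually prove this proposition: it is stated as following directly from \cite{Gain} (Proposition~3.1.7), so there is no in-paper argument to compare against. What you have written is the standard argument that underlies that citation, and it is sound in outline. The sufficiency direction (telescoping \eqref{dissipativityInqu} along a trajectory starting at $x(0)=0$, using $\lambda\geq0$ and $\lambda(0)=0$ to get $\|H(u)_T\|_{\ell_2}\leq\gamma\|u_T\|_{\ell_2}$) is complete and correct, and it relies on exactly the invariance hypothesis the proposition imposes. The necessity direction via the available storage $\lambda_a$ is the right construction, and your accounting for attainment of the minimum (showing $\gamma^\star\in\Gamma$ rather than merely $\gamma^\star\leq\inf\Gamma$) is the correct way to read ``smallest $\gamma$.'' The two caveats you flag are genuine and are precisely the technical content that the cited reference (and its continuous-time original) handles with additional hypotheses: finiteness of $\lambda_a$ on all of $\mathbb{X}$ does require reachability from the origin (or restricting $\mathbb{X}$ to the reachable set, which is all that matters for $\|H\|^{\mathcal{U}}$ as defined in \eqref{Def_ell2_gain}), and continuity of $\lambda_a$ is not automatic from continuity of $f$ and $h$ alone. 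Since the proposition as stated in the paper asks for a \emph{continuous} storage function (Definition~\ref{DissiDef}) but imposes no reachability or regularity assumption beyond invariance, your proof makes explicit a gap that the paper papers over by citation; for the way the proposition is actually used later (Lemma~\ref{ThmAnaNLM} only needs the sufficiency direction, with a quadratic $\lambda$), the half you prove completely is the half that matters.
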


In Section~\ref{SecGeneralSOS}, this connecting of dissipativity and system gains plays a crucial role as the $\ell_2$-gain of the error system $\Delta$ is equal to the AE-NLM. Note that Proposition~\ref{GainSys} requires a state-space instead of an input-output representation of the system in order to provide system properties over arbitrary time horizons.\\\indent %Moreover, we suggested in \cite{MartinDissi} two frameworks to check whether a polynomial system is dissipative from input-state data which is subject to noise. Hence, we will extend the results from \cite{MartinDissi} to conclude on NLMs in Section~\ref{SecGeneralSOS} where optimizing over the linear approximation model is additionally required.\\\indent
We already mentioned that AE-NLM generalizes the conic relations from \cite{Zames} by a dynamic center. From another viewpoint, NLMs constitute a special case of IQCs. Although our focus lies on NLMs, IQCs build an attractive and frequently-studied framework to describe and work with a large class of input-output properties, e.g., compare \cite{IQC_Scherer}. Therefore, we will adapt our main result for deriving AE-NLM to also determine tight IQCs of certain classes in Section~\ref{Ext}. {While \cite{IQC_Rantzer} originally introduces IQCs in the frequency domain, we consider here only time domain IQCs and refer to \cite{IQC_intro} for a more detailed introduction of IQCs.}
%Proceeding as the introduction of \cite{IQC_intro}, system $H:u\in\ell_{2e}^{n_u}\mapsto y\in\ell_{2e}^{n_y}$ is said to satisfy the IQC defined by the linear, bounded, and self-adjoint operator $\Pi$ if 
%\begin{equation*}
%	\left<\begin{bmatrix}u\\ H(u)\end{bmatrix},\Pi \begin{bmatrix}u\\ H(u)\end{bmatrix} \right>_{\ell_2}\geq0,\quad \forall u\in\ell_2^{n_u}.
%\end{equation*}
%If the multiplier $\Pi$ is represented as $\Pi=\Psi^*M\Psi$, where $\Psi^*$ denotes the adjoint operator of $\Psi$, with $M\in\mathbb{R}^{n_r\times n_r}$ and a stable {linear time-invariant} (LTI) system $\Psi$ with
%\begin{equation}\label{IQCFilter}
%	\Psi:\left\{\begin{aligned} x_\Psi(t+1)&=A_\Psi x_\Psi(t)+B_{\Psi u} u(t)+B_{\Psi y} y(t)\\
%	x_\Psi(0)&=0 \\ 
%	r(t)&=C_\Psi x_\Psi(t)+D_{\Psi u} u(t)+D_{\Psi y} y(t) \end{aligned}\right. ,
%\end{equation}
%then time domain hard IQCs are specified as follows. 
\begin{defn}[Time domain hard IQC]\label{IQC}
System $H:u\in\ell_{2e}^{n_u}\mapsto y\in\ell_{2e}^{n_y}$ satisfies the time domain hard IQC with matrix $M\in\mathbb{R}^{n_r\times n_r}$ and stable LTI system
\begin{equation}\label{IQCFilter}
	\Psi:\left\{\begin{aligned} x_\Psi(t+1)&=A_\Psi x_\Psi(t)+B_{\Psi u} u(t)+B_{\Psi y} y(t)\\
	x_\Psi(0)&=0 \\ 
	r(t)&=C_\Psi x_\Psi(t)+D_{\Psi u} u(t)+D_{\Psi y} y(t) \end{aligned}\right. 
\end{equation}
if, for all $N\in\mathbb{N}_0$ and $r(t)\in\mathbb{R}^{n_r}$ given by \eqref{IQCFilter}, it holds
\begin{align}\label{IQCIneq}
	\sum_{t=0}^{N}r(t)^TMr(t)\geq 0.
\end{align}
\vspace{-0.1cm}
\end{defn}  

%If \eqref{IQCIneq} is only sufficed for $N\rightarrow\infty$ then the system satisfies the corresponding soft IQC which is obviously implied by a hard IQC. %In \cite{IQC_Rantzer}, IQCs are originally introduced  in the frequency domain where the multiplier $\Pi$ is represented by a real-rational transfer matrix which is analytic outside of the unit circle. Nonetheless, it is shown in \cite{IQC_Factorization} that for such frequency domain multiplier $\Pi$ there always exists a non-unique factorization $\Pi=\Psi^*M\Psi$ with $M\in\mathbb{R}^{n_r\times n_r}$ and $\Psi$ as in \eqref{IQCFilter}. Thus, the connection between a frequency and a time domain IQC is given by its factorization.\\\indent 
Definition~\ref{IQC} can be illustrated as in Figure~\ref{Fig.IQC}, i.e., signal $r(t)$ corresponds to the filtered input and output of system $H$ by $\Psi$. 
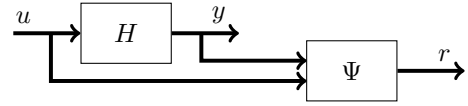
\begin{figure}
	\begin{center}
		\begin{tikzpicture}[scale=0.5]

\path (2,0) node[rectangle,draw,minimum width=12mm, minimum height=8mm](G) {$H$};
\path (G)+(6,-1) node[rectangle,draw,minimum width=12mm, minimum height=8mm](Delta) {$\Psi$};

\path (Delta)+(-0.95,-0.27) node(HP1) {};
\path (Delta)+(-0.95,0.27) node(HP2) {};

\draw[->,line width=1.5pt] (G)+(-3,0) -- (G) node[pos=0.15, above=3pt,inner sep=0pt](HP){$u$};
\draw[->,line width=1.5pt] (Delta) -- +(3,0) node[pos=0.7, above=3pt,inner sep=0pt](HP){$r$};
\draw[->,line width=1.5pt] (G)+(-2,0) |- (HP1);
\draw[->,line width=1.5pt] (G)+(2,0) |- (HP2);
\draw[->,line width=1.5pt] (G) -- +(3,0) node[pos=0.7, above=3pt,inner sep=0pt](HP){$y$};

\end{tikzpicture}
	\end{center}
	\caption{Graphical illustration of an IQC.}
	\label{Fig.IQC}
\end{figure}
The time domain IQC \eqref{IQCIneq} corresponds to a sum quadratic constraint on the filter output $r$. 
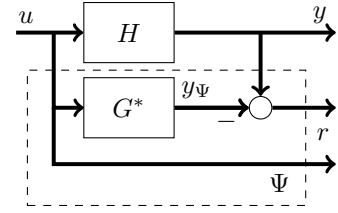
\begin{figure}
	\begin{center}
		\begin{tikzpicture}[scale=0.5]

\path (2,0) node[rectangle,draw,minimum width=12mm, minimum height=8mm](G) {$H$};
\path (G)+(0,-2) node[rectangle,draw,minimum width=12mm, minimum height=8mm](Delta) {$G^*$};
\path (Delta)+(3.5,0) node[circle,draw,inner sep=0pt, minimum size=3mm](rSum){};

\path (Delta)+(-0.95,0) node(HP1) {};

\draw[->,line width=1.5pt] (G)+(-3,0) -- (G) node[pos=0.15, above=3pt,inner sep=0pt](HP){$u$};
\draw[->,line width=1.5pt] (Delta) -- (rSum) node[pos=0.7, below=1pt,inner sep=0pt](HP){$-$};
\draw[->,line width=1.5pt] (Delta) -- (rSum) node[pos=0.3, above=3pt,inner sep=0pt](HP){$y_\Psi$};
\draw[->,line width=1.5pt] (G)+(-2,0) |- (HP1);
\draw[->,line width=1.5pt] (G)+(3.5,0) -- (rSum);
\draw[->,line width=1.5pt] (G) -- +(5.5,0) node[pos=0.9, above=3pt,inner sep=0pt](HP){$y$};
\draw[->,line width=1.5pt] (rSum) --+(2,0) node[pos=0.8, below=7pt,inner sep=0pt](HP){$r$};
\draw[->,line width=1.5pt] (-0,-2) |- +(7.5,-1.5);

\path (G)+(1,-2.8) node[rectangle,draw,minimum width=37mm, minimum height=18mm, dashed](N) {};
\path (G)+(4,-4) node(N1) {$\Psi$};

\end{tikzpicture}
	\end{center}
	\caption{{Illustration of the connection between AE-NLM and IQCs.}}
	\label{Fig:NLMDef_IQC}
\end{figure}
By rearranging the interconnection in Figure~\ref{Fig:NLMDef} to Figure~\ref{Fig:NLMDef_IQC}, it is clear that the calculation of the AE-NLM is equivalent to find a filter
\begin{equation*}
	\Psi:\left\{\begin{aligned} x_\Psi(t+1)&=A_\Psi x_\Psi(t)+B_{\Psi u} u(t),x_\Psi(0)=0\\
	y_\Psi (t)&=C_\Psi x_\Psi(t)+D_{\Psi u}u(t)\\
	r(t)&=\begin{bmatrix}y(t)-y_\Psi (t)\\u(t)
	\end{bmatrix}\end{aligned}\right. 
\end{equation*}
and the minimal $\gamma>0$ that satisfy the time domain IQC \eqref{IQCIneq} with $M=\text{diag}(-\frac{1}{\gamma} I_{n_y},\gamma I_{n_u})$, which corresponds to the supply rate \eqref{SupplyRate2} for $\ell_2$-gains. Thereby, the {AE-NLM} is equal to $\gamma$ and the linear approximation model $G^*$ is the LTI system with system matrices $A_\Psi,B_{\Psi u},C_{\Psi}$, and $D_{\Psi u}$. 
%Therefore, once the AE-NLM is known for a nonlinear system, we can apply IQC synthesis techniques \cite{IQCSyn} to derive a stabilizing output-feedback controller with performance guarantees. However, note that the guarantees  are only valid in the considered operation set. 

\subsection{Problem formulation}\label{ProblemSetup}

In the previous subsection, we supposed that the nonlinear input-output behaviour $H$ is described by the general nonlinear state-space representation \eqref{NLsys}. However, even the computation of the $\ell_2$-gain of a general nonlinear system is computationally challenging. Therefore, we study throughout the paper the nonlinear discrete-time system \eqref{NLsys} with polynomial dynamics
\begin{equation}\label{NLsysPoly}
	f\in\mathbb{R}[x,u]^{n_x}, h\in\mathbb{R}[x,u]^{n_y} 
\end{equation}
and $f(0,0)=0$ and $h(0,0)=0$, i.e., $x=0$ is a stable equilibrium point. This kind of nonlinear systems is computationally appealing as we can determine system theoretic properties by means of SOS optimization where the square matricial representation \cite{SOSDecomp} of SOS matrices is exploited to conclude on the SOS property via the feasibility of LMIs. We also suppose that the system is operated in the invariant set
\begin{equation}\label{Constraints}
	\begin{aligned}
	\mathbb{P}=\{(x,u)\in\mathbb{R}^{n_x}\times\mathbb{R}^{n_u}: p_i(x,u)\leq 0,\  p_i\in\mathbb{R}[x,u],&\\ i=1,\dots,n_P\}&
	\end{aligned}
\end{equation}
with $(0,0)\in\mathbb{P}$.  \\\indent
The goal of this paper is a framework to calculate an upper bound on the AE-NLM and to determine optimal IQCs for polynomial system~\eqref{NLsys} within \eqref{Constraints} by computationally tractable conditions and without identifying an explicit model but from noisy input-state data. While the verification of dissipativity \eqref{dissipativityInqu} for polynomial systems from data is pursued in \cite{MartinDissi}, the computation of NLMs for polynomial systems has not been analyzed yet, even for known systems.\\\indent 
In order to infer on the polynomial system dynamics \eqref{NLsysPoly} from finitely many input-state samples, we assume to known a vector of distinct monomials $z\in\mathbb{R}[x,u]^{n_z}$ with $z(0,0)=0$ that includes at least all monomials of $f$ and $h$. The knowledge on $z$ requires to some extent insight into the system as exemplary an upper bound on the degree of $f$ and $h$. While the coefficients of $f$ are unidentified, the coefficients of $h$ are supposed to be known which is conceivable due to the access of state measurements. Thus, the output $y$ is defined for the sake of characterization of input-output properties. If only input-output data are available, then the presented framework can be applied for the extended state vector with monomials of inputs and outputs of previous time steps which corresponds to a truncated Volterra series and is analogous to the linear case \cite{AnneDissi}. Summarized, the system dynamics~\eqref{NLsys} with \eqref{NLsysPoly} can be represented by
\begin{align*}
	f(x,u)&=F^*z(x,u)\\
	h(x,u)&=H^*z(x,u),
\end{align*}
where $F^*\in\mathbb{R}^{n_x\times n_z}$ contains the true unidentified coefficients whereas $H^*\in\mathbb{R}^{n_y\times n_z}$ is known. {Since $z$ contains linear independent elements, $F^*$ and $H^*$ are unique.}\\\indent  
To conclude on the unknown matrix $F^*$, we assume the access to input-state data in the presence of noise, i.e.,
\begin{equation}\label{DataSet}
	\{(\tilde{x}_i^+,\tilde{x}_i,\tilde{u}_i)_{i=1,\dots,S}\}
\end{equation}
with $\tilde{x}_i^+=f(\tilde{x}_i,\tilde{u}_i)+\tilde{d}_i$ and {unknown} perturbation $\tilde{d}_i$. Since we examine the NLM of the unperturbed system dynamics and we suppose that the state measurements are affected by additive noise, i.e., $\tilde{x}_i=x_i+d_i$ and $\tilde{x}_i^+=x_i^++d_i^+$ with measurement noise $d_i$ and $d_i^+$, respectively, and the true states $x_i$ and $x_i^+=f(x_i,\tilde{u}_i)$, respectively, {it} holds $\tilde{d}_i=d_i^++f(x_i,\tilde{u}_i)-f(x_i+d_i,\tilde{u}_i)$. Thus, $\tilde{d}_i$ summarizes the additive noise $d_i^+$ and, analogously to \cite{Milanese}, the error when applying the dynamics at the uncertain state $\tilde{x}_i$ instead of the true state $x_i$. Analogously, we can proceed for perturbed inputs $\tilde{u}_i$. Furthermore, if the underlying system~\eqref{NLsys} is influenced by additive process noise then this also has to be considered in the examination of the AE-NLM which would be conceivable as we apply techniques from robust control. However, this will not be within the scope of this paper. \\\indent
In order to conclude on the unidentified parameters $F^*$, we additionally assume that $\tilde{d}_i,i=1,\dots,S,$ are bounded explicitly in each time step as in \cite{MartinDissi}. 
\begin{assu}[Pointwise bounded noise]\label{Noise1}
For the measured data \eqref{DataSet}, suppose for $i=1,\dots,S$ that $\tilde{d}_i\in\mathcal{D}_i$ for compact sets
\begin{equation}\label{SepPrimnoise}
	\mathcal{D}_i=\left\{d\in\mathbb{R}^{n_x}:\begin{bmatrix}1\\d\end{bmatrix}^T\varDelta_{i}\begin{bmatrix}1\\d\end{bmatrix}\leq0 \right\}
\end{equation} 
with invertible matrix $\varDelta_{i}=\begin{bmatrix}\varDelta_{1,i} & \varDelta_{2,i}\\ \varDelta_{2,i}^T & \varDelta_{3,i}\end{bmatrix}$ and $\varDelta_{3,i}\succ0$.
\end{assu}\vspace{0.1cm}

This characterization incorporates disturbances with bounded amplitude $\tilde{d}_i^T\tilde{d}_i-\epsilon^2\leq0$ and {disturbances} that exhibit a fixed signal-to-noise-ratio $\tilde{d}_i^T\tilde{d}_i-\tilde{\epsilon}^2\tilde{x}_i^T\tilde{x}_i\leq0$. Note that deterministic disturbance descriptions are not only frequently supposed in data-driven control \cite{Groningen} and system analysis \cite{AnneDissi} but also in set-membership identification \cite{Milanese}, adaptive control \cite{AdaptiveC}, and robust model predictive control \cite{RMPC}, which are all also successfully applied in practice. If the disturbance is, e.g., Gaussian distributed, then we can still use a bound \eqref{SepPrimnoise} with a certain confidence. %Thus, the choice of a deterministic bound is similar to the choice of, e.g., the variance of a normally distributed noise.} 

\section{Data-based set-membership for unidentified coefficient matrices}\label{Supersets}

This section presents a set-membership for $F^*$ by all coefficients matrices that explain the data \eqref{DataSet} for pointwise bounded noise \eqref{SepPrimnoise} which is the basis to determine system properties without identifying an explicit model in Section~\ref{SecGeneralSOS}. A detailed investigation of the accuracy and asymptotic consistency of this set-membership as well as a comparison to the set-membership in \cite{Groningen} are provided in Section~\ref{DataPre2}.\\\indent
At first, we specify analogously to \cite{MartinDissi} the set of all systems
\begin{align}\label{Systemdescribtion}
	x(t+1)=Fz(x(t),u(t))
\end{align}
with coefficients $F\in\mathbb{R}^{n_x\times n_z}$ explaining the data \eqref{DataSet}. 
\begin{defn}[Feasible system set]\label{DefFSS}
The set of all systems \eqref{Systemdescribtion} admissible with the measured data \eqref{DataSet} for pointwise bounded noise \eqref{SepPrimnoise} is given by the feasible system set $\text{FSS}=\{Fz\in\mathbb{R}[x,u]^{n_x}:F\in\Sigma	\}$ with $\Sigma= \{F\in\mathbb{R}^{n_x\times n_z}:\exists \tilde{d}_i\in\mathcal{D}_i\text{\ satisfying\ } \tilde{x}_i^+=Fz(\tilde{x}_i,\tilde{u}_i)+\tilde{d}_i,i=1,\dots,S\}.$
\end{defn}

The feasible system set $\text{FSS}$ is a set-membership representation of the dynamics of the ground-truth system~\eqref{NLsys} with \eqref{NLsysPoly} as $f$ is an element of $\text{FSS}$. Indeed, the samples \eqref{DataSet} suffice $\tilde{x}_i^+=f(\tilde{x}_i,\tilde{u}_i)+\tilde{d}_i$ with $\tilde{d}_i\in\mathcal{D}_i$, and thereby $f\in\text{FSS}$ and $F^*\in\Sigma$. To apply robust control techniques to infer on system properties in the subsequent sections, we require a characterization of the set of admissible coefficients $\Sigma$ of the form  
\begin{align}\label{FSSgen}
	\Sigma_F=\left\{F:\begin{bmatrix}I_{n_z}\\F\end{bmatrix}^T\varDelta_{*i}\begin{bmatrix}I_{n_z}\\F\end{bmatrix}\preceq0,i=1,\dots,n_S\right\},
\end{align}
where the calculation of $\varDelta_{*i}\in\mathbb{R}^{(n_z+n_x)\times (n_z+n_x)}$ from data is shown in the remaining of this section.\\\indent 
We start with an equivalent data-based representation of $\Sigma$ depending on $F^T$.
\begin{lem}[Dual characterization of $\Sigma$]\label{DualSigma}
$\Sigma$ is equivalent to
\begin{align}\label{Sigma_dual}
	\left\{F:\begin{bmatrix}F^T\\I_{n_x}\end{bmatrix}^T\Delta_{i}\begin{bmatrix}F^T\\I_{n_x}\end{bmatrix}\preceq0,i=1,\dots,S\right\}
\end{align}
with the data-depende{nt} matrices
\begin{align*}
	&\Delta_{i}=\\
	&\begin{bmatrix}-\tilde{z}_i\Delta_{1,i}\tilde{z}_i^T & \tilde{z}_i(\Delta_{1,i}\tilde{x}_i^{+^T}-\Delta_{2,i})\\(\tilde{x}_i^+\Delta_{1,i}-\Delta_{2,i}^T)\tilde{z}_i^T & \begin{bmatrix}\tilde{x}_i^{+^T}\\I_{n_x}\end{bmatrix}^T\begin{bmatrix}-\Delta_{1,i} & \Delta_{2,i}\\ \Delta_{2,i}^T & -\Delta_{3,i}\end{bmatrix}\begin{bmatrix}\tilde{x}_i^{+^T}\\I_{n_x}\end{bmatrix} \end{bmatrix},
\end{align*}
$\tilde{z}_i=z(\tilde{x}_i,\tilde{u}_i)$, and $\begin{bmatrix}\Delta_{1,i} & \Delta_{2,i}\\ \Delta_{2,i}^T & \Delta_{3,i}\end{bmatrix}=\varDelta_{i}^{-1}$.
\end{lem}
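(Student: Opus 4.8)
The plan is to handle the $S$ samples one at a time and, for each, rewrite the membership condition defining $\Sigma$ as the claimed matrix inequality by a dualization argument. The first, elementary observation is that the existential quantifier in Definition~\ref{DefFSS} is harmless: for a fixed $F$ the relation $\tilde{x}_i^+=F\tilde{z}_i+\tilde{d}_i$ with $\tilde{z}_i:=z(\tilde{x}_i,\tilde{u}_i)$ determines $\tilde{d}_i$ uniquely as $\tilde{d}_i=\tilde{x}_i^+-F\tilde{z}_i=:e_i$, so $F\in\Sigma$ if and only if $e_i\in\mathcal{D}_i$ for $i=1,\dots,S$, i.e.
\begin{equation*}
\begin{bmatrix}1\\ e_i\end{bmatrix}^T\varDelta_i\begin{bmatrix}1\\ e_i\end{bmatrix}\le0,\qquad i=1,\dots,S .
\end{equation*}
It thus remains to show that, for each $i$, this scalar inequality is equivalent to $\begin{bmatrix}F^T\\ I_{n_x}\end{bmatrix}^T\Delta_i\begin{bmatrix}F^T\\ I_{n_x}\end{bmatrix}\preceq0$.

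The next step is the algebraic identity
\begin{equation*}
\begin{bmatrix}F^T\\ I_{n_x}\end{bmatrix}^T\Delta_i\begin{bmatrix}F^T\\ I_{n_x}\end{bmatrix}
=-\begin{bmatrix}-e_i^T\\ I_{n_x}\end{bmatrix}^T\varDelta_i^{-1}\begin{bmatrix}-e_i^T\\ I_{n_x}\end{bmatrix},
\end{equation*}
which I would verify by expanding the left-hand side into the four blocks of $\Delta_i$ as given in the statement, substituting $w:=F\tilde{z}_i$ (so $e_i=\tilde{x}_i^+-w$), and collecting terms: the coefficients of $\Delta_{1,i}$ sum to $-\Delta_{1,i}(w-\tilde{x}_i^+)(w-\tilde{x}_i^+)^T$, the coefficients of $\Delta_{2,i}$ sum to $e_i\Delta_{2,i}+\Delta_{2,i}^Te_i^T$, and the term $-\Delta_{3,i}$ is carried over, producing exactly $-\bigl(\Delta_{1,i}e_ie_i^T-e_i\Delta_{2,i}-\Delta_{2,i}^Te_i^T+\Delta_{3,i}\bigr)$, the right-hand side. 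The reason this is a genuine dualization is that $\begin{bmatrix}1 & e_i^T\end{bmatrix}\begin{bmatrix}-e_i^T\\ I_{n_x}\end{bmatrix}=0$, so the columns of $\begin{bmatrix}-e_i^T\\ I_{n_x}\end{bmatrix}$ span the orthogonal complement of $\mathrm{span}\begin{bmatrix}1\\ e_i\end{bmatrix}$ in $\mathbb{R}^{1+n_x}$.

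It then suffices to prove the following linear-algebra fact and apply it with $P=\varDelta_i$, $e=e_i$: if $P=P^T\in\mathbb{R}^{(1+n_x)\times(1+n_x)}$ is invertible with lower-right block $P_{22}\succ0$, and the set $\{d:\begin{bmatrix}1\\ d\end{bmatrix}^TP\begin{bmatrix}1\\ d\end{bmatrix}\le0\}$ is nonempty, then for every $e\in\mathbb{R}^{n_x}$ one has $\begin{bmatrix}1\\ e\end{bmatrix}^TP\begin{bmatrix}1\\ e\end{bmatrix}\le0$ if and only if $\begin{bmatrix}-e^T\\ I_{n_x}\end{bmatrix}^TP^{-1}\begin{bmatrix}-e^T\\ I_{n_x}\end{bmatrix}\succeq0$. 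To see this, complete the square to get $\begin{bmatrix}1\\ e\end{bmatrix}^TP\begin{bmatrix}1\\ e\end{bmatrix}=(e+a)^TP_{22}(e+a)+s$, with $a:=P_{22}^{-1}P_{12}^T$ and $s:=P_{11}-P_{12}P_{22}^{-1}P_{12}^T$ the Schur complement of $P_{22}$ in $P$; nonemptiness forces $s\le0$ and invertibility of $P$ forces $s\neq0$, hence $s<0$. The block-inverse formula for $P^{-1}$ then gives $\begin{bmatrix}-e^T\\ I_{n_x}\end{bmatrix}^TP^{-1}\begin{bmatrix}-e^T\\ I_{n_x}\end{bmatrix}=P_{22}^{-1}+\tfrac1s(e+a)(e+a)^T$, and a Schur-complement argument with $P_{22}^{-1}\succ0$ and $s<0$ shows this is positive semidefinite exactly when $(e+a)^TP_{22}(e+a)\le-s$, i.e.\ exactly when the scalar inequality holds. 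Since Assumption~\ref{Noise1} guarantees that $\varDelta_i$ is invertible, $\varDelta_{3,i}\succ0$, and $\tilde{d}_i\in\mathcal{D}_i$ so $\mathcal{D}_i\neq\emptyset$, chaining the three equivalences shows that $\Sigma$ coincides with the set in \eqref{Sigma_dual}.

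The main obstacle is this last linear-algebra step together with the block bookkeeping of the dualization identity: one must track the sign of the Schur complement $s$ carefully and apply the block-matrix inversion formula without error. The nonemptiness hypothesis is essential as well — if $\varDelta_i\succ0$, so $\mathcal{D}_i=\emptyset$, the matrix inequality in \eqref{Sigma_dual} is satisfied vacuously while $\Sigma$ is empty — but it is available here since Assumption~\ref{Noise1} postulates $\tilde{d}_i\in\mathcal{D}_i$.
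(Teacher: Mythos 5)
Your proof is correct, and it follows the same overall strategy as the paper: dualize the pointwise noise bound and then substitute the residual $\tilde{d}_i=\tilde{x}_i^+-F\tilde{z}_i$ into the dual form. The difference is that the paper disposes of the dualization step by citing the dualization lemma of Scherer--Weiland as a black box (obtaining the dual noise description $\begin{bmatrix}d^T\\I_{n_x}\end{bmatrix}^T\begin{bmatrix}-\Delta_{1,i}&\Delta_{2,i}\\\Delta_{2,i}^T&-\Delta_{3,i}\end{bmatrix}\begin{bmatrix}d^T\\I_{n_x}\end{bmatrix}\preceq0$, which is exactly your $\begin{bmatrix}-e^T\\I_{n_x}\end{bmatrix}^T\varDelta_i^{-1}\begin{bmatrix}-e^T\\I_{n_x}\end{bmatrix}\succeq0$ up to the sign-flipping congruence with $\mathrm{diag}(-1,I_{n_x})$), whereas you prove that equivalence from scratch via completing the square, the block-inverse formula, and a rank-one Schur-complement argument. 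Your route buys two things the paper leaves implicit: it covers the non-strict inequalities directly (the textbook dualization lemma is usually stated with strict definiteness/inertia hypotheses), and it isolates exactly where the nonemptiness of $\mathcal{D}_i$ enters (forcing the Schur complement $s$ to be negative rather than positive) --- a hypothesis that is genuinely needed and is supplied by Assumption~\ref{Noise1}. Your preliminary observation that the existential quantifier in Definition~\ref{DefFSS} is vacuous because $\tilde{d}_i$ is uniquely determined by $F$ is also a point the paper's one-line "combining" step glosses over. All block computations check out.
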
\vspace{0.1cm}
\begin{proof}
By the dualization lemma~\cite{SchererLMI}, the noise bounds from \eqref{SepPrimnoise} are equivalent to the dual form 	
\begin{align}\label{DualNoise}
	\begin{bmatrix}d^T\\I_{n_x}\end{bmatrix}^T\begin{bmatrix}-\Delta_{1,i} & \Delta_{2,i}\\ \Delta_{2,i}^T & -\Delta_{3,i}\end{bmatrix}\begin{bmatrix}d^T\\I_{n_x}\end{bmatrix}\preceq0, \Delta_{1,i}<0
\end{align}
where $\varDelta_{i}^{-1}$ exists by Assumption~\ref{Noise1}. Combining the dual version \eqref{DualNoise} of the noise bound, data samples~\eqref{DataSet}, and the system dynamics \eqref{Systemdescribtion} yields the dual representation \eqref{Sigma_dual}.
\end{proof}

To derive \eqref{FSSgen} from \eqref{Sigma_dual}, the dualization lemma can not be employed on the dual representation \eqref{Sigma_dual} as the invertibility of  $\Delta_i\in\mathbb{R}^{(x_x+n_z)\times(x_x+n_z)}$ is violated because its left upper block is rank one, and hence it is not full column rank for $n_z\geq n_x+1$. To attain nevertheless a form as \eqref{FSSgen}, we suggest to first calculate an ellipsoidal outer approximation of \eqref{Sigma_dual} as in \cite{BoydLMI} and then to dualize.

\begin{lrop}[Pointwise superset of $\Sigma$]\label{Primal_elli}
Let $\tilde{Z}=\begin{bmatrix}z(\tilde{x}_1,\tilde{u}_1) & \cdots & z(\tilde{x}_S,\tilde{u}_S)\end{bmatrix}$ be full row rank. Then there exist a positive definite matrix $\Delta_{1\text{p}}\in\mathbb{R}^{n_z\times n_z}$, matrix $\Delta_{2\text{p}}\in\mathbb{R}^{n_z\times n_x}$, and scalars $\alpha_1,\dots,\alpha_S\geq0$ solving\vspace{-0.1cm}	
\begin{align}\label{Elli_LMI}
	\begin{bmatrix}\Delta_{1\text{p}} & \Delta_{2\text{p}} & 0\\ \Delta_{2\text{p}}^T & -I_{n_x} & \Delta_{2\text{p}}^T\\ 0 & \Delta_{2\text{p}} & -\Delta_{1\text{p}} \end{bmatrix}-\sum_{i=1}^{S}\alpha_i\begin{bmatrix}\Delta_i &  0\\  0 & 0\end{bmatrix}\preceq0.
\end{align}	
Moreover, then the set of feasible coefficients $\Sigma$ is a subset of 
\begin{align}\label{Sigma_p}
	\Sigma_{\text{p}}=\left\{F\in\mathbb{R}^{n_x\times n_z}:\begin{bmatrix}I_{n_z}\\F\end{bmatrix}^T\varDelta_{\text{p}}\begin{bmatrix}I_{n_z}\\F\end{bmatrix}\preceq0\right\}
\end{align}
with $\varDelta_{\text{p}}=\begin{bmatrix}-\varDelta_{1\text{p}} & \varDelta_{2\text{p}}\\ \varDelta_{2\text{p}}^T & -\varDelta_{3\text{p}}\end{bmatrix}$, $\begin{bmatrix}\varDelta_{1\text{p}} & \varDelta_{2\text{p}}\\ \varDelta_{2\text{p}}^T & \varDelta_{3\text{p}}\end{bmatrix}=\Delta_\text{p}^{-1}$, and $\Delta_\text{p}=\begin{bmatrix}\Delta_{1\text{p}} & \Delta_{2\text{p}}\\ \Delta_{2\text{p}}^T & \Delta_{2\text{p}}^T\Delta_{1\text{p}}^{-1}\Delta_{2\text{p}}-I_{n_x}\end{bmatrix}$.
\end{lrop}\vspace{0.1cm}
\begin{proof}
First, we show that LMI $\eqref{Elli_LMI}$ has a solution if $\tilde{Z}$ is full row rank by extending Lemma 2 of \cite{Ellipsoid_DePersis2} to general quadratic noise characterizations. To this end, we introduce the abbreviation for the block matrices of $\Delta_i=\begin{bmatrix}\Gamma_{1,i} & \Gamma_{2,i}\\ \Gamma_{2,i}^T & \Gamma_{3,i}\end{bmatrix}$. Moreover, let $\alpha\geq0$ be a to-be-optimized scalar and set $\alpha_i=-\frac{\alpha}{\Delta_{1,i}}$, $\Delta_{1\text{p}}=\alpha \tilde{Z}\tilde{Z}^T$, and $\Delta_{2\text{p}}=- \sum_{i=1}^{S}\frac{\alpha}{\Delta_{1,i}}\Gamma_{2,i}$. Note that this choice is valid as $\Delta_{1,i}<0$ by \eqref{DualNoise} and $\Delta_{1\text{p}}\succ0$ by the full row rank of $\tilde{Z}$. By $\tilde{Z}\tilde{Z}^T=\sum_{i=1}^{S} z(\tilde{x}_i)z(\tilde{x}_i)^T$ together with the choice of $\Delta_{1\text{p}}$ and $\Delta_{2\text{p}}$, the first block row and first block column of \eqref{Elli_LMI} are zero, and thus \eqref{Elli_LMI} is satisfied if  
\begin{align}\label{LMI2}
	\begin{bmatrix}-I_{n_x}+\sum_{i=1}^{S}\frac{\alpha}{\Delta_{1,i}}\Gamma_{3,i} & -\sum_{i=1}^{S}\frac{\alpha}{\Delta_{1,i}}\Gamma_{2,i}^T\\ -\sum_{i=1}^{S}\frac{\alpha}{\Delta_{1,i}}\Gamma_{2,i} & -\alpha \tilde{Z}\tilde{Z}^T \end{bmatrix}\preceq0.
\end{align}
The full row rank of $\tilde{Z}$ implies $\tilde{Z}\tilde{Z}^T\succ0$, and hence \eqref{LMI2} is satisfied if $I_{n_x}-\alpha\sum_{i=1}^{S}\frac{1}{\Delta_{1,i}}\Gamma_{3,i}-\alpha(\sum_{i=1}^{S}\frac{1}{\Delta_{1,i}}\Gamma_{2,i}^T)(\tilde{Z}\tilde{Z}^T)^{-1}(\sum_{i=1}^{S}\frac{1}{\Delta_{1,i}}\Gamma_{2,i})\succeq0$ by the Schur complement. Finally, this holds if $\alpha>0$ is chosen small enough.\\
Next, we show that $\Sigma\subseteq{\Sigma}_\text{p}$ by adapting \cite{BoydLMI} (Chapter 3.7.2) for matrix ellipsoidal outer approximation. Since we can find a $\Delta_{1\text{p}}\succ0$ solving \eqref{Elli_LMI}, the Schur complement yields for \eqref{Elli_LMI} the equivalent condition 
\begin{align*}
	\begin{bmatrix}\Delta_{1\text{p}} & \Delta_{2\text{p}}\\ \Delta_{2\text{p}}^T & \Delta_{2\text{p}}^T\Delta_{1\text{p}}^{-1}\Delta_{2\text{p}}-I_{n_x}\end{bmatrix}-\sum_{i=1}^{S}\alpha_i\Delta_{i}\preceq0.
\end{align*}
Multiplying this inequality by the matrix $\begin{bmatrix}F^T\\I_{n_x}\end{bmatrix}$ from the right-hand side and its transpose from the left-hand side and applying the S-procedure yield that
\begin{align}\label{DualElli}
	\begin{bmatrix}F^T\\I_{n_x}\end{bmatrix}^T\Delta_\text{p}\begin{bmatrix}F^T\\I_{n_x}\end{bmatrix}\preceq0	
\end{align}
holds for all $F\in\mathbb{R}^{n_x\times n_z}$ satisfying $\begin{bmatrix}F^T\\I_{n_x}\end{bmatrix}^T\Delta_{i}\begin{bmatrix}F^T\\I_{n_x}\end{bmatrix}\preceq0,i=1,\dots,S,$ where latter is equivalent to $\Sigma$ by Lemma~\ref{DualSigma}. Applying the dualization lemma on \eqref{DualElli} yields \eqref{Sigma_p}, and thus $\Sigma\subseteq{\Sigma}_\text{p}$.\\
It remains to show the invertibility of $\Delta_\text{p}$. For that purpose, suppose $\Delta_\text{p}$ is not full rank, then there exists a vector $r=\begin{bmatrix}r_1^T& r_2^T\end{bmatrix}^T\neq0$ such that
\begin{align*}
	\Delta_p\begin{bmatrix}r_1\\ r_2\end{bmatrix}=\begin{bmatrix}\Delta_{1\text{p}}r_1+\Delta_{2\text{p}}r_2\\ \Delta_{2\text{p}}^T(r_1+\Delta_{1\text{p}}^{-1}\Delta_{2\text{p}}r_2)-r_2\end{bmatrix}=0.
\end{align*}
Thus, $\Delta_{2\text{p}}^T(r_1-\Delta_{1\text{p}}^{-1}\Delta_{1\text{p}}r_1)-r_2=-r_2=0$. Together with $\Delta_{1\text{p}}\succ0$, the first equation implies $r_1=0$, and hence $r=0$. Due to the contradiction with $r\neq0$, $\Delta_\text{p}$ is full rank.	
\end{proof}

Using the S-procedure in the proof yields a sufficient but not necessary condition. Hence, \eqref{DualElli} is a superset of \eqref{Sigma_dual} and $\Sigma_\text{p}$ is not a tight characterization of $\Sigma$. Moreover, we assess full row rank of $\tilde{Z}$ to be not restrictive as it can be achieved by increasing the number of columns by means of additional samples and the rank condition can easily be checked from data. This rank condition is also not surprising as it corresponds to a persistence of excitation condition \cite{Ellipsoid_DePersis2}.\\\indent
Geometrically, we compute in Proposition~\ref{Primal_elli} an ellipsoidal outer approximation \eqref{DualElli} of the intersection of quadratic matrix inequalities \eqref{Sigma_dual}, where each describes the unbounded space between two parallel hyperplanes. Under the full row rank of $\tilde{Z}$, this intersection, i.e., $\Sigma$, is bounded. According to \cite{BoydLMI}, we can derive the outer approximating ellipsoid with minimal volume by minimizing over the convex function $\log(\det(\Delta_{1\text{p}}^{-1}))$ or with minimal diameter by maximizing over $\kappa>0$ with $\Delta_{1\text{p}}\succeq\kappa I_{n_z}$ using {SDP}. By this procedure, the volume or diameter, respectively, of $\Sigma_\text{p}$ is monotonically decreasing if the data set \eqref{DataSet} is extended by additional samples. \\\indent
In Section~\ref{DataPre2}, we provide two additional supersets of $\Sigma$, which are more conservative than $\Sigma_\text{p}$ but do not call for solving an LMI, and thus are interesting if a large amount of samples are available. Moreover, Section~\ref{DataPre2} shows the asymptotic consistency of all three supersets. Note that the results from Section~\ref{DataPre2} are not required for the data-driven determination of system properties in Section~\ref{SecGeneralSOS} and \ref{Ext}.

\section{Data-driven inference on AE-NLM}\label{SecGeneralSOS}

In this section, we treat the derivation of an SDP to calculate from the data-based superset $\Sigma_\text{p}$ a guaranteed upper bound on the AE-NLM and the \textquoteleft optimal\textquoteright\ linear approximation of the unidentified system \eqref{NLsys} with \eqref{NLsysPoly}. This framework will then be extended in Section~\ref{Ext} to deduce analogously SDPs to determine optimal IQCs.\\\indent 
Consider the problem setup in Section~\ref{ProblemSetup} and a data-driven inference on the unidentified coefficients $F^*$ of the form \eqref{FSSgen} where $\Sigma_F$ is interchangeable by the superset $\Sigma_\text{p}$ or the supersets $\Sigma_\text{w}$ and $\Sigma_\text{c}$ defined in Section~\ref{DataPre2}. For the computation of the AE-NLM, let the set of stable linear systems $\mathcal{G}$ be described by LTI systems
\begin{equation}\label{LinearSystem}
	G:\left\{\begin{aligned} x_\Psi(t+1)&=A_\Psi x_\Psi(t)+B_{\Psi} u(t),x_\Psi(0)=0\\
	r(t)&=C_\Psi x_\Psi(t)+D_{\Psi}u(t)\end{aligned}\right. 
\end{equation}
with $A_\Psi\in\mathbb{R}^{n_x\times n_x}$, $B_{\Psi}\in\mathbb{R}^{n_x\times n_u}$, $C_\Psi\in\mathbb{R}^{n_y \times n_x}$, and $D_{\Psi}\in\mathbb{R}^{n_y \times n_u}$. Since $G$ will be designed such that the interconnection in Figure~\ref{Fig:NLMDef} is $\ell_2$-gain stable with stable nonlinear system $H$, $A_\Psi$ will implicitly be Schur. \\\indent
The key idea to determine input-output properties from the set-membership representation $\Sigma_F$ of the true unidentified coefficients $F^*$, i.e. $F^*\in\Sigma_F$, relies on the fact that the ground-truth system \eqref{NLsys} with \eqref{NLsysPoly} exhibits a certain input-output property if all systems of the feasible system set $\text{FSS}=\{Fz\in\mathbb{R}[x,u]^{n_x}:F\in\Sigma_F\}$ exhibit this input-output property. Therefore, we can provide a data-based criterion to verify the AE-NLM with a given linear surrogate model for the polynomial system.
\begin{lem}[Data-driven verification of AE-NLM]\label{ThmAnaNLM}
Let the data samples \eqref{DataSet} satisfy Assumption~\ref{Noise1} and let a scalar $\Phi>0$ and a stable LTI system \eqref{LinearSystem} be given. Then the AE-NLM of the polynomial system \eqref{NLsys} with \eqref{NLsysPoly} within the operation set \eqref{Constraints} is upper bounded by $\Phi$ if there exist a matrix $\mathcal{X}\succ0$, non-negative scalars $\tau_{\Sigma1},\dots,\tau_{\Sigma n_S}$, and polynomials $t_i\in\text{SOS}[x,u],i=1,\dots,n_P$ such that $\psi\in\text{SOS}[x,x_\Psi,u,\text{vec}(F)]$ with 
\begin{equation}\label{DissiAENLM}
	\begin{aligned}
	\psi=&\begin{bmatrix}x\\x_\Psi\end{bmatrix}^T\mathcal{X}\begin{bmatrix}x\\x_\Psi\end{bmatrix}-\begin{bmatrix}Fz\\A_\Psi x_\Psi+B_\Psi u\end{bmatrix}^T\mathcal{X}\begin{bmatrix}Fz\\A_\Psi x_\Psi+B_\Psi u\end{bmatrix}\\
	&{+}\Phi u^Tu-\frac{1}{\Phi}e^Te+\sum_{i=1}^{n_S}\tau_{\Sigma i}\begin{bmatrix}z\\Fz\end{bmatrix}^T\varDelta_{*i}\begin{bmatrix}z\\Fz\end{bmatrix}+\sum_{i=1}^{n_P}p_it_i
	\end{aligned}
\end{equation}	
and $e(x,x_\Psi,u)=H^*z(x,u)-C_\Psi x_\Psi-D_\Psi u$.
\end{lem}
\begin{proof}
Consider the interconnection of error system $\Delta=H-G$ in Figure~\ref{Fig:NLMDef} with state-space representation
\begin{equation}\label{ErrorSS}
\begin{aligned}
	\begin{bmatrix}x(t+1)\\x_\Psi(t+1)\end{bmatrix}&=\begin{bmatrix}F^*z(x(t),u(t))\\A_\Psi x_\Psi(t)+B_\Psi u(t)\end{bmatrix},\begin{bmatrix}x(0)\\x_{\Psi}(0)\end{bmatrix}=0\\
	e(t)&=H^*z(x(t),u(t))-C_\Psi x_\Psi(t)-D_\Psi u(t)
\end{aligned}.
\end{equation}
Since the AE-NLM of $H$ is equal to the $\ell_2$-gain of $\Delta:u\mapsto e$, $\Phi$ is an upper bound of the AE-NLM by Proposition~\ref{GainSys} if \eqref{ErrorSS} is dissipative on $(x,u,x_\Psi)\in\mathbb{P}\times\mathbb{R}^{n_x}$ with respect to the supply rate $s(e,u)=\Phi ||u||_2^2-\frac{1}{\Phi}||e||_2^2$. By Definition~\ref{DissiDef}, this holds true if there exists a storage function $\lambda(x,x_\psi)=\begin{bmatrix}x^T & x_\Psi^T\end{bmatrix}\mathcal{X}\begin{bmatrix}x^T&x_\Psi^T\end{bmatrix}^T$ such that for all $(x,u,x_\Psi)\in\mathbb{P}\times\mathbb{R}^{n_x}$
\begin{equation}\label{Inequ}
	0\leq s(e,u)+\begin{bmatrix}x\\x_\Psi\end{bmatrix}^T\mathcal{X}\begin{bmatrix}x\\x_\Psi\end{bmatrix}-\star^T\mathcal{X}\begin{bmatrix}F^*z\\A_\Psi x_\Psi+B_\Psi u\end{bmatrix},
\end{equation}
where $\star$ is a placeholder for the matrix on the right. Since the true coefficient matrix $F^*$ is unknown but $F^*\in\Sigma_F$, we require that \eqref{Inequ} holds for all $F\in\Sigma_F$. Therefore, we require the generalized S-procedure for polynomials which follows from the Positivstellensatz \cite{ProofProp} (Lemma 2.1): a polynomial $q\in\mathbb{R}[v]$ is non-negative on $\{v\in\mathbb{R}^{n_v}:c_1(v)\leq0,\dots,c_k(v)\leq0\}$ if there exist polynomials $q_i\in\text{SOS}[v],i=1,\dots,k,$ such that $q(v)+\sum_{i=1}^{k}q_i(v)c_i(v)\geq0,\forall v\in\mathbb{R}^{n_v}$. Together with $F\in\Sigma_F$ implying that for all $(x,u)\in\mathbb{R}^{n_x}\times\mathbb{R}^{n_u}$ and $i=1,\dots,n_S$
\begin{equation*}
	z(x,u)^T\begin{bmatrix}I_{n_z}\\F\end{bmatrix}^T\varDelta_{*i}\begin{bmatrix}I_{n_z}\\F\end{bmatrix}z(x,u)\leq0,
\end{equation*}
we conclude that the dissipativity criterion \eqref{Inequ} holds for all $F\in\Sigma_F$ if there exist a $\mathcal{X}\succ0$, scalars $\tau_{\Sigma1},\dots,\tau_{\Sigma n_S}\geq0$, and polynomials $t_i\in\text{SOS}[x,u],i=1,\dots,n_P,$ such that $\psi(x,x_\Psi,u,\text{vec}(F))\geq0$ for all $(x,x_\Psi,u,\text{vec}(F))\in\mathbb{R}^{n_x}\times\mathbb{R}^{n_x}\times\mathbb{R}^{n_u}\times\mathbb{R}^{n_xn_z}$ which is implied by 
$\psi\in\text{SOS}[x,x_\Psi,u,\text{vec}(F)]$ due to the relaxation that any SOS polynomial is non-negative. 
\end{proof}

Lemma~\ref{ThmAnaNLM} constitutes a computationally tractable SOS condition to verify an upper bound of the AE-NLM from noisy input-state data if a linear approximation model is given. In fact, $\psi$ is a polynomial in $\mathbb{R}[x,x_\Psi,u,\text{vec}(F)]$ and linear in the optimization variables $\mathcal{X}, \tau_{\Sigma i},$ and $\tau_i$, and hence we can check by standard SOS solvers \cite{YALMIP} whether $\psi$ is an SOS polynomial. Furthermore, Lemma~\ref{ThmAnaNLM} is a special case of Theorem 1 in \cite{MartinDissi} where the data-based dissipativity verification for polynomial systems regarding polynomial supply rates is investigated using polynomial storage functions and noise specifications. Here, we require quadratic storage functions and non-positive scalars $\tau_{\Sigma i}$ instead of SOS polynomials in order to attain LMIs in the following.\\\indent
%In the proof of Lemma~\ref{ThmAnaNLM}, we apply an S-procedure which is a well-known technique in robust control in case of quadratic conditions and constraints. Here, we require a similar sufficient condition but we want to verify whether a polynomial $p\in\mathbb{R}[x]$ is non-negative on a set specified by polynomial inequalities as $\{x\in\mathbb{R}^n:c_1(x)\geq0,\dots,c_k(x)\geq0\}$. We boil down this problem to an LMI feasibility condition using a generalized S-procedure for polynomials, which can be deduced from the Positivstellensatz \cite{ProofProp} (Lemma 2.1), and the SOS relaxation that any SOS matrix is positive definite. Both is summarized in \cite{MartinDissi} (Proposition 1).\\\indent
Since the linear approximation model, defined by $A_\Psi, B_\Psi, C_\Psi,$ and $D_\Psi$ in Lemma~\ref{ThmAnaNLM}, is usually not available and appears non-convex in \eqref{DissiAENLM}, we deduce an equivalent condition to \eqref{DissiAENLM} which is linear in the to-be-optimized variables. 
\begin{thm}[Data-driven inference on AE-NLM]\label{SynAENLM}
Suppose the data samples \eqref{DataSet} suffice Assumption~\ref{Noise1} and the vector $z$ contains $x$ and $u$, i.e., there exist matrices $T_x\in\mathbb{R}^{n_x\times n_z}$ and $T_u\in\mathbb{R}^{n_u\times n_z}$ with $x=T_x z$ and $u=T_u z$, respectively. If there exist matrices $X,Y^{-1}\succ0$, non-negative scalars $\tau_{\Sigma 1},\dots, \tau_{\Sigma n_S}$, and $\tau_x$, $\Phi>0$, matrices $\tilde{K}\in\mathbb{R}^{n_x\times n_x},L\in\mathbb{R}^{n_x\times n_u},\tilde{M}\in\mathbb{R}^{n_y\times n_x},N\in\mathbb{R}^{n_y\times n_u}$, and polynomials $z_i\tau_i\in\text{SOS}[x,u],i=1,\dots,n_P$, with a vector of monomials $z_i\in\mathbb{R}[x,u]^{1\times\beta}$, to-be-optimized coefficients $\tau_i\in\mathbb{R}^\beta$, and a linear mapping $P_i:\mathbb{R}^\beta\rightarrow\mathbb{R}^{n_z \times n_z}$ with
\begin{align}\label{QuadraticDecomp}
	z_i\tau_ip_i = z^TP_i(\tau_i)z,
\end{align}	
satisfying 
\begin{equation}\label{TransCond1}
	\mathcal{Y}^T\mathcal{X}\mathcal{Y}:=\begin{bmatrix}Y^{-1}&Y^{-1}\\Y^{-1}&X\end{bmatrix}\succ0
\end{equation}
and \eqref{LMICond} with 
\begin{equation*}
	\Omega = \begin{bmatrix}\begin{array}{c|c}\begin{matrix}0 & 0\\ \tilde{K} & 0\end{matrix} & \begin{matrix}0 & Y^{-1}\\ LT_u & X \end{matrix}\\\hline \begin{matrix}-\tilde{M}\phantom{\Big|}& 0\end{matrix} & \begin{matrix} H^*-NT_u & 0	\end{matrix}\end{array}
	\end{bmatrix},
\end{equation*}
then the AE-NLM of the ground-truth {polynomial} system \eqref{NLsys} {with \eqref{NLsysPoly}} is upper bounded by $\Phi$ for the linear approximation model \eqref{LinearSystem} with $A_\Psi$, $B_\Psi$, $C_\Psi$, and $D_\Psi$ from 
\begin{equation}\label{InvTrafo1}
	\begin{bmatrix}K & L\\ M & N\end{bmatrix}=\begin{bmatrix}U & 0\\ 0& I\end{bmatrix}\begin{bmatrix}A_\Psi & B_\Psi\\ C_\Psi & D_\Psi\end{bmatrix}\begin{bmatrix}V^T & 0\\ 0 & I\end{bmatrix}
\end{equation}
with $K=\tilde{K}Y$, $M=\tilde{M}Y$, and $I_{n_x}-XY=UV^T$.	
\end{thm}
\begin{figure*}
	\begin{align}\label{LMICond}
	&0\preceq\begin{bmatrix}
	\star^T\begin{bmatrix}\begin{array}{c|c|c|c|c}
	\mathcal{Y}^T\mathcal{X}\mathcal{Y} & 0 &0 & 0 & 0\\\hline 0 & \Phi I_{n_u}  & 0 &0&0 \\\hline	
	0 & 0 & \sum_{i=1}^{n_S}\tau_{\Sigma i}\varDelta_{*i}\phantom{\Big|} & 0 & 0\\\hline					
	0 & 0 & 0 & \sum_{i=1}^{n_P}P_i(\tau_i)\phantom{\Big|} & 0\\\hline 0& 0 & 0 & 0 & \tau_x I_{n_x} 
	\end{array}	\end{bmatrix}
	\begin{bmatrix}\hspace{0.2cm}\begin{matrix}	I_{2n_x} & 0 & 0 \\	\hline 
	0 & T_u & 0\\\hline 0 & I_{n_z} & 0 \\ 0 & 0 & I_{n_x}\\\hline 
	0 & I_{n_z} & 0\\\hline \begin{bmatrix}I_{n_x} & I_{n_x}\end{bmatrix}\phantom{\Big|} & -T_x & 0\end{matrix}\hspace{0.2cm}\end{bmatrix}
	& \Omega^T\\
	\Omega & \begin{bmatrix}\mathcal{Y}^T\mathcal{X}\mathcal{Y}  & 0 \\ 0&  \Phi I_{n_y} \end{bmatrix}
	\end{bmatrix}
	\end{align}
\end{figure*}
\begin{proof}
Retain from Lemma~\ref{ThmAnaNLM} the condition that $\psi$ has to be non-negative for all $(x,x_\Psi,u,\text{vec}(F))\in\mathbb{R}^{n_x}\times\mathbb{R}^{n_x}\times\mathbb{R}^{n_u}\times\mathbb{R}^{n_xn_z}$. Instead of applying an SOS relaxation as in Lemma~\ref{ThmAnaNLM}, we require that $\psi$ is non-negative for all $x,x_\Psi,u$, and $F$ with $\begin{bmatrix}\begin{bmatrix}I_{n_x} & 0  \end{bmatrix}& -T_x & 0\end{bmatrix}\phi=0$ and $\phi=\begin{bmatrix}\begin{bmatrix}x^T & x_\Psi^T  \end{bmatrix}  & z^T & z^TF^T\end{bmatrix}^T$. Since $\psi$ is a homogeneous quadratic polynomial in $\phi$, Finsler's lemma yields the equivalent condition \eqref{LMIAna} 
\begin{figure*}
	\vspace{-0.8cm}
	\begin{align}\label{LMIAna}
	&0\preceq E_1^T
	\text{diag}\left(\mathcal{X},-\mathcal{X}\Bigg|\Phi I_{n_u},-\frac{1}{\Phi}I_{n_y}\Bigg|\sum_{i=1}^{n_S}\tau_{\Sigma i}\varDelta_{*i}\Bigg|\sum_{i=1}^{n_P}P_i(\tau_i)\Bigg|\tau_x I_{n_x}\right)E_1
	\end{align}	
	%\begin{tikzpicture}
	%\draw[-,line width=0.7pt] (0,0) -- (18,0);
	%\end{tikzpicture}
\end{figure*}
with $\tau_x\geq0$,
\begin{align*}
	E_1&=\begin{bmatrix}\hspace{0.2cm}\begin{matrix}	I_{2n_x} & 0 & 0 \\ \mathcal{A} & \mathcal{B}_z & \mathcal{B}_{Fz}\\	\hline 
	0 & T_u & 0\\  \mathcal{C} & \mathcal{D}_z & \mathcal{D}_{Fz}\\\hline 0 & I_{n_z} & 0 \\ 0 & 0 & I_{n_x}\\\hline 
	0 & I_{n_z} & 0\\\hline \begin{bmatrix}I_{n_x} & 0\end{bmatrix}\phantom{\Big|} & -T_x & 0\end{matrix}\hspace{0.2cm}\end{bmatrix},
\end{align*}
\begin{align*}
	\begin{bmatrix}\hspace{0.15cm}\begin{matrix}
	x(t+1)\\x_\Psi(t+1)\\\hline e(t)\end{matrix}\hspace{0.15cm}\end{bmatrix}&=\begin{bmatrix}\begin{array}{c|c}\begin{matrix}0 & 0\\ 0 & A_\Psi\end{matrix}  & \begin{matrix} 0 & I\\ B_\Psi T_u & 0\end{matrix}\\\hline \begin{matrix}0 & -C_\Psi\end{matrix} & \begin{matrix} H^*-D_\Psi T_u & 0	\end{matrix}\end{array}
	\end{bmatrix}\phi(t)\\
	&=:\begin{bmatrix}\begin{array}{c|c}\mathcal{A} & \begin{matrix}	\mathcal{B}_z & \mathcal{B}_{Fz}\end{matrix}\\\hline\mathcal{C} & \begin{matrix}\mathcal{D}_z & \mathcal{D}_{Fz}\end{matrix}	\end{array}
	\end{bmatrix}\phi(t).
\end{align*}
Since \eqref{LMIAna} is independent of $x$, we can apply techniques from the linear robust control literature to linearize \eqref{LMIAna} regarding the optimization variables.\\
To this end, define the partition from \cite{SchererLMI} 
\begin{equation*}
	\mathcal{X} = \begin{bmatrix}X & U\\ U^T & *\end{bmatrix}\ \text{and}\ \mathcal{X}^{-1} = \begin{bmatrix}Y & V\\ V^T & *\end{bmatrix}
\end{equation*}
with $XY+UV^T=I_{n_x}$ and the congruence transformation of condition $\mathcal{X}\succ0$ with
\begin{equation*}
	\mathcal{Y}_1=\begin{bmatrix}Y & I_{n_x}\\ V^T & 0\end{bmatrix}
\end{equation*}
which yields 
\begin{equation*}
	\mathcal{Y}_1^T\mathcal{X}\mathcal{Y}_1=\begin{bmatrix}Y & I_{n_x}\\ I_{n_x} & X\end{bmatrix}\succ0.
\end{equation*}
Hence, $I_{n_x}-XY$ is non-singular such that we can factorize $I_{n_x}-XY=UV^T$ with square and non-singular matrices $U,V\in\mathbb{R}^{n_x\times n_x}$. Contrary to \cite{SchererLMI}, we require an additional congruence transformation with 
\begin{equation*}
	\mathcal{Y}_2=\begin{bmatrix}Y^{-1} & 0\\ 0 & I_{n_x}\end{bmatrix}.
\end{equation*}
To apply both congruence transformation in the sequel, we calculate for $\mathcal{Y}=\mathcal{Y}_1\mathcal{Y}_2$
\begin{align*}
	&\begin{bmatrix}\mathcal{Y} & 0\\ 0 & I_{n_y}\end{bmatrix}^T
	\begin{bmatrix}\begin{array}{c|c}\mathcal{X}\mathcal{A} & \begin{matrix}\mathcal{X}\mathcal{B}_z & 	\mathcal{X}\mathcal{B}_{Fz}\end{matrix}\\\hline\mathcal{C} & \begin{matrix}\mathcal{D}_z & \mathcal{D}_{Fz}\end{matrix}	\end{array}
	\end{bmatrix}\begin{bmatrix}\mathcal{Y} & 0 \\ 0 & I_{n_z+n_x} \end{bmatrix}\\
	=&\begin{bmatrix}\mathcal{Y}_2 & 0\\ 0 & I_{n_y}\end{bmatrix}\begin{bmatrix}\begin{array}{c|c}\begin{matrix}0 & 0\\ K & 0\end{matrix} & \begin{matrix}0 & I_{n_x}\\ LT_u & X \end{matrix}\\\hline \begin{matrix}-M & 0\end{matrix} & \begin{matrix} H^*-NT_u & 0	\end{matrix}\end{array}
	\end{bmatrix}\begin{bmatrix}\mathcal{Y}_2 & 0 \\ 0 & I_{n_z+n_x}\end{bmatrix}\\
	=&\ \Omega
\end{align*}
with $\tilde{K}=KY^{-1}$, $\tilde{M}=MY^{-1}$, and
\begin{equation*}
	\begin{bmatrix}K & L\\ M & N\end{bmatrix}=\begin{bmatrix}U & 0\\ 0& I_{n_y}\end{bmatrix}\begin{bmatrix}A_\Psi & B_\Psi\\ C_\Psi & D_\Psi\end{bmatrix}\begin{bmatrix}V^T & 0\\ 0 & I_{n_u}\end{bmatrix}
\end{equation*}
from \cite{SchererLMI} (Section 4.2). Applying the congruence transformation with $\mathcal{Y}$ to $\mathcal{X}\succ0$ yields \eqref{TransCond1} and applying the congruence transformation with $\text{diag}(\mathcal{Y}, I_{n_z}, I_{n_x})$ to \eqref{LMIAna} yields \eqref{LMICon} 
\begin{figure*}
\vspace{-0.6cm}
\begin{align}\label{LMICon}
	&0\preceq E_2^T
	\text{diag}\left(\mathcal{Y}^T\mathcal{X}\mathcal{Y},-\mathcal{Y}^{-1}\mathcal{X}^{-1}\mathcal{Y}^{T^{-1}}\Bigg|\Phi I_{n_u},-\frac{1}{\Phi}I_{n_y}\Bigg|\sum_{i=1}^{n_S}\tau_{\Sigma i}\varDelta_{*i} \Bigg|\sum_{i=1}^{n_P}P_i(\tau_i)\Bigg|\tau_x I_{n_x}	\right)
	E_2
\end{align}
\begin{tikzpicture}
	\draw[-,line width=0.7pt] (0,0) -- (18,0);
\end{tikzpicture}
\vspace{-0.3cm} 
\end{figure*}
with 
\begin{equation*}
	E_2= \begin{bmatrix}\hspace{0.2cm}\begin{matrix}	I_{2n_x} & 0 & 0 \\ \mathcal{Y}^T\mathcal{X}\mathcal{A}\mathcal{Y} & \mathcal{Y}^T\mathcal{X}\mathcal{B}_z & \mathcal{Y}^T\mathcal{X}\mathcal{B}_{Fz}\\	\hline 
	0 & T_u & 0\\  \mathcal{C}\mathcal{Y} & \mathcal{D}_z & \mathcal{D}_{Fz}\\\hline 0 & I_{n_z} & 0 \\ 0 & 0 & I_{n_x}\\\hline 
	0 & I_{n_z} & 0\\\hline \begin{bmatrix}I_{n_x} & I_{n_x}\end{bmatrix} & -T_x & 0\end{matrix}\hspace{0.2cm}\end{bmatrix},
\end{equation*}	
where $\mathcal{Y}$ is invertible as $V$ is invertible. Finally, \eqref{LMICon} is equivalent to \eqref{LMICond} by the Schur complement.
\end{proof}

Before Theorem~\ref{SynAENLM} is employed in a numerical example, some comments are appropriate. First, Lemma~\ref{ThmAnaNLM} is equivalent to Theorem~\ref{SynAENLM} while the matrix inequalities \eqref{TransCond1} and \eqref{LMICond} are linear in the optimization variables $X,Y^{-1},\tau_{\Sigma 1},\dots, \tau_{\Sigma n_S},\tau_x,\Phi,\tilde{K},L,\tilde{M},N,$ and $\tau_1,\dots,\tau_{n_P}$. Thus, the smallest guaranteed upper bound on the AE-NLM can be computed by minimizing over $\Phi$ subject to the LMI conditions \eqref{TransCond1} and \eqref{LMICond} and the SOS conditions on the polynomials $z_i\tau_i,i=1,\dots,n_P,$ which boil down to an LMI condition by the square matricial representation \cite{SOSDecomp}. Secondly, since $Y^{-1}$ is nonsingular, we can compute square and nonsingular matrices $U$ and $V$ by a matrix factorization to perform the inverse transformation from $\tilde{K},L,\tilde{M},$ and $N$ to $A_\Psi, B_\Psi, C_\Psi,$ and $D_\Psi$ which constitute the \textquoteleft optimal\textquoteright\ linear approximation model of $H$. We summarize the calculation of AE-NLM and the \textquoteleft optimal\textquoteright\ linear approximation from data in the following algorithm.

\begin{algo}[Data-driven inference on AE-NLM from $\mathit\Sigma_\text{p}$]\label{Algorithm1}\indent
	\vspace{-0.4cm}\begin{itemize}	
		\item[$0.)$] Given the vector $z$ and data \eqref{DataSet} that satisfy Assumption~\ref{Noise1}. 
		\item[$1.)$] Compute $\Delta_i$ from Lemma~\ref{DualSigma}, solve LMI~\eqref{Elli_LMI}, and compute $\Sigma_\text{p}$ from \eqref{Sigma_p}.
		\item[$2.)$] Solve the SDP in Theorem~\ref{SynAENLM}, i.e., minimize $\Phi>0$ subject to \eqref{TransCond1} and \eqref{LMICond}. AE-NLM is upper bounded by $\Phi$.		
		\item[$3.)$] Calculate $U,V$ from, e.g., a singular value decomposition of $I_{n_x}-XY$. Derive $A_\Psi, B_\Psi, C_\Psi,$ and $D_\Psi$ from \eqref{InvTrafo1}.
	\end{itemize}
\end{algo}

The linear mappings ${P}_i,i=1,\dots,n_P,$ in \eqref{QuadraticDecomp} always exist as the left hand side is linear in $\tau_i$.
On the hand, the quadratic decompositions \eqref{QuadraticDecomp} are in general not unique due to the non-unique square matricial representation \cite{SOSDecomp}. Indeed, any polynomial $q({v})$ can be written as $q(v)= m(v)^T(Q+L(\alpha)) m(v)$ where $m(v)$ is a vector of monomials with $q(v)=m(v)^TQm(v)$, and $L(\alpha),\alpha\in\mathbb{R}^{\mu},$ is a linear parametrization of the linear space $\mathcal{L}=\{L=L^T: m^TL(\alpha)m=0\}$. Hence,
\begin{align}\label{ExactSOS}
	z_i\tau_ip_i &=	 \sum_{j=1}^{\beta}\tau_i[j]z^T(Q_{i,j}+L_{i,j}(\alpha_{i,j}))z\notag\\ &=\sum_{j=1}^{\beta}z^T(\tau_i[j]Q_{i,j}+L_{i,j}(\tilde{\alpha}_{i,j}))z
\end{align}
where $\tau_i[j]$ and $z_i[j]$ denote the $j$-th element of $\tau_i$ and $z_i$, respectively, $z^T(Q_{i,j}+L_{i,j}(\alpha_{i,j}))z$ is the square matricial representation of $z_i[j]p_i$, and $\tilde{\alpha}_{i,j}=\tau_i[j]\alpha_{i,j}$. Since the square matricial representation \eqref{ExactSOS} of $z_i\tau_ip_i$ is linear in the optimization variables $\tau_i$ and $\tilde{\alpha}_{i,j}$, we could replace the quadratic decompositions \eqref{QuadraticDecomp} by the square matricial representations \eqref{ExactSOS} in order to deteriorate the conservatism of condition \eqref{LMICond} due to the additional degrees of freedom, i.e., $\tilde{\alpha}_{i,j}$. Note that these additional parameters of the square matricial representation are automatically exploited by SOS solvers as YALMIP \cite{YALMIP}. Therefore, Theorem~\ref{SynAENLM} with the quadratic decompositions \eqref{ExactSOS} instead of \eqref{QuadraticDecomp} incorporates actually the same accuracy as the SOS condition from Lemma~\ref{ThmAnaNLM}.\\\indent 
As a last comment, the SOS condition of \eqref{DissiAENLM} boils down to a matrix inequality independent of $x$, for which LMI techniques from linear robust control can be applied, because we write explicitly the SOS decomposition \eqref{QuadraticDecomp} or \eqref{ExactSOS} and we apply Finsler's lemma to connect signal $x$ and $z$. Both steps are not required in the SOS condition of Lemma~\ref{ThmAnaNLM} as they would be done by an SOS solver.

%\begin{rmk}\label{LowOrder}
%Due to the congruence transformation from \cite{SchererLMI}, the linear system $G$ must have the same system order as the nonlinear polynomial system. If a linear low-order surrogate model with $\text{dim}(x_\Psi)<n_x$ is desired then additional rank conditions are required which are hard to satisfy with known optimization techniques.	
%\end{rmk}

\begin{rmk}\label{RelaxFinsler}
In Theorem~\ref{SynAENLM}, we include via Finsler's lemma the equality constraint $x-T_xz=0$ which is equivalent to $(x-T_xz)^T(x-T_xz)\leq0$. Since equality constraints might result in numerical problems, we relax this constraint in our implementations by $(x-T_xz)^T(x-T_xz)\leq z^TQ_xz$ for some $Q_x\succeq0$, which can be included into \eqref{LMICond} by the S-procedure.
\end{rmk}

%\vspace{1cm}

\subsection{Numerical calculation of AE-NLM}\label{SecNumNLM}

We obtain from data by Theorem~\ref{SynAENLM} an upper bound on the AE-NLM and the corresponding optimal linear surrogate model for the system
\begin{align}\label{NumEx2}
	\begin{bmatrix}x_1(t+1)\\x_2(t+1)
	\end{bmatrix}&=\begin{bmatrix}
	0.3x_1+x_2^3\\
	0.2x_2+0.1x_2^2-0.3x_1^3+0.4u
	\end{bmatrix}(t)
\end{align} 
with operation set $x_1^2\leq 1$, $x_2^2\leq 1$, and $u^2\leq1.5^2$ and $y(t)=x(t)$. We suppose the access to samples \eqref{DataSet} from one trajectory with initial condition $x(0)=\begin{bmatrix}-1 & -1\end{bmatrix}^T$, $u(t)=1.5\sin(0.002t^2+0.1t)$, and noise with constant signal-to-noise-ratio $||\tilde{d}_i||_2\leq0.02||\tilde{x}_i||_2$. Moreover, let $z(x,u)=\begin{bmatrix}x_1&x_2& x_2^2&x_1^3&x_2^3&u\end{bmatrix}^T$ be known.\\\indent 
For $\Sigma_F=\Sigma_\text{p}$, we compute from the available data the upper bounds $0.6751\ (S=10), 0.5910\ (S=20)$, and $0.4823\ (S=50)$ for the AE-NLM and the linear approximation \eqref{LinearSystem} with 
\begin{equation}\label{NumLinSys}
\begin{aligned}   
&A_\Psi=\begin{bmatrix}0.2941 &   -0.0751\\	0.2548  &  -0.0442	\end{bmatrix}, \quad B_{\Psi}=\begin{bmatrix}
1.3998\\	-1.1848	\end{bmatrix}, \\    
&C_\Psi=\begin{bmatrix}0.3521  & 0.5150\\	0.3000 &  -0.1060	\end{bmatrix}, \quad D_{\Psi}=10^{-4}\begin{bmatrix}	-0.1229\\
0.0014\end{bmatrix} 
\end{aligned}
\end{equation}
for $S=50$. We also calculate an upper bound $0.3666$ for the AE-NLM using the system dynamics directly by solving an SDP which can be deduced analogously to Theorem~\ref{SynAENLM}.\\\indent
Figure~\ref{Fig.Outputs} shows the state trajectory of system \eqref{NumEx2}, the linear approximation \eqref{LinearSystem} with \eqref{NumLinSys}, and the Jacobian linearization of \eqref{NumEx2} at $x=0$ for the input $u(t)=1.4\sin(0.17t)$. 
\begin{figure}
	\centering
	\includegraphics[width=1\linewidth]{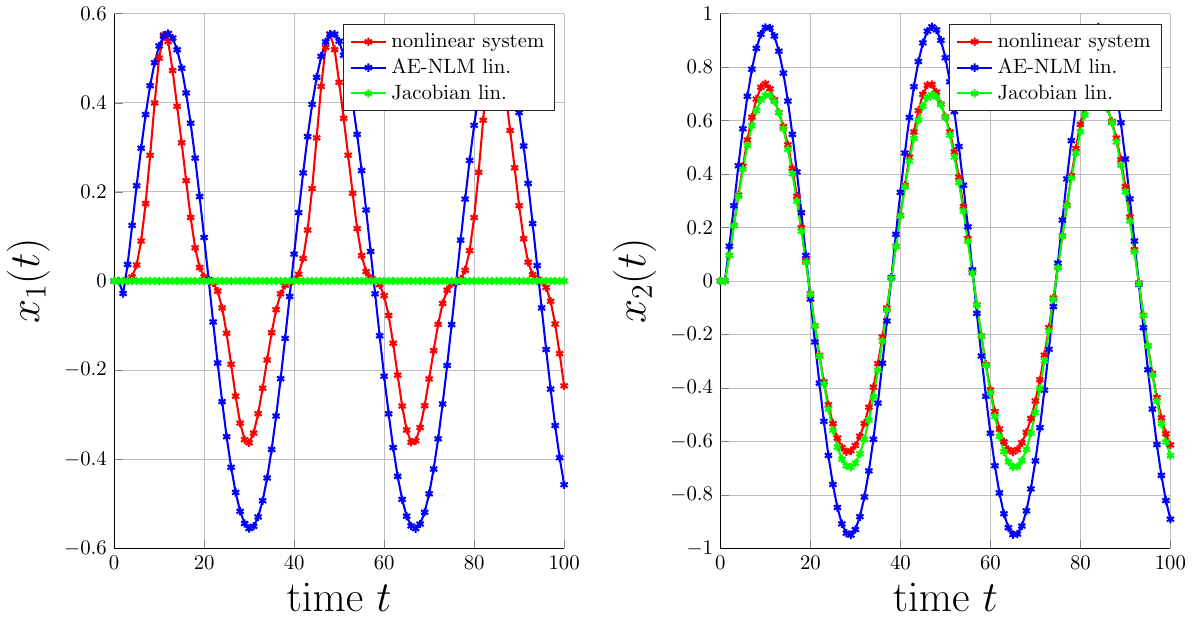}
	\caption{State trajectory of system \eqref{NumEx2}, the linear approximation \eqref{LinearSystem} with \eqref{NumLinSys}, and the Jacobian linearization of \eqref{NumEx2} at $x=0$.}
	\label{Fig.Outputs}
\end{figure}
The figure demonstrates that the Jacobian linearization approximates the $x_2$-dynamics well but fails with respect to the $x_1$-dynamics while the optimized linear approximation \eqref{NumLinSys} yields a balanced approximation of the $x_1$- and $x_2$-dynamics. 
Moreover, the \textquoteleft best\textquoteright\ linear model with $\Phi^{\mathcal{U},\mathcal{G}}_{\text{AE}}=0.4823$ almost halves the worst-case approximation error compared to the Jacobian linearization which yields a data-driven upper bound of $0.9072$ for the AE-NLM by solving \eqref{LMIAna}. Hence, the Jacobian linearization performs in this example barely better than the trivial approximation model with zero matrices in \eqref{LinearSystem} which corresponds to the $\ell_2$-gain of $1.1301$. {Thereby}, we conclude that a robust controller design with our \textquoteleft optimal\textquoteright\ linear model would perform better than with the Jacobian linearization.\\\indent
Furthermore, Figure~\ref{Fig.NLMUpperUpper} shows $\Phi^{\mathcal{U},\mathcal{G}}_{\text{AE}}$ for different sizes of the operation set where for each set a new linear surrogate model is calculated from the same data. 
\begin{figure}
	\centering
	\includegraphics[width=0.9\linewidth]{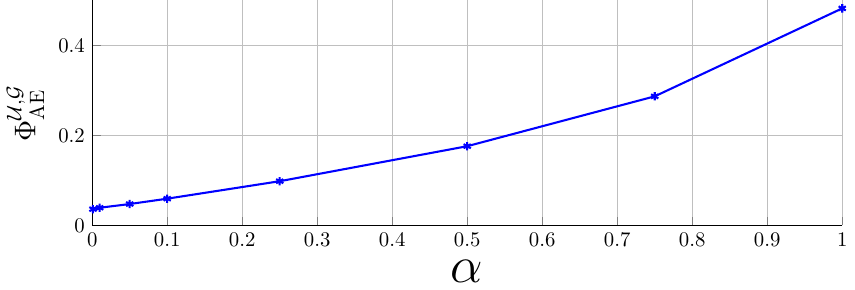}
	\caption{{AE-NLM} for increasing operation set $x_1^2\leq \alpha,$ $x_2^2\leq \alpha$.}
	\label{Fig.NLMUpperUpper}
\end{figure}
%Thus, the nonlinear input-output behaviour can be linearly well-approximated for small $\alpha$ while the impact of the nonlinearity increases with the size of the operation set. 
Observe that the NLM does not tend to zero for $\alpha\rightarrow0$, even though the nonlinearity vanishes, because the linear part of the system dynamics is still uncertain. Thus, for small $\alpha$, Algorithm~\ref{Algorithm1} fits a linear approximation model for a set of almost linear systems, and therefore the approximation error does not vanish even for small operation sets.

\section{Determining optimal input-output properties}\label{Ext}

The focus of this section is the extension of Theorem~\ref{SynAENLM} to determine more general optimal input-output properties specified by certain classes of time domain hard IQCs while the overall procedure stays as in Algorithm~\ref{Algorithm1}. Contrary to \cite{AnneIQC}, we investigate IQCs over the infinite time horizon, for polynomial systems, and for linear filters parametrized by a general state-space representation.% Subsequent, we highlight the flexibility of the presented framework, e.g., by the consideration of non-polynomial nonlinear system dynamics which are polynomial sector-bounded.
\begin{coro}[Data-driven inference on IQCs]\label{CoroSynIQC}
Suppose that the data samples \eqref{DataSet} satisfy Assumption~\ref{Noise1} and there exist matrices $T_x\in\mathbb{R}^{n_x\times n_z}$ and $T_u\in\mathbb{R}^{n_u\times n_z}$ with $x=T_x z$ and $u=T_u z$, respectively. If there exist matrices $X,Y^{-1}\succ0$, non-negative scalars $\tau_{\Sigma 1},\dots, \tau_{\Sigma n_S}, \tau_x$, a vector $\gamma\in\mathbb{R}^{n_\gamma}$, matrices $\tilde{K}\in\mathbb{R}^{n_x\times n_x},L\in\mathbb{R}^{n_x\times (n_u+n_y)},\tilde{M}\in\mathbb{R}^{n_{p_2}\times n_x},N\in\mathbb{R}^{n_{p_2}\times (n_u+n_y)}$, and polynomials $z_i\tau_i\in\text{SOS}[x,u],i=1,\dots,n_P$, as described in Theorem~\ref{SynAENLM}, satisfying \eqref{TransCond1} and \eqref{LMICondIQC} with 
\begin{align*}
	\Omega_1 &= \begin{bmatrix}\begin{array}{c|c}\begin{matrix}0 & 0\\ \tilde{K} & 0\end{matrix} & \begin{matrix}0 & I_{n_x}\\ L\begin{bmatrix}
	T_u\\H^*\end{bmatrix}\phantom{\Bigg|} & X \end{matrix}\\\hline \begin{matrix}\tilde{M} & 0\end{matrix} & \begin{matrix} N\begin{bmatrix}
	T_u\\H^*
	\end{bmatrix}\phantom{\Bigg|} & 0	\end{matrix}\end{array}
	\end{bmatrix},\\
	\Omega_2 &= \begin{bmatrix}\hspace{0.2cm}\begin{matrix}	I_{2n_x} & 0 & 0 \\	\hline 
	0 & D_{u1}T_u+D_{y1}H^* & 0\\\hline 0 & I_{n_z} & 0 \\ 0 & 0 & I_{n_x}\\\hline 
	0 & I_{n_z} & 0\\\hline \begin{bmatrix}I_{n_x} & I_{n_x}\end{bmatrix}\phantom{\Big|} & -T_x & 0\end{matrix}\hspace{0.2cm}\end{bmatrix},
	\end{align*}
and
\begin{equation*}
	\Omega_3 = \begin{bmatrix}\begin{array}{c|c|c|c|c}
	0 & 0 &0 & 0 & 0\\\hline 0 & M_2  & 0 &0&0 
	\end{array}	\end{bmatrix},
\end{equation*}
then all trajectories of the ground-truth polynomial system \eqref{NLsys} with {\eqref{NLsysPoly} and} $(x(t),u(t))\in\mathbb{P}, t\in\mathbb{N}_{[0,N]},$ satisfy for all $N\geq0$ the time domain hard IQC
\begin{equation}\label{HardIQC}
	\sum_{t=0}^{N}\begin{bmatrix}p_1(t)\\p_2(t)	\end{bmatrix}^T\begin{bmatrix}M_1(\gamma) & M_2\\ M_2^T & M_3(\gamma)\end{bmatrix}\begin{bmatrix}p_1(t)\\p_2(t)	\end{bmatrix}\geq0
\end{equation}
with $M_3(\gamma)\prec0$ for all $\gamma\in\mathbb{R}^{n_\gamma}$ and $M_1,M_3^{-1}$ linear in $\gamma$ and the linear filter
\begin{align*}
	x_\Psi(t+1)&=A_\Psi x_\Psi(t)+B_uu(t)+B_yy(t), x_\Psi(0)=0\\
	p_1(t)&=D_{u1}u(t)+D_{y1}y(t)\\
	p_2(t)&=C_\Psi x_\Psi(t)+D_{u2}u(t)+D_{y2}y(t)
\end{align*}
with given matrices $D_{u1}$ and $D_{y1}$ and optimized matrices $A_\Psi$, $B_u$, $B_y$, $C_\Psi$, $D_{u2}$, and $D_{y2}$ from
\begin{equation*}
	\begin{bmatrix}K & L\\ M & N\end{bmatrix}{=}\begin{bmatrix}U & 0\\ 0& I_{n_{p_2}}\end{bmatrix}\begin{bmatrix}A_\Psi & \begin{bmatrix}
	B_{u} & B_y\end{bmatrix}\\ C_\Psi & \begin{bmatrix}
	D_{u2} & D_{y2}\end{bmatrix}\end{bmatrix}\begin{bmatrix}V^T & 0\\ 0 & I_{n_{u}+n_y}\end{bmatrix}
\end{equation*}
with $K=\tilde{K}Y$, $M=\tilde{M}Y$, and $I_{n_x}-XY=UV^T$.
\end{coro}
\begin{figure*}
\begin{align}\label{LMICondIQC}
	&0\preceq\begin{bmatrix}
	\Omega_2^T\text{diag}\left(\mathcal{Y}^T\mathcal{X}\mathcal{Y}\bigg|M_1(\gamma)\bigg|\sum_{i=1}^{n_S}\tau_{\Sigma i}\varDelta_{*i}\bigg|
	\sum_{i=1}^{n_P}P_i(\tau_i)\bigg|\tau_x I_{n_x}\right)
	\Omega_2+\Omega_1^T\Omega_3\Omega_2+\Omega_2^T\Omega_3^T\Omega_1
	& \Omega_1^T\\
	\Omega_1 & \text{diag}\left(\mathcal{Y}^T\mathcal{X}\mathcal{Y},M_3(\gamma)^{-1}\right)	\end{bmatrix}
\end{align}
	\begin{tikzpicture}
	\draw[-,line width=0.7pt] (0,0) -- (18,0);
	\end{tikzpicture}
\end{figure*}
\begin{proof}
The claim follows analogously to Theorem~\ref{SynAENLM}. Applying the Schur complement on \eqref{LMICondIQC} as in \cite{SchererLMI} (Lemma 4.2), then using the congruence transformation from Theorem~\ref{SynAENLM} including $\mathcal{Y}$, and thereafter exploiting the generalized S-procedure from Lemma~\ref{ThmAnaNLM} yield that \eqref{TransCond1} and \eqref{LMICondIQC} imply
\begin{align*}
	0\leq& \begin{bmatrix}p_1\\p_2	\end{bmatrix}^T\begin{bmatrix}M_1(\gamma) & M_2\\ M_2^T & M_3(\gamma)\end{bmatrix}\begin{bmatrix}p_1\\p_2	\end{bmatrix}+\begin{bmatrix}x\\x_\Psi\end{bmatrix}^T\mathcal{X}\begin{bmatrix}x\\x_\Psi\end{bmatrix}\\
	&-\begin{bmatrix}Fz\\A_\Psi x_\Psi+B_u u+B_y y\end{bmatrix}^T\mathcal{X}\begin{bmatrix}Fz\\A_\Psi x_\Psi+B_u u+B_y y\end{bmatrix}
\end{align*}
for all $(x,u,x_\Psi)\in\mathbb{P}\times\mathbb{R}^{n_x}$ and $F\in\Sigma_F$. Since $F^*\in\Sigma_F$, all trajectories of the ground-truth polynomial system \eqref{NLsys} with $(x(t),u(t))\in\mathbb{P}, t\in\mathbb{N}_{[0,N]},$ satisfy for all $N\geq0$
\begin{align*}
	0\leq& \sum_{t=0}^{N}\begin{bmatrix}p_1(t)\\p_2(t)	\end{bmatrix}^T\begin{bmatrix}M_1(\gamma) & M_2\\ M_2^T & M_3(\gamma)\end{bmatrix}\begin{bmatrix}p_1(t)\\p_2(t)	\end{bmatrix}\\
	&+\begin{bmatrix}x(0)\\x_\Psi(0)\end{bmatrix}^T\mathcal{X}\begin{bmatrix}x(0)\\x_\Psi(0)\end{bmatrix}-\star^T\mathcal{X}\begin{bmatrix}x(N+1)\\x_\Psi(N+1)\end{bmatrix}\\
	\leq& \sum_{t=0}^{N}\begin{bmatrix}p_1(t)\\p_2(t)	\end{bmatrix}^T\begin{bmatrix}M_1(\gamma) & M_2\\ M_2^T & M_3(\gamma)\end{bmatrix}\begin{bmatrix}p_1(t)\\p_2(t)	\end{bmatrix}
\end{align*}
by $x(0)=x_\Psi(0)=0$ and $\mathcal{X}\succ0$.
\end{proof}

Since condition \eqref{LMICondIQC} depends linearly on $\gamma$, we can determine the tightest IQC by minimizing over $c^T\gamma$ for a given weighting vector $c\in\mathbb{R}^{n_\gamma}$. Moreover, Corollary~\ref{CoroSynIQC} includes Theorem~\ref{SynAENLM} as special case by the discussion in Section~\ref{NLMIntro}.

\subsection{Further investigation of NLMs}

In Section~\ref{SecGeneralSOS}, we focused on the examination of the AE-NLM from Definition~\ref{NLMDef}. However, \cite{TS} proposes further NLMs based on distinct interconnections of the nonlinear system $H$ and the linear approximation $G$.

\begin{defn}[Further NLMs]\label{DefNLMeasures}
The nonlinearity of a causal stable nonlinear system $H:\mathcal{U}\rightarrow\mathcal{Y}$ is measured by
\begin{align*}
	\Phi^{\mathcal{U},\mathcal{G}}_{\text{IMOE}}&=\inf_{G\in\mathcal{G}}\sup_{\substack{u\in\mathcal{U}:H(u)\neq0\\T\in\mathbb{N}_0}}\frac{||H(u)_{T}-G(u)_T||_{\ell_2}}{||H(u)_T||_{\ell_2}},\\
	\Phi^{\mathcal{U},\mathcal{G}^{inv}}_{\text{MIE}}&=\inf_{G^{-1}\in\mathcal{G}^{inv}}\sup_{\substack{u\in\mathcal{U}\backslash \{0\}\\T\in\mathbb{N}_0}}\frac{||G^{-1}(H(u))_T-u_T||_{\ell_2}}{||u_T||_{\ell_2}},\\
	\Phi^{\mathcal{U},\mathcal{G}^{inv}}_{\text{FE}}&=\inf_{G^{-1}\in\mathcal{G}^{inv}}\sup_{\substack{u\in\mathcal{U}:H(u)\neq0\\T\in\mathbb{N}_0}}\frac{||G^{-1}(H(u))_T-u_T||_{\ell_2}}{||H(u)_T||_{\ell_2}},
\end{align*}
where $G:\mathcal{U}\rightarrow\mathcal{\ell}_{2e}^{n_y}$ and $G^{-1}:\mathcal{Y}\rightarrow\mathcal{U}$ are elements of sets $\mathcal{G}$ and $\mathcal{G}^{inv}$, respectively, of stable linear systems.
\end{defn}

For the existence and well-definedness of these NLMs, we refer to \cite{TS}. In contrast to AE-NLM, the inverse multiplicative output error NLM (IMOE-NLM) and the multiplicative input error NLM (MIE-NLM) are normalized, i.e., a NLM close to one indicates a strong nonlinear input-output behaviour. Intuitively, the IMOE-NLM corresponds to the output-to-error-ratio for the worst case input. To conclude on IMOE-NLM, we apply Corollary~\ref{CoroSynIQC} with $B_y=0$ and $D_{y2}=-I_{n_y}$, which can be imposed by $L=\begin{bmatrix}
\tilde{L} & 0\end{bmatrix}$ and $N=\begin{bmatrix}\tilde{N} & -I_{n_y}\end{bmatrix}$, and $D_{u1}=0$, $D_{y1}=I_{n_y}$, $M_1=\gamma I_{n_y}$, $M_2=0$, $M_3=-\frac{1}{\gamma}I_{n_y}$ which corresponds to the dissipativity of the interconnection in Figure~\ref{Fig:NLMDef} with respect to the supply rate $s(y,e)=\gamma ||y||_2^2-\frac{1}{\gamma}||e||_2^2$. Then the minimal $\gamma$ corresponds to the minimal upper bound on the IMOE-NLM.\\\indent
For the MIE-NLM and the feedback error NLM (FE-NLM), the inverse of the input-output behaviour of the nonlinear system is approximated. To infer on MIE-NLM, consider the interconnection in Figure~\ref{Fig:NLMMIEDef}. 
\begin{figure}
	\begin{center}
		\begin{tikzpicture}[scale=0.5]

\path (2,0) node[rectangle,draw,minimum width=12mm, minimum height=8mm](G) {$H$};
\path (G)+(4,0) node[rectangle,draw,minimum width=12mm, minimum height=8mm](Delta) {$G^{-1}$};
%\path (G)+(0,1) node[rectangle,draw,minimum width=35mm, minimum height=24mm, dashed](N) {};
%\path (G)+(2.9,2.9) node(N1) {$\Delta$};
%\path (G)+(6,2.5) node(N1) {$||\Delta||^{\mathcal{U}}\leq\Phi^{\mathcal{U},\mathcal{G}}$};

\path (Delta)+(3,0) node[circle,draw,inner sep=0pt, minimum size=3mm](rSum){};
\path (rSum)+(0.5,0.5) node(minus){};
\draw[->,line width=1.5pt] (G)+(-3,0) -- (G) node[pos=0.1, above=3pt,inner sep=0pt](HP2){$u$};
\draw[->,line width=1.5pt] (G) -- (Delta);
\draw[->,line width=1.5pt] (Delta) -- (rSum) node[pos=0.6, below=1.5pt,inner sep=0pt](HP2){${-}$};
\draw[-,line width=1.5pt] (G)+(-2,0) |- (9.05,1.5);
\draw[->,line width=1.5pt] (9,1.5) -- (rSum);
\draw[->,line width=1.5pt] (rSum) -- +(2.3,0) node[pos=0.8, above=3pt,inner sep=0pt](HP2){$e$};
    	
\end{tikzpicture}
	\end{center}
	\caption{Interconnection of $H$ and $G^{-1}$ for MIE-NLM and FE-NLM.}
	\label{Fig:NLMMIEDef}
\end{figure}
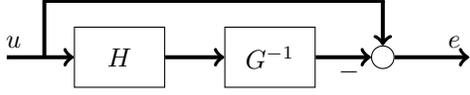
Thus, Corollary~\ref{CoroSynIQC} can be employed with $B_u=0$ and $D_{u2}=-I_{n_u}$, which can be imposed by $L=\begin{bmatrix}0 & \tilde{L}\end{bmatrix}$ and $N=\begin{bmatrix}-I_{n_u} & \tilde{N}\end{bmatrix}$, and $D_{y1}=0$, $D_{u1}=I_{n_u}$, $M_1=\gamma I_{n_u}$, $M_2=0$, $M_3=-\frac{1}{\gamma}I_{n_y}$. Furthermore, we gather an upper bound on FE-NLM by Corollary~\ref{CoroSynIQC} by the same setup as for MIE-NLM but $D_{y1}=I_{n_y}$ and $D_{u1}=0$.

\begin{rmk}[Linear filter design]
A related problem to determine NLMs is the linear filter design with performance guarantees \cite{FilterDesign}. To this end, consider the polynomial system
\begin{align*}
x(t+1)&=F^*z(x(t),u(t)),x(0)=0\\
y(t) &= H_y z(x(t),u(t))\\
p(t) &= H_p z(x(t),u(t))
\end{align*}
with unknown coefficients $F^*$, measured signal $y$ with known $H_y$, and the to-be-estimated signal $p$ with known $H_p$. We aim to design the linear filter
\begin{align*}
	x_\Psi(t+1)&=A_\Psi x_\Psi(t)+ B_uu(t)+ B_y y(t),x_\Psi(0)=0\\
	p_\Psi(t) &= C_\Psi x_\Psi(t)+D_uu(t)+D_yy(t)
\end{align*}
such that the $\ell_2$-gain from $u$ to $e=p-p_\Psi$ is minimal for all trajectories of the polynomial with $(x(t),u(t))\in\mathbb{P},\forall t\geq0$. The sufficient LMI conditions follow directly from Corollary~\ref{CoroSynIQC} with $D_{u1}=I_{n_u}, D_{y1}=0, M_1(\gamma)=\gamma I_{n_u}$, $M_2=0$, $M_3(\gamma)=-\frac{1}{\gamma}I_{n_p}$, $p_\Psi$ corresponds to $p_2$, and minimizing over $\gamma$. Moreover, we must modify $\Omega_1$ to
\begin{equation*}
	\begin{bmatrix}\begin{array}{c|c}\begin{matrix}0 & 0\\ \tilde{K} & 0\end{matrix} & \begin{matrix}0 & I_{n_x}\\ L\begin{bmatrix}
	T_u\\H^*\end{bmatrix}\phantom{\Bigg|} & X \end{matrix}\\\hline \begin{matrix}-\tilde{M} & 0\end{matrix} & \begin{matrix} H_p-N\begin{bmatrix}
	T_u\\H^*
	\end{bmatrix}\phantom{\Bigg|} & 0	\end{matrix}\end{array}
	\end{bmatrix}
\end{equation*}
as we require the sum quadratic constraint
\begin{equation*}
	\sum_{t=0}^{N}\begin{bmatrix}u(t)\\p-p_\Psi(t)\end{bmatrix}^T\begin{bmatrix}\gamma I_{n_u} & 0\\ 0 & -\frac{1}{\gamma}I_{n_p}\end{bmatrix}\begin{bmatrix}u(t)\\p-p_\Psi(t)\end{bmatrix}\geq0,
\end{equation*}
instead of the hard IQC \eqref{HardIQC}.
\end{rmk}

\begin{rmk}[Continuous-time system]
The presented results Lemma~\ref{ThmAnaNLM}, Theorem~\ref{SynAENLM}, and Corollary~\ref{CoroSynIQC} can easily be formulated for continuous-time polynomial systems following \cite{SchererLMI}. 
\end{rmk}

\begin{rmk}[NLM for unstable systems]\label{LinSysDyn}

The input-output behaviour of a nonlinear system $H$ with unbounded $\ell_2$-gain renders the NLMs of Definition~\ref{NLMDef} and \ref{DefNLMeasures} to be unbounded which rises the question how the nonlinearity of unstable (polynomial) systems can be measured? To this end, we consider the linear system $x_L(t+1)=A x_L(t)+B u(t)$ that minimizes regarding the unidentified polynomial system $x(t+1)=F^*z(x(t),u(t))$ the normalized Euclidean-norm of the error $e(x,u)=F^*z(x,u)-Ax-Bu$ within \eqref{Constraints}, i.e., 
\begin{align*}
	&\gamma^*=\min_{\gamma\geq0, (A,B)\in\mathbb{R}^{n_x\times n_x}\times\mathbb{R}^{n_x\times n_u}} \gamma\\
	&\text{s.t.\ } ||F^*z(x,u)-Ax-Bu||_2^2\leq\gamma^2\ \Bigg|\Bigg|\begin{bmatrix}x\\u\end{bmatrix}\Bigg|\Bigg|_2^2,\ \forall (x,u)\in\mathbb{P}.
\end{align*}
The obtained linearization corresponds to the Jacobian linearization of $x(t+1)=F^*z(x(t),u(t))$ at $\begin{bmatrix}x^T & u^T\end{bmatrix}^T=0$ if the operation set $\mathbb{P}$ tends to $\{(0,0)\}$. Exploiting the set-membership $F^*\in\Sigma_F$ and polynomials $z_i\tau_i\in\text{SOS}[x,u],i=1,\dots,n_P$, as in Theorem~\ref{SynAENLM}, we derive a data-based upper bound of $\gamma^*$
\begin{align*}
	\gamma^*\leq&\min_{\substack{\gamma\geq0, (A,B)\in\mathbb{R}^{n_x\times n_x}\times\mathbb{R}^{n_x\times n_u},\\ \tau_{\Sigma 1},\dots, \tau_{\Sigma n_S}\geq0,z_1\tau_1,\dots,z_{n_P}\tau_{n_P}\in\text{SOS}[x,u]}} \gamma\\
	&\text{s.t.\ } 0\preceq
	\star^T\Theta\begin{bmatrix}\hspace{0.2cm}\begin{matrix}	 0 & \begin{bmatrix}T_x \\  T_u\end{bmatrix}\vspace{0.1cm}\\	\hline 
	I_{n_x} & -AT_x-BT_u\\\hline 0 & I_{n_z}  \\  I_{n_x} & 0\\\hline 
	0 & I_{n_z} \end{matrix}\hspace{0.2cm}\end{bmatrix}
\end{align*}
with $x=T_x z$, $u=T_u z$, and
\begin{align*}
	\Theta=\text{diag}\left(\gamma^2I_{n_x+n_u}\bigg|-I_{n_x}\bigg|\sum_{i=1}^{n_S}\tau_{\Sigma i}\varDelta_{*i}\bigg|\sum_{i=1}^{n_P}P_i(\tau_i)	\right).
\end{align*}
Note that the Schur complement renders this optimization problem linear regarding the variables $A$ and $B$, and hence yields an SDP. 
\end{rmk}

\section{Further investigation of set-memberships for coefficient matrices}\label{DataPre2}

Since the computation of $\Sigma_\text{p}$ calls for the solution of LMI \eqref{Elli_LMI} which might be computationally expensive for a large number of samples, we present in Section~\ref{Supersets2} two further supersets of $\Sigma$. Moreover, we show in Section~\ref{Asymptotic} that all three supersets converge to $F^*$ despite noisy data if the number of samples tends to infinity and if further assumptions hold. Finally, we compare the accuracy of the supersets for determining the $\ell_2$-gain in a numerical example in Section~\ref{NumComparison}.

\subsection{Supersets for $\mathit{\Sigma}$}\label{Supersets2}

Whereas the computation of $\Sigma_\text{p}$ requires to solve an SDP which complexity increases linearly with the number of samples, another superset was suggested in \cite{Groningen} which can be derived without additional optimization. To this end, we reformulate the pointwise noise bound from Assumption~\ref{Noise1} to a characterization as in \cite{Groningen} where the noise realizations $\tilde{d}_1^T,\dots,\tilde{d}_S^T$ are bounded cumulatively.
\begin{lem}[Cumulatively bounded noise]\label{NoiseAv}
The matrix of noise realizations $\tilde{D}=\begin{bmatrix}\tilde{d}_1&\cdots&\tilde{d}_S\end{bmatrix}$ with $\tilde{d}_i\in\mathcal{D}_i$ from Assumption~\ref{Noise1} is an element of
\begin{equation}\label{noise_Av}
	\mathcal{D}_\text{c}{=}\left\{D\in\mathbb{R}^{n_x\times S}{:}\begin{bmatrix}D^T\\I_{n_x}\end{bmatrix}^T\begin{bmatrix}\tilde{\Delta}_1 & \tilde{\Delta}_2\\ \tilde{\Delta}_2^T & \tilde{\Delta}_3\end{bmatrix}\begin{bmatrix}D^T\\I_{n_x}\end{bmatrix}\preceq0 \right\}
\end{equation}
with $\tilde{\Delta}_1=-\text{diag}(\Delta_{1,i},\dots,\Delta_{1,S})\succ0$, $\tilde{\Delta}_2=\begin{bmatrix}\Delta_{2,1}^T&\cdots&\Delta_{2,S}^T	\end{bmatrix}^T$, and $\tilde{\Delta}_3=-\sum_{i=1}^{S}\Delta_{3,i}$.	
\end{lem}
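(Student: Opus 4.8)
The plan is to lift the $S$ pointwise noise bounds to a single cumulative matrix inequality by exploiting the additive block structure of \eqref{noise_Av}. First I would recall from the proof of Lemma~\ref{DualSigma} that, by the dualization lemma \cite{SchererLMI}, membership $\tilde d_i\in\mathcal D_i$ in the sense of \eqref{SepPrimnoise} is equivalent to the dual bound \eqref{DualNoise} evaluated at $d=\tilde d_i$, i.e.
\begin{equation*}
	\begin{bmatrix}\tilde d_i^T\\ I_{n_x}\end{bmatrix}^T\begin{bmatrix}-\Delta_{1,i} & \Delta_{2,i}\\ \Delta_{2,i}^T & -\Delta_{3,i}\end{bmatrix}\begin{bmatrix}\tilde d_i^T\\ I_{n_x}\end{bmatrix}\preceq0,\qquad \Delta_{1,i}<0,
\end{equation*}
where $\begin{bmatrix}\Delta_{1,i} & \Delta_{2,i}\\ \Delta_{2,i}^T & \Delta_{3,i}\end{bmatrix}=\varDelta_i^{-1}$ exists by Assumption~\ref{Noise1}. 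Since here $\Delta_{1,i}$ is a scalar and strictly negative, the block $\tilde\Delta_1=-\text{diag}(\Delta_{1,1},\dots,\Delta_{1,S})$ is positive definite, which is the side claim in the statement.

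The core step is a single block-matrix expansion. With $\tilde D=\begin{bmatrix}\tilde d_1 & \cdots & \tilde d_S\end{bmatrix}$, $\tilde\Delta_2=\begin{bmatrix}\Delta_{2,1}^T & \cdots & \Delta_{2,S}^T\end{bmatrix}^T$, and $\tilde\Delta_3=-\sum_{i=1}^S\Delta_{3,i}$, multiplying out gives
\begin{align*}
	\begin{bmatrix}\tilde D^T\\ I_{n_x}\end{bmatrix}^T\begin{bmatrix}\tilde\Delta_1 & \tilde\Delta_2\\ \tilde\Delta_2^T & \tilde\Delta_3\end{bmatrix}\begin{bmatrix}\tilde D^T\\ I_{n_x}\end{bmatrix}
	&=\tilde D\tilde\Delta_1\tilde D^T+\tilde D\tilde\Delta_2+\tilde\Delta_2^T\tilde D^T+\tilde\Delta_3\\
	&=\sum_{i=1}^S\Big(-\Delta_{1,i}\tilde d_i\tilde d_i^T+\tilde d_i\Delta_{2,i}+\Delta_{2,i}^T\tilde d_i^T-\Delta_{3,i}\Big),
\end{align*}
where the second equality uses that $\tilde\Delta_1$ is (block-)diagonal, so $\tilde D\tilde\Delta_1\tilde D^T=-\sum_i\Delta_{1,i}\tilde d_i\tilde d_i^T$, and that the $i$-th row block of $\tilde\Delta_2$ equals $\Delta_{2,i}$, so $\tilde D\tilde\Delta_2=\sum_i\tilde d_i\Delta_{2,i}$. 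Reading the expansion of the $i$-th dual bound above in reverse shows that its left-hand side equals precisely $-\Delta_{1,i}\tilde d_i\tilde d_i^T+\tilde d_i\Delta_{2,i}+\Delta_{2,i}^T\tilde d_i^T-\Delta_{3,i}$; hence the cumulative quadratic form is the sum of these $S$ negative semidefinite matrices and is therefore itself negative semidefinite, which is exactly $\tilde D\in\mathcal{D}_\text{c}$.

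There is no genuine obstacle: the result is an immediate consequence of the dual noise description established inside the proof of Lemma~\ref{DualSigma} together with the additivity of the block structure in \eqref{noise_Av}. The only items needing care are the dimension bookkeeping — each $\Delta_{1,i}$ is a scalar and each $\Delta_{2,i}$ a $1\times n_x$ row — and checking that the cross term $\tilde D\tilde\Delta_2+\tilde\Delta_2^T\tilde D^T$ and the constant term $\tilde\Delta_3$ indeed reassemble termwise from the individual dual forms, which is what pins down the definitions of $\tilde\Delta_2$ and $\tilde\Delta_3$ in the statement.
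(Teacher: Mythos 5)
Your proposal is correct and follows the paper's proof exactly: dualize each pointwise bound via the argument from Lemma~\ref{DualSigma}, sum the resulting $S$ negative semidefinite matrix inequalities, and identify the sum with the block form \eqref{noise_Av}. The only difference is that you write out explicitly the block-matrix expansion that the paper dismisses as a ``simple reformulation,'' which is a harmless (indeed helpful) elaboration.
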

\begin{proof}
The proof of Lemma~\ref{DualSigma} already shows that the noise bounds from \eqref{SepPrimnoise} are equivalent to the dual version \eqref{DualNoise}. The summation of \eqref{DualNoise} over all noise realizations yields 
\begin{align*}
	\sum_{i=1}^{S}\begin{bmatrix}\tilde{d}_i^T\\I_{n_x}\end{bmatrix}^T\begin{bmatrix}-\Delta_{1,i} & \Delta_{2,i}\\ \Delta_{2,i}^T & -\Delta_{3,i}\end{bmatrix}\begin{bmatrix}\tilde{d}_i^T\\I_{n_x}\end{bmatrix}\preceq0,
\end{align*}
which is equivalent to \eqref{noise_Av} by simple reformulation.
\end{proof}

%Exemplary, \eqref{noise_Av} incorporates noise with bounded energy $\sum_{i=1}^{D}\tilde{d}_i^T\tilde{d}_i\leq\delta_{\text{e}}^2$ by $\tilde{D}\tilde{D}^T\preceq \delta^2_{\text{e}}I_n$. 
Due to the summation of \eqref{DualNoise} over all noise realizations, $\mathcal{D}_{\text{c}}$ facilitates more noise realizations than the original pointwise descriptions $\mathcal{D}_i$. Thus, the non-tight characterization \eqref{noise_Av} amounts to a non-tight set-membership representation of $\Sigma$ in the following proposition. %For a more {thorough} comparison between a pointwise and cumulatively noise description in data-driven system analysis, we refer to \cite{MartinDissi}.
\begin{lrop}[Cumulative superset of $\mathit{\Sigma}$]\label{AvSigma}
Suppose $\tilde{Z}=\begin{bmatrix}z(\tilde{x}_1,\tilde{u}_1)&\cdots&z(\tilde{x}_S,\tilde{u}_S)\end{bmatrix}$ is full row rank and the inverse of 
\begin{align*}
	&\Delta_\text{c}=\begin{bmatrix}\Delta_{1\text{c}} & \Delta_{2\text{c}}\\ \Delta_{2\text{c}}^T & \Delta_{3\text{c}}\end{bmatrix}\\
	&\hspace{0.2cm} {=}\begin{bmatrix}\tilde{Z}\tilde{\Delta}_1\tilde{Z}^T & -\tilde{Z}(\tilde{\Delta}_1\tilde{X}^{+^T}+\tilde{\Delta}_2)\\-(\tilde{X}^+\tilde{\Delta}_1^T{+}\tilde{\Delta}_2^T)\tilde{Z}^T & \begin{bmatrix}\tilde{X}^{+^T}\\I_{n_x}\end{bmatrix}^T\begin{bmatrix}\tilde{\Delta}_1 & \tilde{\Delta}_2\\ \tilde{\Delta}_2^T & \tilde{\Delta}_3\end{bmatrix}\begin{bmatrix}\tilde{X}^{+^T}\\I_{n_x}\end{bmatrix} \end{bmatrix}
\end{align*}
exists for the data-dependent matrix $\tilde{X}^+=\begin{bmatrix}\tilde{x}_1^+&\cdots&\tilde{x}_S^+\end{bmatrix}$. Then the set of feasible coefficients $\Sigma$ is a subset of 
\begin{align}\label{Sigma_av}
	\Sigma_{\text{c}}=\left\{F\in\mathbb{R}^{n_x\times n_z}:\begin{bmatrix}I_{n_z}\\F\end{bmatrix}^T\varDelta_{\text{c}}\begin{bmatrix}I_{n_z}\\F\end{bmatrix}\preceq0\right\}
\end{align}
with $\varDelta_{\text{c}}=\begin{bmatrix}-\varDelta_{1\text{c}} & \varDelta_{2\text{c}}\\ \varDelta_{2\text{c}}^T & -\varDelta_{3\text{c}}\end{bmatrix}$ and $\begin{bmatrix}\varDelta_{1\text{c}} & \varDelta_{2\text{c}}\\ \varDelta_{2\text{c}}^T & \varDelta_{3\text{c}}\end{bmatrix}=\Delta_\text{c}^{-1}$.	
\end{lrop}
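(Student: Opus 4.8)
The plan is to follow the same two-step recipe used in the proof of Proposition~\ref{Primal_elli} — first pass to a graph-type quadratic matrix inequality in $\begin{bmatrix}F^T\\ I_{n_x}\end{bmatrix}$, then dualize — but with the cumulative noise bound of Lemma~\ref{NoiseAv} playing the role that the S-procedure played there. Fix an arbitrary $F\in\Sigma$. By Definition~\ref{DefFSS} there exist $\tilde{d}_i\in\mathcal{D}_i$, $i=1,\dots,S$, with $\tilde{x}_i^+=Fz(\tilde{x}_i,\tilde{u}_i)+\tilde{d}_i$; stacking these columnwise gives $\tilde{D}=\tilde{X}^+-F\tilde{Z}$ for the data matrices $\tilde{X}^+$ and $\tilde{Z}$ of the statement. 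Lemma~\ref{NoiseAv} then asserts $\tilde{D}\in\mathcal{D}_\text{c}$, that is,
\begin{equation*}
	\begin{bmatrix}(\tilde{X}^+-F\tilde{Z})^T\\ I_{n_x}\end{bmatrix}^T\begin{bmatrix}\tilde{\Delta}_1 & \tilde{\Delta}_2\\ \tilde{\Delta}_2^T & \tilde{\Delta}_3\end{bmatrix}\begin{bmatrix}(\tilde{X}^+-F\tilde{Z})^T\\ I_{n_x}\end{bmatrix}\preceq0 .
\end{equation*}

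Next I would rewrite this bound purely in terms of $\begin{bmatrix}F^T\\ I_{n_x}\end{bmatrix}$ via the factorization
\begin{equation*}
	\begin{bmatrix}(\tilde{X}^+-F\tilde{Z})^T\\ I_{n_x}\end{bmatrix}=\begin{bmatrix}-\tilde{Z}^T & \tilde{X}^{+^T}\\ 0 & I_{n_x}\end{bmatrix}\begin{bmatrix}F^T\\ I_{n_x}\end{bmatrix},
\end{equation*}
which turns the displayed inequality into $\begin{bmatrix}F^T\\ I_{n_x}\end{bmatrix}^T\Delta_\text{c}\begin{bmatrix}F^T\\ I_{n_x}\end{bmatrix}\preceq0$, where $\Delta_\text{c}$ is exactly the congruence product $\begin{bmatrix}-\tilde{Z} & 0\\ \tilde{X}^+ & I_{n_x}\end{bmatrix}\begin{bmatrix}\tilde{\Delta}_1 & \tilde{\Delta}_2\\ \tilde{\Delta}_2^T & \tilde{\Delta}_3\end{bmatrix}\begin{bmatrix}-\tilde{Z}^T & \tilde{X}^{+^T}\\ 0 & I_{n_x}\end{bmatrix}$. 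Carrying out this block multiplication (and using $\tilde{\Delta}_1^T=\tilde{\Delta}_1$, since $\tilde{\Delta}_1$ is block diagonal and symmetric) reproduces the four blocks of $\Delta_\text{c}$ in the statement; in particular the $(1,1)$-block is $\Delta_{1\text{c}}=\tilde{Z}\tilde{\Delta}_1\tilde{Z}^T$ and the $(2,2)$-block is $\begin{bmatrix}\tilde{X}^{+^T}\\ I_{n_x}\end{bmatrix}^T\begin{bmatrix}\tilde{\Delta}_1 & \tilde{\Delta}_2\\ \tilde{\Delta}_2^T & \tilde{\Delta}_3\end{bmatrix}\begin{bmatrix}\tilde{X}^{+^T}\\ I_{n_x}\end{bmatrix}$. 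This step is pure bookkeeping.

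Finally, since the dual noise form \eqref{DualNoise} gives $\tilde{\Delta}_1=-\text{diag}(\Delta_{1,1},\dots,\Delta_{1,S})\succ0$, the leading block $\Delta_{1\text{c}}=\tilde{Z}\tilde{\Delta}_1\tilde{Z}^T$ is automatically positive semidefinite; the proposition's hypothesis strengthens this to $\Delta_{1\text{c}}\succ0$ — a data-richness condition on $\tilde{Z}$ — and adds invertibility of $\Delta_\text{c}$, which are precisely the hypotheses under which the dualization lemma \cite{SchererLMI} applies. Invoking it on $\begin{bmatrix}F^T\\ I_{n_x}\end{bmatrix}^T\Delta_\text{c}\begin{bmatrix}F^T\\ I_{n_x}\end{bmatrix}\preceq0$ exactly as in the last line of the proof of Proposition~\ref{Primal_elli} yields $\begin{bmatrix}I_{n_z}\\ F\end{bmatrix}^T\varDelta_\text{c}\begin{bmatrix}I_{n_z}\\ F\end{bmatrix}\preceq0$ with $\varDelta_\text{c}$ the sign-flipped inverse of $\Delta_\text{c}$, i.e.\ $F\in\Sigma_\text{c}$. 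As $F\in\Sigma$ was arbitrary, $\Sigma\subseteq\Sigma_\text{c}$. I expect no real obstacle here: the only point deserving care is verifying applicability of the dualization lemma — noting that semidefiniteness of $\Delta_{1\text{c}}$ comes for free while its definiteness and the invertibility of $\Delta_\text{c}$ are the stated assumptions — and everything else is the routine block-matrix computation indicated above.
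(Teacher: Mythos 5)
Your proposal is correct and follows essentially the same route as the paper's proof: combine the cumulative noise bound of Lemma~\ref{NoiseAv} with the data and the dynamics to obtain the quadratic matrix inequality $\begin{bmatrix}F^T\\ I_{n_x}\end{bmatrix}^T\Delta_\text{c}\begin{bmatrix}F^T\\ I_{n_x}\end{bmatrix}\preceq0$, then apply the dualization lemma under the hypotheses $\Delta_{1\text{c}}\succ0$ and invertibility of $\Delta_\text{c}$. The only difference is that you spell out the congruence factorization explicitly, which the paper leaves implicit by reference to \cite{Groningen} (Lemma 4); your block computation checks out.
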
\vspace{0.1cm}
\begin{proof}
Analogously to \cite{Groningen} (Lemma 4), combining \eqref{noise_Av}, data samples~\eqref{DataSet}, and the system dynamics $x(t+1)=Fz(x(t), u(t))$ yields the tight description 
\begin{align}\label{FSSA1}
	\left\{F\in\mathbb{R}^{n_x\times n_z}:\begin{bmatrix}F^T\\I_{n_x}\end{bmatrix}^T\Delta_\text{c} \begin{bmatrix}F^T\\I_{n_x}\end{bmatrix}\preceq0\right\}
\end{align}
of the set of feasible coefficients which explain the data \eqref{DataSet} for average noise description \eqref{noise_Av}. Since $\tilde{\Delta}_1\succ0$ according to Lemma~\ref{NoiseAv} and $\tilde{Z}$ is full row rank, $\Delta_{1\text{c}}\succ0$. Together with 
the existence of the inverse of $\Delta_\text{c}$, the dualization lemma can be employed on \eqref{FSSA1} to derive the equivalent description \eqref{Sigma_av}.
\end{proof}

As already indicated in \cite{MartinDissi}, the summation in Lemma~\ref{NoiseAv} corresponds to the S-procedure in Proposition~\ref{Primal_elli} for $\alpha_1=\cdots=\alpha_S=1$, and hence $\Sigma_\text{c}$ is a superset of $\Sigma_\text{p}$. We already discussed the assumption on $\tilde{Z}$ after Proposition~\ref{Primal_elli}. To the best knowledge of the authors, the invertibility assumption on $\Delta_\text{c}$ can not be dropped in general. However, if this assumption is not satisfied and the number of samples does not allow to solve the LMI~\eqref{Elli_LMI}, we suggest to consider, instead of $\Sigma_\text{c}$, the superset $\Sigma_\text{p}$ together with the feasible solution provided in the proof of Proposition~\ref{Primal_elli}. Since this only requires the full rank of $\tilde{Z}$, it constitutes an alternative to \cite{Groningen} with less assumptions for our purposes.\\\indent
While the supersets $\Sigma_\text{p}$ and $\Sigma_\text{c}$ use one matrix inequality to characterize the feasible coefficients, i.e., $n_S=1$ in \eqref{FSSgen}, we propose next a third superset, inspired by the window noise description in \cite{SetMem_Con}. To this end, we define for a window length $L\leq S$ and $i=1,\dots,S_0= S-L+1,$ the data-dependent matrices $\tilde{X}^+_i=\begin{bmatrix}\tilde{x}_i^+&\cdots&\tilde{x}_{i+L-1}^+\end{bmatrix}$ and $\tilde{Z}_i=\begin{bmatrix}z(\tilde{x}_i,\tilde{u}_i)&\cdots&z(\tilde{x}_{i+L-1},\tilde{u}_{i+L-1})\end{bmatrix}$ and the corresponding noise realizations $\tilde{D}_i=\begin{bmatrix}\tilde{d}_i & \cdots & \tilde{d}_{i+L-1}\end{bmatrix}$ where each satisfies
\begin{equation}\label{DataWindow}
	\begin{bmatrix}\tilde{D}_i^T\\I_{n_x}\end{bmatrix}^T\begin{bmatrix} \tilde{\Delta}_{1,i}& \tilde{\Delta}_{2,i}\\ \tilde{\Delta}_{2,i}^T &\tilde{\Delta}_{3,i} \end{bmatrix}\begin{bmatrix}\tilde{D}_i^T\\I_{n_x}\end{bmatrix}\preceq0
\end{equation}
with $\tilde{\Delta}_{1,i}=-\text{diag}(\Delta_{1,i},\dots,\Delta_{1,i+L-1})\succ0$, $\tilde{\Delta}_{2,i}=\begin{bmatrix}\Delta_{2,i}^T& \cdots & \Delta_{2,i+L-1}^T	\end{bmatrix}^T$, and $\tilde{\Delta}_{3,i}=-\sum_{j=i}^{i+L-1}\Delta_{3,j}$ by Lemma~\ref{NoiseAv}. Thus, we study contrary to \cite{SetMem_Con} general quadratic noise descriptions and unknown coefficient matrices.

\begin{lrop}[Window-based superset of $\mathit\Sigma$]\label{WSigma}
For $i=1,\dots,S_0$, suppose $\tilde{Z}_i$ is full row rank and the inverse of 
\begin{align*}
	&{\Delta}_{\text{w},i}=\begin{bmatrix}{\Delta}_{1\text{w},i} & {\Delta}_{2\text{w},i}\\ {\Delta}_{2\text{w},i}^T & {\Delta}_{3\text{w},i}\end{bmatrix}\\
	& {=}\begin{bmatrix}\tilde{Z}_i\tilde{\Delta}_{1,i}\tilde{Z}_i^T & -\tilde{Z}_i(\tilde{\Delta}_{1,i}\tilde{X}_i^{+^T}+\tilde{\Delta}_{2,i})\\{-}(\tilde{X}_i^+\tilde{\Delta}_{1,i}^T{+}\tilde{\Delta}_{2,i}^T)\tilde{Z}_i^T & \begin{bmatrix}\tilde{X}_i^{+^T}\\I_{n_x}\end{bmatrix}^T\begin{bmatrix}\tilde{\Delta}_{1,i} & \tilde{\Delta}_{2,i}\\ \tilde{\Delta}_{2,i}^T & \tilde{\Delta}_{3,i}\end{bmatrix}\begin{bmatrix}\tilde{X}_i^{+^T}\\I_{n_x}\end{bmatrix} \end{bmatrix}
\end{align*}
exists. Then the set of feasible coefficients $\Sigma$ is a subset of 
\begin{align*}
	\Sigma_{\text{w}}=\left\{F:\begin{bmatrix}I_{n_z}\\F\end{bmatrix}^T\varDelta_{\text{w},i}\begin{bmatrix}I_{n_z}\\F\end{bmatrix}\preceq0,i=1,\dots,S_0\right\}
\end{align*}
with $\varDelta_{\text{w},i}=\begin{bmatrix}-\varDelta_{1\text{w},i} & \varDelta_{2\text{w},i}\\ \varDelta_{2\text{w},i}^T & -\varDelta_{3\text{w},i}\end{bmatrix}$ and $\begin{bmatrix}\varDelta_{1\text{w},i} & \varDelta_{2\text{w},i}\\ \varDelta_{2\text{w},i}^T & \varDelta_{3\text{w},i}\end{bmatrix}={\Delta}_{\text{w},i}^{-1}$.	
\end{lrop}	
\begin{proof}
	The result follows immediately by Proposition~\ref{AvSigma} for each window $i=1,\dots,S_0$.
\end{proof}

Clearly, $\Sigma_\text{w}$ corresponds to $\Sigma_\text{c}$ for $L=S$, and hence $\Sigma_\text{w}\subseteq\Sigma_\text{c}$. Note that the window length $L$ can not be chosen arbitrarily small as otherwise the invertibility of $\Delta_{\text{w},i}$ is violated. To refine the accuracy of $\Sigma_\text{w}$, we could compute an ellipsoidal outer approximation for each window as for $\Sigma_\text{p}$. Thereby, the invertibility of $\Delta_{\text{w},i}$ can be dropped and we meet the pointwise bound from Assumption~\ref{Noise1} tighter than $\Sigma_\text{p}$, due to the additional split of data \eqref{DataSet} into $S_0$ windows.\\\indent
In the context of deriving quadratic matrix inequalities \eqref{FSSgen} for coefficient matrices $F^*$ from noisy samples, we also refer to \cite{ProbDetNoise} if the disturbance is Gaussian distributed.% using the chi-squared probability distribution.}

\subsection{Asymptotic consistency of $\mathit\Sigma_\text{p}, \mathit\Sigma_\text{c}$, and $\mathit\Sigma_\text{w}$}\label{Asymptotic}

We show that $\Sigma_\text{p}, \Sigma_\text{c}$, and $\Sigma_\text{w}$ converge to the true coefficients $F^*$ for infinitely many samples together with a tight noise bound. Furthermore, we derive supersets of $\Sigma_\text{c}$ and $\Sigma_\text{w}$ for non-tight noise bounds and $S\rightarrow\infty$.\\\indent
First, we deduce an auxiliary result to conclude on a set of coefficient matrices which can be falsified by infinitely many samples even if the noise description is not tight. This result can then be applied to evaluate the asymptotic exactness of $\Sigma_\text{p}$ and $\Sigma_\text{w}$. For the data sample \eqref{DataSet}, an $L_0\in\mathbb{N}_{[1,S]}$, and any $t\in\mathbb{N}_{[1,S-L_0+1]}$, we define the matrices
\begin{align*}
	X_t&=\begin{bmatrix}\tilde{x}_t &\cdots & \tilde{x}_{t+L_0-1}\end{bmatrix},\\
	Z_t&=\begin{bmatrix}z(\tilde{x}_t,\tilde{u}_t) &\cdots & z(\tilde{x}_{t+L_0-1},\tilde{u}_{t+L_0-1})\end{bmatrix},\\
	D_t&=\begin{bmatrix}\tilde{d}_t &\cdots & \tilde{d}_{t+L_0-1}\end{bmatrix}
\end{align*} 
with $X_{t+1}=F^{*}Z_{t}+D_{t}$. We suppose the knowledge on a compact set $\mathcal{D}_\text{nt}\subset\mathbb{R}^{n_x\times L_0}$ which contains the noise realizations $D_t$ for all $t\in\mathbb{N}_{[1,S-L_0+1]}$. Since $\mathcal{D}_\text{nt}$ might be a non-tight bound on $D_t$, we assume analogously to \cite{SetMem_MPC} (Assumption 5) that there exists an unknown tight noise bound.
\begin{assu}[Tight noise bound]\label{AsstightBound}
Suppose there exist a compact set $\Omega\subset\mathbb{R}^{n_x\times L_0}$ and $\rho>0$ such that $\Omega\oplus\rho\mathcal{B}\supseteq\mathcal{D}_\text{nt}\supseteq\Omega$ with the unit ball $\mathcal{B}=\{D\in\mathbb{R}^{n_x\times L_0}:||D||_\text{Fr}\leq1\}$. Moreover, for all $t\in\mathbb{N}_{[1,S-L_0+1]}$, let $D_t\in\Omega$ and let a function $p:\mathbb{R}_{>0}\rightarrow \mathbb{R}_{(0,1]}$ exist with $\text{Pr}(||D_t-\bar{D}||_\text{Fr}<\epsilon)\geq p(\epsilon)$ for all $\bar{D}\in\partial\Omega$ and all $\epsilon>0$.
\end{assu}

Assumption~\ref{AsstightBound} supposes that any noise realization matrix, arbitrarily close to the boundary of $\Omega$, can be observed at any time window with non-zero probability, and hence $\Omega$ is a tight noise characterization. Note that Assumption~\ref{AsstightBound} implies that the noise realizations $\tilde{d}_1,\dots,\tilde{d}_S$ are random variables.

\begin{assu}[Conditionally independent disturbance]\label{AssIndepend}
For any $i,j\in\mathbb{N}_{[1,S]},i\neq j$, the disturbance realizations $\tilde{d}_i$ and $\tilde{d}_j$ are conditionally independent. 
\end{assu}

\begin{assu}[Persistent excitation]\label{AssPE}
Suppose there exist positive scalars $\alpha, \beta$, and $L_\text{pe}\leq L_0$ such that $||Z_t||_\text{Fr}\leq\alpha$ for all $t\in\mathbb{N}_{[1,S{-L_0+1}]}$ and 
\begin{equation*}
	\sum_{i=t}^{t+L_\text{pe}-1} z(\tilde{x}_i,\tilde{u}_i)z(\tilde{x}_i,\tilde{u}_i)^T \succeq \beta I_{n_z}
\end{equation*}
for all $t\in\mathbb{N}_{[1,S-L_\text{pe}+1]}$.
\end{assu}

\begin{lem}[Set of falsified coefficients]\label{SetofunFaCoe}
Suppose Assumption~\ref{AsstightBound}, {\ref{AssIndepend},} and \ref{AssPE} hold and a non-tight noise bound $\mathcal{D}_\text{nt}$ is known. Then the coefficients $F\in\mathbb{R}^{n_x\times n_z}$, excluded by
\begin{equation}
	\{F^*\}\oplus\rho\sqrt{\frac{L_\text{pe}}{\beta}}\mathcal{B},
\end{equation}
can be falsified with probability $1$ by the data \eqref{DataSet} for $S\rightarrow\infty$.
\end{lem}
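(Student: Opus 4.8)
The plan is to fix an arbitrary matrix $F$ with $\|F-F^*\|_\text{Fr}>\rho\sqrt{L_\text{pe}/\beta}$, write $E:=F-F^*$, and show that with probability one the data eventually contain a window on which $F$ is inconsistent with the known (non-tight) noise bound $\mathcal{D}_\text{nt}$. Recall that $X_{t+1}=F^*Z_t+D_t$ with $D_t\in\mathcal{D}_\text{nt}$; hence $F$ is consistent with the window starting at $t$ exactly when $X_{t+1}-FZ_t=D_t-EZ_t\in\mathcal{D}_\text{nt}$, and it suffices to produce, almost surely, one index $t$ with $D_t-EZ_t\notin\mathcal{D}_\text{nt}$.

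The first ingredient is excitation: since $L_\text{pe}\leq L_0$, Assumption~\ref{AssPE} yields $Z_tZ_t^T\succeq\beta I_{n_z}$ for every admissible window, so $\|EZ_t\|_\text{Fr}^2=\text{tr}(EZ_tZ_t^TE^T)\geq\beta\,\text{tr}(EE^T)=\beta\|E\|_\text{Fr}^2$, i.e.\ $\|EZ_t\|_\text{Fr}\geq\sqrt{\beta}\|E\|_\text{Fr}>\rho\sqrt{L_\text{pe}}\geq\rho$ uniformly in $t$. The second ingredient is a separation estimate based on $\Omega\subseteq\mathcal{D}_\text{nt}\subseteq\Omega\oplus\rho\mathcal{B}$: since $\Omega$ is compact and $EZ_t\neq0$, pick $\bar{D}_t\in\partial\Omega$ minimising $D\mapsto\text{tr}(D^TEZ_t)$ over $\Omega$; then $\text{tr}\big((\bar{D}_t-D)^TEZ_t\big)\leq0$ for all $D\in\Omega$, and expanding $\|\bar{D}_t-EZ_t-D\|_\text{Fr}^2$ shows it is at least $\|EZ_t\|_\text{Fr}^2$. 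Thus $\bar{D}_t-EZ_t$ has Frobenius distance at least $\|EZ_t\|_\text{Fr}$ from $\Omega$, and hence at least $\|EZ_t\|_\text{Fr}-\rho>0$ from $\mathcal{D}_\text{nt}$. Consequently, fixing $\epsilon\in(0,\sqrt{\beta}\|E\|_\text{Fr}-\rho)$ --- a positive number independent of $t$ --- every noise block with $\|D_t-\bar{D}_t\|_\text{Fr}<\epsilon$ satisfies $D_t-EZ_t\notin\mathcal{D}_\text{nt}$ and therefore falsifies $F$ on window $t$.

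It remains to combine this with Assumption~\ref{AsstightBound} via Borel--Cantelli. Restricting attention to a subsequence of pairwise disjoint windows, of which there are infinitely many as $S\to\infty$, the tight-bound assumption guarantees that the falsifying event $\{\|D_t-\bar{D}_t\|_\text{Fr}<\epsilon\}$ has probability at least $p(\epsilon)>0$ on each such window; hence these probabilities sum to infinity and the second Borel--Cantelli lemma gives that the event occurs on at least one --- in fact infinitely many --- of them with probability one, on which windows $F$ is falsified by the previous paragraph. Since this argument holds for each fixed $F$ outside $\{F^*\}\oplus\rho\sqrt{L_\text{pe}/\beta}\mathcal{B}$, and any bounded region of the complement of that ball can be covered by finitely many small balls on each of which the same reasoning applies with a uniform robustness margin, the asserted almost-sure falsification follows.

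I expect the genuine obstacle to be the probabilistic step rather than the geometry. Assumption~\ref{AsstightBound} is phrased as an unconditional bound $\text{Pr}(\|D_t-\bar{D}\|_\text{Fr}<\epsilon)\geq p(\epsilon)$ for a \emph{fixed} boundary point, whereas here the target $\bar{D}_t$ is itself data-dependent through the noisy regressors entering $Z_t$. One must therefore either argue that along a suitable disjoint subsequence the noise blocks $D_t$ are independent while each $\bar{D}_t$ is measurable with respect to the complementary data, or adopt the natural conditional form of the assumption together with a martingale version of Borel--Cantelli (L\'evy's extension). The excitation and separation estimates, by contrast, are elementary once the feasibility condition $D_t-EZ_t\in\mathcal{D}_\text{nt}$ has been written out.
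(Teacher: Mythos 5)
Your proposal is correct and follows essentially the same route as the paper's proof: a persistent-excitation lower bound on $\|(F-F^*)Z_t\|_\text{Fr}$, a convex separation argument exploiting the gap $\Omega\subseteq\mathcal{D}_\text{nt}\subseteq\Omega\oplus\rho\mathcal{B}$ (your support-point minimiser over $\Omega$ is the same geometry as the paper's normal cone of $\mathcal{D}_\text{nt}$), and then Assumption~2 plus independence across disjoint windows and Borel--Cantelli; your version is in fact marginally cleaner, avoiding the paper's $1/\sqrt{L_\text{pe}}$ loss in the excitation step. The measurability issue you flag --- that the target boundary point $\bar{D}_t$ is data-dependent while Assumption~2 is stated for a fixed $\bar{D}$ --- is genuinely present and is glossed over in the paper's own proof as well, so your suggested fixes (conditioning on the complementary data or a conditional Borel--Cantelli) address a real, if minor, gap rather than introduce one.
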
  
\begin{proof}
While we follow the main steps from \cite{SetMem_MPC}, we consider contrary to \cite{SetMem_MPC} an ellipsoidal noise description, unknown coefficient matrices, and conclusions from the data matrices $X_t$ and $Z_t$ over $L_0$ time steps.\\
We define the set of coefficients $F$ which are admissible for the data $X_{t+1},Z_{t}$, and $D_{t}\in\mathcal{D}_\text{nt}$
\begin{align*}
	F_t&=\{F\in\mathbb{R}^{n_x\times n_z}: X_{t+1}-FZ_t\in\mathcal{D}_\text{nt}\}\\
	&=\{F\in\mathbb{R}^{n_x\times n_z}: (F^*-F)Z_t+D_{t}\in\mathcal{D}_\text{nt}\}.
\end{align*}
Moreover, we define the matrix normal cone $\mathcal{N}_{\mathcal{D}_\text{nt}}(\hat{D})$ of $\mathcal{D}_\text{nt}$ at the matrix $\hat{D}\in\partial\mathcal{D}_\text{nt}$ as
\begin{equation*}
	\mathcal{N}_{\mathcal{D}_\text{nt}}(\hat{D})=\{G\in\mathbb{R}^{n_x \times L_0}:\left<G,D-\hat{D}\right>_\text{Fr}\leq0, \forall D\in\mathcal{D}_\text{nt}\}.
\end{equation*} 
In the sequel, we use the fact that there exists for any matrix $D\in\mathbb{R}^{n_x \times L_0}$ a matrix $\hat{D}\in\partial\mathcal{D}_\text{nt}$ such that $D\in\mathcal{N}_{\mathcal{D}_\text{nt}}(\hat{D})$. For the matrix normal cone, this is clear because for any $K\in\mathbb{R}^{n_x \times L_0}$ the solution of $\sup_{D\in\mathcal{D}_\text{nt}}\left<K,D\right>_\text{Fr}$ is attained for some $\hat{D}$ by the Weierstrass theorem. Moreover, $\hat{D}\in\partial\mathcal{D}_\text{nt}$ as otherwise the small perturbation $\epsilon K$ of $\hat{D}$ would lead to a feasible and larger solution. Thus, $\left<K,D\right>_\text{Fr}-\left<K,\hat{D}\right>_\text{Fr}\leq 0$ for all $D\in\mathcal{D}_\text{nt}$, and hence $\bigcup_{\hat{D}\in\partial\mathcal{D}_\text{nt}}\mathcal{N}_{\mathcal{D}_\text{nt}}(\hat{D})=\mathbb{R}^{n_x \times L_0}$.\\	Furthermore, the persistent excitation assumption implies 
\begin{equation*}
	Z_iZ_i^T=\sum_{i=t}^{t+L_0-1} z_iz_i^T \succeq \sum_{i=t}^{t+L_\text{pe}-1} z_iz_i^T \succeq\beta I_{n_z}
\end{equation*}
with $z_i=z(\tilde{x}_i,\tilde{u}_i)$, and thus for any $F$
\begin{equation*}
	\sum_{i=t}^{t+L_\text{pe}-1} (F^*-F)Z_iZ_i^T(F^*-F)^T \succeq \beta (F^*-F)(F^*-F)^T.
\end{equation*}
This leads to 
\begin{equation*}
	\sum_{i=t}^{t+L_\text{pe}-1} ||(F^*-F)Z_i||_\text{Fr}^2 \geq \beta ||(F^*-F)||_\text{Fr}^2,
\end{equation*}
as $A\preceq B$ implies $\text{tr}(A) \leq\text{tr}(B)$ and $\text{tr}(AB)=\text{tr}(BA)$, and hereby there exists a $j\in\mathbb{N}_{[t,t+L_\text{pe}-1]}$ such that $||(F^*-F)Z_j||^2_\text{Fr} \geq \frac{\beta}{L_\text{pe}} ||(F^*-F)||^2_\text{Fr}$.\\
With this preparation, we can now show the claim. Consider any coefficient matrix $F$ such that there exists an $\epsilon>0$ with $||F^*-F||_\text{Fr}\geq\epsilon+\rho\sqrt{\frac{L_\text{pe}}{\beta}}$. Together with Assumption~\ref{AssPE}, there exists a $j\in\mathbb{N}_{[t,t+L_\text{pe}-1]}$ such that 
\begin{equation}\label{StartIn}
	||(F^*-F)Z_j||_\text{Fr} \geq \sqrt{\frac{\beta}{L_\text{pe}}} ||(F^*-F)||_\text{Fr}\geq \epsilon\sqrt{\frac{\beta}{L_\text{pe}}}+\rho.
\end{equation}
Moreover, we can construct a matrix $\hat{D}\in\partial\mathcal{D}_\text{nt}$ with $(F^*-F)Z_j\in\mathcal{N}_{\mathcal{D}_\text{nt}}(\hat{D})$ and a matrix $\bar{D}\in\partial\Omega$ with
\begin{equation}\label{StartStartIn}
	||\hat{D}-\bar{D}||_\text{Fr}\leq\rho
\end{equation}
by Assumption~\ref{AsstightBound}. With the Cauchy-Schwarz inequality, we calculate
\begin{align*}
	&\left<(F^*-F)Z_j, (F^*-F)Z_j+D_j-\hat{D}\right>_\text{Fr}\\
	=\ &||(F^*-F)Z_j||_\text{Fr}^2+\left<(F^*-F)Z_j, D_j-\bar{D}\right>_\text{Fr}\\&+\left<(F^*-F)Z_j, \bar{D}-\hat{D}\right>_\text{Fr}\\
	\geq\ & ||(F^*-F)Z_j||_\text{Fr}^2-||(F^*-F)Z_j||_\text{Fr}\ ||D_j-\bar{D}||_\text{Fr}\\&-||(F^*-F)Z_j||_\text{Fr}\ ||\bar{D}-\hat{D}||_\text{Fr}.
\end{align*} 
If $D_j$ satisfies $||D_j-\bar{D}||_\text{Fr}<\epsilon\sqrt{\frac{\beta}{L_\text{pe}}}$ and together with \eqref{StartStartIn}, then we can write further
\begin{align*}
	&\left<(F^*-F)Z_j, (F^*-F)Z_j+D_j-\hat{D}\right>_\text{Fr}\\
	&{>} ||(F^*-F)Z_j||_\text{Fr}\left(||(F^*-F)Z_j||_\text{Fr}-\epsilon\sqrt{\frac{\beta}{L_\text{pe}}}-\rho\right)\overset{\eqref{StartIn}}{\geq}0.
\end{align*}
Thus, $(F^*-F)Z_j+D_j\notin\mathcal{D}_\text{nt}$ as $(F^*-F)Z_j\in\mathcal{N}_{\mathcal{D}_\text{nt}}(\hat{D})$. Hereby, $F\notin F_{j}$ by the Definition of $F_t$, and therefore the coefficients $F$ are falsified by the data $X_{j+1},Z_j$ for any noise realization $D_j$ with $||D_j-\bar{D}||_\text{Fr}<\epsilon\sqrt{\frac{\beta}{L_\text{pe}}}$. This yields 
\begin{equation}\label{Inequ2}
	\text{Pr}(F\notin F_{j})\geq \text{Pr}\left(||D_j-\bar{D}||_\text{Fr}<\epsilon\sqrt{\frac{\beta}{L_\text{pe}}}\right)\geq p\left(\epsilon\sqrt{\frac{\beta}{L_\text{pe}}}\right)
\end{equation}
by Assumption~\ref{AsstightBound}. \\
By \eqref{Inequ2}, we show that any coefficient matrix $F$ with $||F^*-F||_\text{Fr}\geq\epsilon+\rho\sqrt{\frac{L_\text{pe}}{\beta}}$ can be falsified by a finite set of data with non-vanishing probability. For that reason, it remains to prove that this also holds with probability $1$ for $S\rightarrow\infty$. To this end, let $\mathbb{F}_t=\bigcap_{i=1}^{t}F_i$ with
\begin{align*}
	\text{Pr}(F\in \mathbb{F}_t)\leq\ &\text{Pr}(F\in \bigcap_{i\in\mathbb{N}_{[t-L_\text{pe}+1,t]}}F_i | F\in{\mathbb{F}}_{t-L_0-L_\text{pe}} )\\&\cdot\text{Pr}(F\in{\mathbb{F}}_{t-L_0-L_\text{pe}} ).
\end{align*}
Since the noise realizations $D_i$ for $i\in\mathbb{N}_{[1,t-L_0-L_\text{pe}]}$ and for $i\in\mathbb{N}_{[t-L_\text{pe}+1,t]}$ are conditionally independent by Assumption~\ref{AssIndepend}, \eqref{Inequ2} results in
\begin{align*}
	\text{Pr}(F\in \mathbb{F}_t)\leq& \left(1-p\left(\epsilon\sqrt{\frac{\beta}{L_\text{pe}}}\right)\right)\text{Pr}(F\in{\mathbb{F}}_{t-L_0-L_\text{pe}} )\\
	\leq&\dots\leq \left(1-p\left(\epsilon\sqrt{\frac{\beta}{L_\text{pe}}}\right)\right)^{\lfloor t/(L_0+L_\text{pe})\rfloor}.
\end{align*}
Thus, $\sum_{t=1}^{\infty}\text{Pr}(F\in \mathbb{F}_t)$ is finite. The Borel-Cantelli lemma yields $\text{Pr}(F\in \bigcap_{t=1}^{\infty}\bigcup_{k\geq t}^{\infty}\mathbb{F}_k)=\text{Pr}(F\in \bigcap_{t=1}^{\infty}\mathbb{F}_k)=0$ for any $F$ with $||F^*-F||_\text{Fr}\geq\epsilon+\rho\sqrt{\frac{L_\text{pe}}{\beta}}$ and any $\epsilon>0$. Thus, any such kind of coefficient $F$ can be falsified with probability one for infinitely many data points.
\end{proof}

With this auxiliary result, we analyze now the asymptotic accuracy of $\Sigma_\text{w}$ and $\Sigma_\text{p}$ if the noise bound from Assumption~\ref{Noise1} is tight.

\begin{thm}[Asymptotic accuracy of $\mathit\Sigma_\text{w}$]\label{ConverW}
Under Assumption~{\ref{Noise1} with $\tilde{d}_i^T\tilde{d}_i\leq\epsilon^2, \epsilon>0$ and Assumption}~\ref{AsstightBound}, \ref{AssIndepend}, and \ref{AssPE}, the superset $\Sigma_\text{w}$ is a subset of
\begin{equation*}
	\{F^*\}\oplus\epsilon(L-1)\sqrt{\frac{L_\text{pe}}{\beta}}\mathcal{B}
\end{equation*}
with probability one for $S\rightarrow\infty$.
\end{thm}
\begin{proof}
The statement follows by Lemma~\ref{SetofunFaCoe} for $L_0=L$. It remains to compute $\rho$ of Assumption~\ref{AsstightBound} if we suppose the average bound \eqref{DataWindow} instead of the tight pointwise description from Assumption \ref{Noise1} which is equivalent to \eqref{DualNoise} with $\Delta_{1,i}=-1/\epsilon^2$, $\Delta_{2,i}=0$, and $\Delta_{3,i}=I_{n_x}$ for $i=1,\dots,S$. Note that $\rho$ can be specified by
\begin{align*}
	\rho^2&=\max_{\hat{D}\in\mathcal{D}_\text{nt}}\min_{\bar{D}\in\Omega}||\hat{D}-\bar{D}||_\text{Fr}^2\\
	&=\max_{\hat{D}\ \text{satisfies}\,\eqref{DataWindow}}\min_{\substack{\bar{d}_i\ \text{satisfies}\ \eqref{DualNoise}\\i=1,\dots,L}}\sum_{i=1}^{L}||\hat{d}_i-\bar{d}_i||^2_2
\end{align*}
with $\hat{D}=\begin{bmatrix}\hat{d}_1 & \cdots & \hat{d}_L\end{bmatrix}$ and $\bar{D}=\begin{bmatrix}\bar{d}_1 & \cdots & \bar{d}_L\end{bmatrix}$. For the considered special case of $\Delta_{1,i}$, $\Delta_{2,i}$, and $\Delta_{3,i}$, the minimizing $\bar{d}_i$ are given by $\epsilon\hat{d}_i/||\hat{d}_i||_2$ as $\bar{d}_i$ lie within balls with radius $\epsilon$. To solve the remaining maximization, observe that the point $x^*=\begin{bmatrix}0 & \cdots &L & \cdots & 0 \end{bmatrix}^T$ maximizes within $\{x\in\mathbb{R}^L:||x||_2\leq L\}$ the distance to any point in the $\infty$-norm unit ball $\{x\in\mathbb{R}^L:||x||_\infty\leq1\}$. Thereby, the energy of the maximizing realization of $\hat{D}$ is concentrated into one time point, i.e., $\hat{D}=\begin{bmatrix}0 & \cdots & \hat{d}_k & \cdots & 0\end{bmatrix}$. This yields
\begin{align*}
	\rho^2&{=}\max_{\begin{bmatrix}0 & \cdots & \hat{d}_k & \cdots & 0\end{bmatrix}\ \text{satisfies}\,\eqref{DataWindow}}\left(1-\frac{\epsilon}{||\hat{d}_k||_2}\right)^2||\hat{d}_k||^2_2\\
	&{=}\left(1-\frac{\epsilon}{\epsilon L}\right)^2(\epsilon L)^2=\epsilon^2(L-1)^2.
\end{align*}
	%we clarify how the energy over the realization $\hat{D}=\begin{bmatrix}\hat{d}_1 & \cdots & \hat{d}_L\end{bmatrix}$ is extremal distributed. Since the noise realizations of $\hat{D}$ is bounded cumulatively by $\mathcal{D}_\text{nt}$ and $\bar{D}$ pointwise, the extremal energy distribution in $\hat{D}$ corresponds to . Geometrically, it is clear that the solution is, e.g., $x^*=\begin{bmatrix}L & 0 & \cdots & 0 \end{bmatrix}$. Hence, the whole energy of $\hat{D}$ is concentrated at a single and arbitrary (here for simplicity the first) time point, i.e., $\hat{D}=\begin{bmatrix}\hat{d}_1 & 0& \cdots & 0\end{bmatrix}$ and $\bar{D}=\begin{bmatrix}\hat{d}_1 &0& \cdots & 0\end{bmatrix}$ which leads to
\end{proof}

%For the sake of simplification of the computation of $\rho$, we consider in Theorem~\ref{ConverW} only the special case of noise with bounded amplitude $d^Td-\epsilon^2\leq0$. 
For the frequently-assumed case of noise with bounded amplitude, Theorem~\ref{ConverW} shows that $\rho$ is zero for window length one and increases with increasing window length. 
Thus, the accuracy of the window-based description \eqref{DataWindow} decreases for larger window length as $\rho$ measures the tightness of the supposed noise description. On the other hand, the number of windows decreases for larger window lengths which achieve less required optimization variables in the determination of input-output properties. Furthermore, Lemma~\ref{SetofunFaCoe} clarifies that $\Sigma_\text{w}$ converges to $\{F^*\}$ if the window noise bound \eqref{DataWindow} is tight.\\\indent 
The following theorem shows that the diameter of superset $\Sigma_\text{p}$ not only decreases monotonically with $S$ but actually converges to zero.

\begin{thm}[Asymptotic consistency of $\mathit\Sigma_\text{p}$]\label{Converp}
Under Assumption~\ref{Noise1}, \ref{AsstightBound}, \ref{AssIndepend}, and \ref{AssPE}, the superset $\Sigma_\text{p}$, from Proposition~\ref{Primal_elli} with maximal $\kappa>0$ for $\Delta_{1\text{p}}\succeq\kappa I_{n_z}$, converges to $\{F^*\}$ with probability one for data \eqref{DataSet} with  $S\rightarrow\infty$.	
\end{thm}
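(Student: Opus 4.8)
The plan is to show that the ellipsoidal superset $\Sigma_\text{p}$, constructed with the maximal-$\kappa$ normalization, shrinks onto the singleton $\{F^*\}$ as $S\to\infty$. First I would invoke Lemma~\ref{SetofunFaCoe} with $L_0=L_\text{pe}$ in the degenerate window length $L=1$; since the pointwise noise bound from Assumption~\ref{Noise1} is itself tight, the corresponding $\rho$ in Assumption~\ref{AsstightBound} is zero (this is exactly the $L=1$ case of the computation carried out in the proof of Theorem~\ref{ConverW}, where $\rho^2=\epsilon^2(L-1)^2$ vanishes). Hence for every $\epsilon>0$, every coefficient matrix $F$ with $\|F^*-F\|_\text{Fr}\ge\epsilon$ is falsified by the data with probability one as $S\to\infty$, i.e. $F\notin F_t$ for some $t$. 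Since $\Sigma=\bigcap_{t=1}^S F_t\subseteq\bigcap_{t}F_t$, this already gives that $\Sigma$ itself converges to $\{F^*\}$ with probability one. The remaining and genuinely new work is to transfer this convergence from $\Sigma$ to its ellipsoidal outer approximation $\Sigma_\text{p}$, because $\Sigma_\text{p}\supseteq\Sigma$ strictly in general.

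The key step is therefore to control the geometry of the minimal-diameter ellipsoid containing a set $\Sigma$ that is itself collapsing to a point. I would argue as follows. Fix $\epsilon>0$. With probability one there is a finite $S_0$ such that for all $S\ge S_0$ the set $\Sigma$ is contained in the Frobenius ball $\mathcal{E}_\epsilon:=\{F:\|F^*-F\|_\text{Fr}\le\epsilon\}$ (equivalently, in a matrix ellipsoid of the form \eqref{Sigma_p} obtained from $\varDelta$ associated with a round ball of radius $\epsilon$ centred at $F^*$, which is an admissible, feasible choice of $\Delta_\text{p}$ with $\Delta_{1\text{p}}\succ0$ and invertible $\Delta_\text{p}$). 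Because Proposition~\ref{Primal_elli} is solved with the objective of maximizing $\kappa$ subject to $\Delta_{1\text{p}}\succeq\kappa I_{n_z}$ — i.e. minimizing the diameter of the outer ellipsoid among all ellipsoids containing $\Sigma$ — the optimal $\Sigma_\text{p}$ has diameter no larger than that of $\mathcal{E}_\epsilon$. Quantitatively, one shows that $\Delta_{1\text{p}}\succeq\kappa I_{n_z}$ together with the block structure $\Delta_{3\text{p}}=\Delta_{2\text{p}}^T\Delta_{1\text{p}}^{-1}\Delta_{2\text{p}}-I_{n_x}$ forces $\Sigma_\text{p}$ to sit inside a Frobenius ball of radius $O(1/\sqrt{\kappa})$ about some center; and since $\Sigma\subseteq\Sigma_\text{p}$ must contain $F^*$, using a candidate ellipsoid of radius $\epsilon$ shows the optimal $\kappa$ satisfies $\kappa\ge c/\epsilon^2$ for a dimension-dependent constant $c$, so the diameter of $\Sigma_\text{p}$ is $O(\epsilon)$. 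Letting $\epsilon\to 0$ along the (countable, nested) sequence of sample sizes and using monotonicity of the diameter in $S$ (noted after Proposition~\ref{Primal_elli}) yields $\Sigma_\text{p}\to\{F^*\}$ with probability one.

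I expect the main obstacle to be the quantitative link between the optimized matrix parameters $(\Delta_{1\text{p}},\Delta_{2\text{p}},\Delta_{3\text{p}})$ and the actual diameter of the matrix ellipsoid $\Sigma_\text{p}$ in \eqref{Sigma_p} — that is, carefully unwinding the dualization lemma to express $\varDelta_\text{p}=\Delta_\text{p}^{-1}$ and reading off that $\begin{bmatrix}I_{n_z}\\F\end{bmatrix}^T\varDelta_\text{p}\begin{bmatrix}I_{n_z}\\F\end{bmatrix}\preceq0$ with $\Delta_{1\text{p}}\succeq\kappa I$ implies a Frobenius-norm bound $\|F-\text{(center)}\|_\text{Fr}\le C(n_x,n_z)/\sqrt{\kappa}$, and conversely that a small enclosing ball certifies a large feasible $\kappa$. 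A secondary subtlety is checking that the candidate round-ball ellipsoid built from $\Sigma\subseteq\mathcal{E}_\epsilon$ is actually feasible for the LMI \eqref{Elli_LMI} (the S-procedure multipliers $\alpha_i$ exist because $\mathcal{E}_\epsilon\supseteq\Sigma$ is a valid S-procedure relaxation), and that the required invertibility of $\Delta_\text{p}$ along the optimal path is not lost; here I would lean on the same nondegeneracy already assumed in Proposition~\ref{Primal_elli} and remark that for $S$ large the persistent-excitation Assumption~\ref{AssPE} guarantees $\Delta_{1\text{p}}\succ0$ strictly.
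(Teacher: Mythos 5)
Your proposal follows essentially the same route as the paper: Lemma~\ref{SetofunFaCoe} with the trivial window (i.e.\ $L_0=1$, for which the pointwise bound of Assumption~\ref{Noise1} is tight and hence $\rho=0$) gives $\bigcap_{t}F_t\rightarrow\{F^*\}$ with probability one, and the maximal-$\kappa$ (minimal-diameter) optimality of the outer ellipsoid $\Sigma_\text{p}\supseteq\bigcap_{t}F_t$ then forces $\Sigma_\text{p}\rightarrow\{F^*\}$. Your quantitative unwinding of the second step (the $O(1/\sqrt{\kappa})$ Frobenius-ball containment implied by $\Delta_{1\text{p}}\succeq\kappa I_{n_z}$, and the $\epsilon$-ball candidate certifying $\kappa\geq1/\epsilon^2$) is more explicit than the paper's one-line assertion, and the S-procedure-feasibility subtlety you flag --- that the containment $\mathcal{E}_\epsilon\supseteq\Sigma$ does not by itself guarantee the existence of multipliers $\alpha_i\geq0$ satisfying \eqref{Elli_LMI} --- is a genuine point that the paper's own proof also leaves unaddressed.
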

\begin{proof}
Since $\Sigma_\text{p}$ is calculated based on the tight noise characterization from Lemma~\ref{DualSigma}, $\rho=0$. Hence, Lemma~\ref{SetofunFaCoe} for $L_0=1$ and $\rho=0$ implies that $\bigcap_{t=1}^{S}F_t\rightarrow \{F^*\}$ for $S\rightarrow\infty$ with probability $1$, and thus the diameter of $\bigcap_{t=1}^{S}F_t$ tends to zero. Thus, the sequence of $\Sigma_\text{p}\supseteq\bigcap_{t=1}^{S}F_t$ with minimal diameter tends to  
	$\{F^*\}$ for $S\rightarrow\infty$.
\end{proof}

%However, this would require to solve the optimization problem from Proposition~\ref{Primal_elli} with a large number of optimization variables which would lead to numerical problems if, e.g., the $\Sigma_\text{p}$ is updated online in a robust MPC scheme as considered in \cite{SetMem_MPC}.

%\begin{rmk}
%To circumvent the computation of a superset from all available samples, \cite{SetMem_MPC} suggest to update the set-membership sequentially, i.e., in our case $\Sigma_t=F_t\bigcap \Sigma_{t-1}$ with $\Sigma_{0}=\mathbb{R}^{n_x\times n_z}$. While \cite{SetMem_MPC} can provide asymptotic convergence of such a sequence to the true coefficient using polytopic set-memberships, this doesn't transfer to ellipsoidal set-memberships. Indeed, $\Sigma_t\nsupseteq\Sigma_{t+1}$ in general, and thus if we falsify a parameter $F\neq F^*$ by $F\notin\Sigma_t$ then this parameter might be an element of $\Sigma_{t+1}$ and hence feasible. Therefore, we cannot falsify any parameter in the limit. 
%\end{rmk}

%Based on Lemma~\ref{SetofunFaCoe}, we examined the asymptotic consistency of the set-memberships $\Sigma_\text{p}$ and $\Sigma_\text{w}$. However, this Lemma can not be applied for $\Sigma_\text{av}$ as the average noise bound from Lemma-\ref{NoiseAv} incorporates only one window of length $L_0=D$. 
To prove consistency of $\Sigma_\text{c}$ for $S\rightarrow\infty$ with a tight noise description, Lemma~\ref{SetofunFaCoe} is not feasible, and therefore we adapt the results from \cite{SetMem_Con} for an unknown coefficient matrix and a more general quadratic noise characterization. For that purpose, we define the matrices
\begin{align*}
	X^+_\text{c}(S)&=\begin{bmatrix}\tilde{x}_1^+ &\cdots & \tilde{x}_{S}^+\end{bmatrix},\\
	Z_\text{c}(S)&=\begin{bmatrix}z(\tilde{x}_1,\tilde{u}_1) &\cdots & z(\tilde{x}_{S},\tilde{u}_{S})\end{bmatrix},\\
	D_\text{c}(S)&=\begin{bmatrix}\tilde{d}_1 &\cdots & \tilde{d}_{S}\end{bmatrix}
\end{align*} 
with $X^+_\text{c}(S)=F^{*}Z_\text{c}(S)+D_\text{c}(S)$ and assume that the noise realizations $D_\text{c}(S)$ are an element of the compact set 
\begin{equation}\label{NontightAvBound}
	\mathcal{D}_\text{nt,c}(S)=\left\{D\in\mathbb{R}^{n_x\times S}{:}\begin{bmatrix}D^T\\I_{n_x}\end{bmatrix}^T{\Delta}_{\text{nt}}(S)\begin{bmatrix}D^T\\I_{n_x}\end{bmatrix}\preceq0 \right\}
\end{equation}
with $\Delta_{\text{nt}}(S)=\text{diag}\left({\Delta}_{1,\text{t}}(S), {\Delta}_{3,\text{t}}(S)+{\Delta}_{3,\text{nt}}(S)\right)$ and $\Delta_{3,\text{nt}}(S)\preceq0$. While $\mathcal{D}_\text{nt,c}(S)$ corresponds to a known but not tight noise description, we assume that there exists an unknown tight cumulative bound on the noise realizations. 

\begin{assu}[Tight cumulative noise bound]\label{AsstightAvBound}
Suppose that the noise bound 
\begin{equation*}
	\mathcal{D}_\text{t,c}(S)=\left\{D\in\mathbb{R}^{n_x\times S}{:}\begin{bmatrix}D^T\\I_{n_x}\end{bmatrix}^T{\Delta}_{\text{t}}(S)\begin{bmatrix}D^T\\I_{n_x}\end{bmatrix}\preceq0 \right\},
\end{equation*}	
with ${\Delta}_{\text{t}}(S)=\text{diag}\left({\Delta}_{1,\text{t}}(S), {\Delta}_{3,\text{t}}(S)\right)$ and $\Delta_{1,\text{t}}(S)\succ0\ \forall S\in\mathbb{N}$, is a tight bound of $D_\text{c}(S)$ for $S\rightarrow\infty$, i.e., there exists a sequence $\{S_k\}_{k\in\mathbb{N}}$ of integers with $S_k\rightarrow\infty$ for $k\rightarrow\infty$ such that for any $\rho>0$
\begin{equation}\label{ConProb}
	\lim_{k\rightarrow\infty}\text{Pr}\left(\begin{bmatrix}D_\text{c}(S_k)^T\\I_{n_x}\end{bmatrix}^T{\Delta}_{\text{t}}(S_k)\begin{bmatrix}D_\text{c}(S_k)^T\\I_{n_x}\end{bmatrix}\succeq -\rho I_{n_x}\right)=1.
\end{equation}
\end{assu}

Assumption~\ref{AsstightAvBound} requires a tight cumulative noise bound $\mathcal{D}_\text{t,c}(S)$ for $S\rightarrow\infty$ which however does not exist for the pointwise bounded noise in Assumption~\ref{Noise1}. Nonetheless, we show asymptotic consistency of $\Sigma_\text{c}$ as cumulative noise bounds are commonly supposed, exemplary, in \cite{AnneDissi} and \cite{Groningen}.

\begin{thm}[Asymptotic accuracy of $\mathit\Sigma_\text{c}$]\label{SetofunFaCoeAv}
Under Assumption~\ref{AssPE}, \ref{AsstightAvBound}, and
\begin{equation}\label{Assum3}
	\text{Pr}\left(\lim_{S\rightarrow\infty}\frac{1}{S}\bigg|\bigg|Z_\text{c}(S){\Delta}_{1,\text{t}}(S)D_\text{c}(S)^T\bigg|\bigg|_\text{Fr}=0 \right)=1,
\end{equation}
the coefficients $F\in\mathbb{R}^{n_x\times n_z}$ feasible with the data \eqref{DataSet} and the non-tight noise bound~\eqref{NontightAvBound} satisfy
\begin{equation*}
	\lim_{j\rightarrow\infty} ||F^*-F||_\text{Fr}^2+\frac{1}{\gamma\beta\left\lfloor\frac{S_{k_j}}{L_\text{pe}}\right\rfloor}\text{tr}\left({\Delta}_{3,\text{nt}}(S_{k_j})\right)\leq0
\end{equation*}
with probability one for some $\gamma>0$ with ${\Delta}_{1,\text{t}}(S)\succeq\gamma I_{S}\ \forall S\in\mathbb{N}$ and some subsequence $\{S_{k_j}\}_{j\in\mathbb{N}}$ of sequence $\{S_k\}_{k\in\mathbb{N}}$ from Assumption~\ref{AsstightAvBound}.
\end{thm}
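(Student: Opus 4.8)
The plan is to reduce the feasibility of a coefficient matrix $F$ with the non-tight cumulative bound \eqref{NontightAvBound} to a scalar inequality obtained by taking a trace, and then to drive the error contributions to zero along a suitable subsequence. Since $X^+_\text{c}(S)=F^*Z_\text{c}(S)+D_\text{c}(S)$, the residual implied by $F$ is $E:=(F^*-F)Z_\text{c}(S)+D_\text{c}(S)$, and because $\Delta_\text{nt}(S)$ is block diagonal, $F$ is feasible with data of length $S$ exactly when
\begin{equation*}
	E\Delta_{1,\text{t}}(S)E^T+\Delta_{3,\text{t}}+\Delta_{3,\text{nt}}(S)\preceq0 .
\end{equation*}
Expanding $E\Delta_{1,\text{t}}(S)E^T$, using symmetry of $\Delta_{1,\text{t}}(S)$, and taking the trace yields
\begin{equation*}
	\text{tr}\!\big((F^*{-}F)Z_\text{c}(S)\Delta_{1,\text{t}}(S)Z_\text{c}(S)^T(F^*{-}F)^T\big)+2\,\text{tr}\!\big((F^*{-}F)Z_\text{c}(S)\Delta_{1,\text{t}}(S)D_\text{c}(S)^T\big)+\text{tr}(R(S))+\text{tr}(\Delta_{3,\text{nt}}(S))\leq0 ,
\end{equation*}
where $R(S):=D_\text{c}(S)\Delta_{1,\text{t}}(S)D_\text{c}(S)^T+\Delta_{3,\text{t}}$ is exactly the quantity constrained in Assumption~\ref{AsstightAvBound}.

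Next I would bound the three non-quadratic pieces. For the quadratic term, using $\Delta_{1,\text{t}}(S)\succeq\gamma I_S$ and splitting $\{1,\dots,S\}$ into $\lfloor S/L_\text{pe}\rfloor$ consecutive blocks of length $L_\text{pe}$, Assumption~\ref{AssPE} gives $Z_\text{c}(S)\Delta_{1,\text{t}}(S)Z_\text{c}(S)^T\succeq\gamma Z_\text{c}(S)Z_\text{c}(S)^T\succeq\gamma\beta\lfloor S/L_\text{pe}\rfloor I_{n_z}$ (the leftover terms are positive semi-definite and can be dropped); since the trace of a product of two positive semi-definite matrices is non-negative, this yields $\text{tr}\big((F^*{-}F)Z_\text{c}(S)\Delta_{1,\text{t}}(S)Z_\text{c}(S)^T(F^*{-}F)^T\big)\geq\gamma\beta\lfloor S/L_\text{pe}\rfloor\,||F^*-F||_\text{Fr}^2$. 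The cross term I would control with the Cauchy--Schwarz inequality for $\langle\cdot,\cdot\rangle_\text{Fr}$, giving the bound $||F^*-F||_\text{Fr}\,||Z_\text{c}(S)\Delta_{1,\text{t}}(S)D_\text{c}(S)^T||_\text{Fr}$, which is $o(S)$ with probability one by \eqref{Assum3}. For $R(S)$, Assumption~\ref{AsstightAvBound} says that for every fixed $\rho>0$ the event $R(S_k)\succeq-\rho I_{n_x}$, hence $\text{tr}(R(S_k))\geq-\rho n_x$, has probability tending to one along $\{S_k\}$.

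Dividing the trace inequality by $\gamma\beta\lfloor S/L_\text{pe}\rfloor$ and rearranging, it suffices to show the cross-term and $R(S)$ contributions vanish along a subsequence. The cross-term part tends to zero almost surely since $\lfloor S/L_\text{pe}\rfloor\sim S/L_\text{pe}$ and $||Z_\text{c}(S)\Delta_{1,\text{t}}(S)D_\text{c}(S)^T||_\text{Fr}/S\to0$ by \eqref{Assum3}. For the $R(S)$ part I would use a Borel--Cantelli/diagonal argument: choose $\rho_j\downarrow0$ and, exploiting the convergence in probability of Assumption~\ref{AsstightAvBound}, pick a subsequence $\{S_{k_j}\}$ of $\{S_k\}$ along which the event that $R(S_{k_j})+\rho_j I_{n_x}$ fails to be positive semi-definite has probability at most $2^{-j}$, and also $\rho_j/\lfloor S_{k_j}/L_\text{pe}\rfloor\to0$; then Borel--Cantelli guarantees that, with probability one, $R(S_{k_j})\succeq-\rho_j I_{n_x}$ for all large $j$, so $\text{tr}(R(S_{k_j}))/(\gamma\beta\lfloor S_{k_j}/L_\text{pe}\rfloor)\to0$. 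Intersecting this probability-one event with the one coming from \eqref{Assum3} and passing to the limit along $\{S_{k_j}\}$ gives $||F^*-F||_\text{Fr}^2+\text{tr}(\Delta_{3,\text{nt}}(S_{k_j}))/(\gamma\beta\lfloor S_{k_j}/L_\text{pe}\rfloor)\leq0$ in the limit. I expect the main obstacle to be precisely this last step: upgrading the per-$\rho$ convergence in probability of Assumption~\ref{AsstightAvBound} into an almost-sure statement along a single fixed subsequence, coordinated with the decay rates of $\rho_j$ and $1/\lfloor S_{k_j}/L_\text{pe}\rfloor$; the algebraic estimates (expanding the quadratic form, the persistent-excitation lower bound, Cauchy--Schwarz) are routine.
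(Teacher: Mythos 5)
Your proposal is correct and follows essentially the same route as the paper: the same residual decomposition and trace inequality, the same persistent-excitation lower bound $Z_\text{c}(S)\Delta_{1,\text{t}}(S)Z_\text{c}(S)^T\succeq\gamma\beta\lfloor S/L_\text{pe}\rfloor I_{n_z}$, the same use of \eqref{Assum3} for the cross term and of Assumption~\ref{AsstightAvBound} for the quadratic noise term. The final step you flag as the main obstacle --- upgrading convergence in probability to an almost-sure statement along a subsequence --- is exactly your Borel--Cantelli construction, which the paper dispatches by citing the standard textbook result (Theorem 17.3 of its probability reference), so there is no substantive difference.
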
  
\begin{proof}
While the overall idea follows from \cite{SetMem_Con}, we provide additional references and consider the case of a more general noise description using a matrix notation. For the sake of short notation, we omit to some extent the dependence on $S$. \\
From the given data \eqref{DataSet} and the non-tight noise bound~\eqref{NontightAvBound}, we conclude that the coefficients $F\in\mathbb{R}^{n_x\times n_z}$ admissible with the data are given by
\begin{align*}
	\Sigma_\text{c}(S)=\left\{F:\star^T{\Delta}_{\text{nt}}(S)\begin{bmatrix}(X^+_\text{c}(S)-FZ_\text{c}(S))^T\\I_{n_x}\end{bmatrix}\preceq0 \right\}.
\end{align*}
Together with $X^+_\text{c}(S)=F^*Z_\text{c}(S)+D_\text{c}(S)$, any coefficients $F\in\Sigma_{\text{c}}(S)$ satisfy
\begin{align}\label{Inequ3}
	&\begin{bmatrix}D_\text{c}^T\\I_{n_x}\end{bmatrix}^T{\Delta}_{\text{t}}\begin{bmatrix}D_\text{c}^T\\I_{n_x}\end{bmatrix}\notag\\
	\preceq&-\tilde{F}Z_\text{c}{\Delta}_{1,\text{t}}Z_\text{c}^T\tilde{F}^T-D_\text{c}{\Delta}_{1,\text{t}}Z_\text{c}^T\tilde{F}^T-\tilde{F}Z_\text{c}{\Delta}_{1,\text{t}}D_\text{c}^T-{\Delta}_{3,\text{nt}}\notag\\
	\preceq&-\gamma\tilde{\beta}\, \tilde{F}\tilde{F}^T-D_\text{c}{\Delta}_{1,\text{t}}Z_\text{c}^T\tilde{F}^T{-}\tilde{F}Z_\text{c}{\Delta}_{1,\text{t}}D_\text{c}^T-{\Delta}_{3,\text{nt}}
\end{align}
with $\tilde{F}=F^*-F$ and $\tilde{\beta}=\left\lfloor\frac{S_k}{L_\text{pe}}\right\rfloor\beta$. The second inequality holds because ${\Delta}_{1,\text{t}}(S)\succeq\gamma I_{S}\ \forall S\in\mathbb{N}$ and Assumption~\ref{AssPE}, which implies  
\begin{equation*}
	Z_\text{c}Z_\text{c}^T=\sum_{i=1}^{S_k-1} z(\tilde{x}_i,\tilde{u}_i)z(\tilde{x}_i,\tilde{u}_i)^T  \succeq\tilde{\beta} I_{n_z}.
\end{equation*}
\eqref{Inequ3} together with \eqref{ConProb} and the fact that the convergence with probability one in \eqref{Assum3} implies convergence in probability \cite{Stochastic} (Theorem 17.2), yields for the sequence $\{S_k\}_{k\in\mathbb{N}}$ from Assumption~\ref{AsstightAvBound} that
\begin{equation*}
	\lim_{k\rightarrow\infty}\text{Pr}\left(||F^*-F||_\text{Fr}^2\leq \frac{\rho n_x}{\gamma\tilde{\beta}} -\frac{1}{\gamma\tilde{\beta}}\text{tr}\left({\Delta}_{3,\text{nt}}(S_k)\right)\right)=1
\end{equation*}
for any $\rho>0$ and any coefficients $F$ feasible with the data, i.e., $F\in\Sigma_\text{c}(S_k)$. Finally, according to \cite{Stochastic} (Theorem 17.3), there exists a subsequence $\{S_{k_j}\}_{j\in\mathbb{N}}$ of $\{S_k\}_{k\in\mathbb{N}}$ with
\begin{equation*}
	\text{Pr}\left(\lim_{j\rightarrow\infty} ||F^*-F||_\text{Fr}^2+\frac{1}{\gamma\tilde{\beta}}\text{tr}\left({\Delta}_{3,\text{nt}}(S_{k_j})\right)\leq0\right)=1.
\end{equation*} 
\end{proof}

First, note that the additional assumption \eqref{Assum3} corresponds to the average noise property in \cite{SetMem_Con} (Theorem 2.3) and is satisfied exemplary for zero mean noise.\\\indent
Second, if the cumulative noise description \eqref{NontightAvBound} is actually tight, i.e., ${\Delta}_{3,\text{nt}}(S)=0$ for $S\rightarrow\infty$, then Theorem~\ref{SetofunFaCoeAv} shows that $||F^*-F||_F$ converges to zero with probability one. For that reason, $\Sigma_{\text{c}}$ is asymptotically consistent. Furthermore, if a zero mean noise signal has a covariance matrix $\delta I_{n_x}$ but is overestimated by $\tilde{\delta}I_{n_x}$, i.e., $\lim_{t\rightarrow\infty}\frac{1}{t}\sum_{i=1}^{t}\tilde{d}_i\tilde{d}_i^T=\delta I_{n_x}\leq(\delta+\tilde{\delta})I_{n_x}$, then ${\Delta}_{1,\text{t}}(S)=I_{S}$ and ${\Delta}_{3,\text{nt}}(S)=-S\tilde{\delta}I_{n_x}$, and hence Theorem~\ref{SetofunFaCoeAv} implies that any coefficients $F$ feasible for infinitely many samples are contained in $\left\{F:||F^*-F||_\text{Fr}^2\leq\frac{L_\text{pe}\tilde{\delta}n_x}{\beta}\right\}$ with probability one.

\subsection{Comparison of the accuracy of $\mathit\Sigma_\text{p}, \mathit\Sigma_\text{w}$, and $\mathit\Sigma_\text{c}$ in a numerical example}\label{NumComparison}

To assess the accuracy of the three supersets $\Sigma_{\text{p}}$, $\Sigma_{\text{w}},$ and $\Sigma_{\text{c}}$ for pointwise bounded noise \eqref{SepPrimnoise}, we consider a data-driven estimation of the $\ell_2$-gain of a polynomial system. To this end, we apply \cite{MartinDissi} (Theorem~2) for the three supersets and compare the results with the $\ell_2$-gain derived directly from the system dynamics by SOS optimization and the $\ell_2$-gain calculated from \cite{MartinDissi} (Corollary~1), where pointwise bounded noise can be exploited directly in the data-driven computation of the $\ell_2$-gain. In particular, we evaluate the $\ell_2$-gain of
\begin{align}\label{NumEx}
	\begin{bmatrix}x_1(t+1)\\x_2(t+1)
	\end{bmatrix}&=\begin{bmatrix}
	-0.3x_1+0.2x_2^2+0.2x_1x_2\\
	0.2x_2+0.1x_2^2-0.3x_1^3+0.4u
	\end{bmatrix}(t)
\end{align} 
for $u\rightarrow x$ within the operation set $x_1^2\leq 1$, $x_2^2\leq 1$, and $u^2\leq \sqrt{2}$. We draw samples \eqref{DataSet} from a single trajectory with initial condition $x(0)=\begin{bmatrix}-1 & -1\end{bmatrix}^T$, $u(t)=1.5\sin(0.002t^2+0.1t)$, and noise that exhibits constant signal-to-noise-ratio $||\tilde{d}_i||_2\leq0.02||\tilde{x}_i||_2$. Moreover, we assume $z(x,u)=\begin{bmatrix}x_1&x_2& x_2^2&x_1x_2&x_1^3&u\end{bmatrix}^T$. Table~\ref{ErrorBound} shows the received upper bounds on the $\ell_2$-gain.

\begin{table}[hb]
	\begin{center}
		\caption{Data-driven inference on the $\ell_2$-gain of \eqref{NumEx}.}\label{ErrorBound}
		\begin{tabular}{c|ccc}\label{TableSim}
			& $S=20$ & $S=50$ & $S=100$\\\hline
			model-based & $0.5814$ & $0.5814$ & $0.5814$ \\
			\cite{MartinDissi} (Corollary~1)  & $0.8271$  & $0.5997$ & $0.5983$ \\
			$\Sigma_{\text{p}}$ (minimal diameter) & $2.1069$ & $0.7251$ & $0.7004$ \\
			$\Sigma_{\text{w}}$ $(L=10)$ & $9.3376$ &  $1.0156$ & $0.7917$ \\
			$\Sigma_{\text{w}}$ $(L=20)$ & $\infty$ & $1.0589$ & $0.9119$  \\
			$\Sigma_{\text{c}}$ & $\infty$ & $2.2894$ & $3.8952$  
		\end{tabular}
	\end{center}
\end{table}

As expected, the upper bounds from \cite{MartinDissi} (Corollary~1) differ by the smallest margin from the model-based upper bound. However, the computation times with $52\,\text{s}$ ($S=20$), $61\,\text{s}$ ($S=50$), and $104\,\text{s}$ ($S=100$) are more demanding than the computation times of less than a second for \cite{MartinDissi} (Theorem~2) with $\Sigma_{\text{p}}, \Sigma_{\text{w}},$ and $\Sigma_{\text{c}}$. Table~\ref{TableSim} also shows that $\Sigma_{\text{p}}$ outperforms the other supersets and that the accuracy of $\Sigma_{\text{w}}$ increases with decreasing window length $L$, as expected by Theorem~\ref{ConverW}. Note that the increase of the upper bounds by $\Sigma_{\text{c}}$ for increasing $S$ is already excessively discussed in \cite{MartinDissi}.\\\indent
Summarized, we prefer $\Sigma_{\text{p}}$ over $\Sigma_{\text{w}}$ and $\Sigma_{\text{c}}$ to obtain data-driven inference on input-output properties if the noise exhibits pointwise bounds and the number of samples allows to solve LMI~\eqref{Elli_LMI}.

\section{Conclusions}

By Algorithm~\ref{Algorithm1}, we established a set-membership framework to determine optimal input-output properties of polynomial systems without identifying an explicit model but directly from input-state measurements in the presence of noise. In particular, we focused on guaranteed upper bounds on NLMs of dynamical systems and their \textquoteleft optimal\textquoteright\ linear approximation as well as on input-output properties specified by time domain hard IQCs. We emphasize that the framework achieves computationally tractable LMI conditions with SOS multipliers even regarding to the unknown linear filter. Related to the set-membership literature, we also presented three data-driven supersets that include the true unknown coefficient matrix and showed their asymptotic consistency. \\\indent
While the framework is presented for polynomial systems, it can be extended to nonlinear systems by \cite{MartinNonlinear}. Indeed, the polynomial sector bounds from \cite{MartinNonlinear} include the unknown nonlinear system dynamics, are derived from data without knowledge of the true basis functions, and are suitable for the here applied robust control techniques. Subject of future research is the application of the proposed framework in practice and a thorough comparison of deterministic and stochastic approaches for determining system properties which could result in a framework fusing the advantages of both.

\vspace{-1.5cm}

\begin{IEEEbiography}[{\includegraphics[width=1.3in,height=1.25in,clip,keepaspectratio]{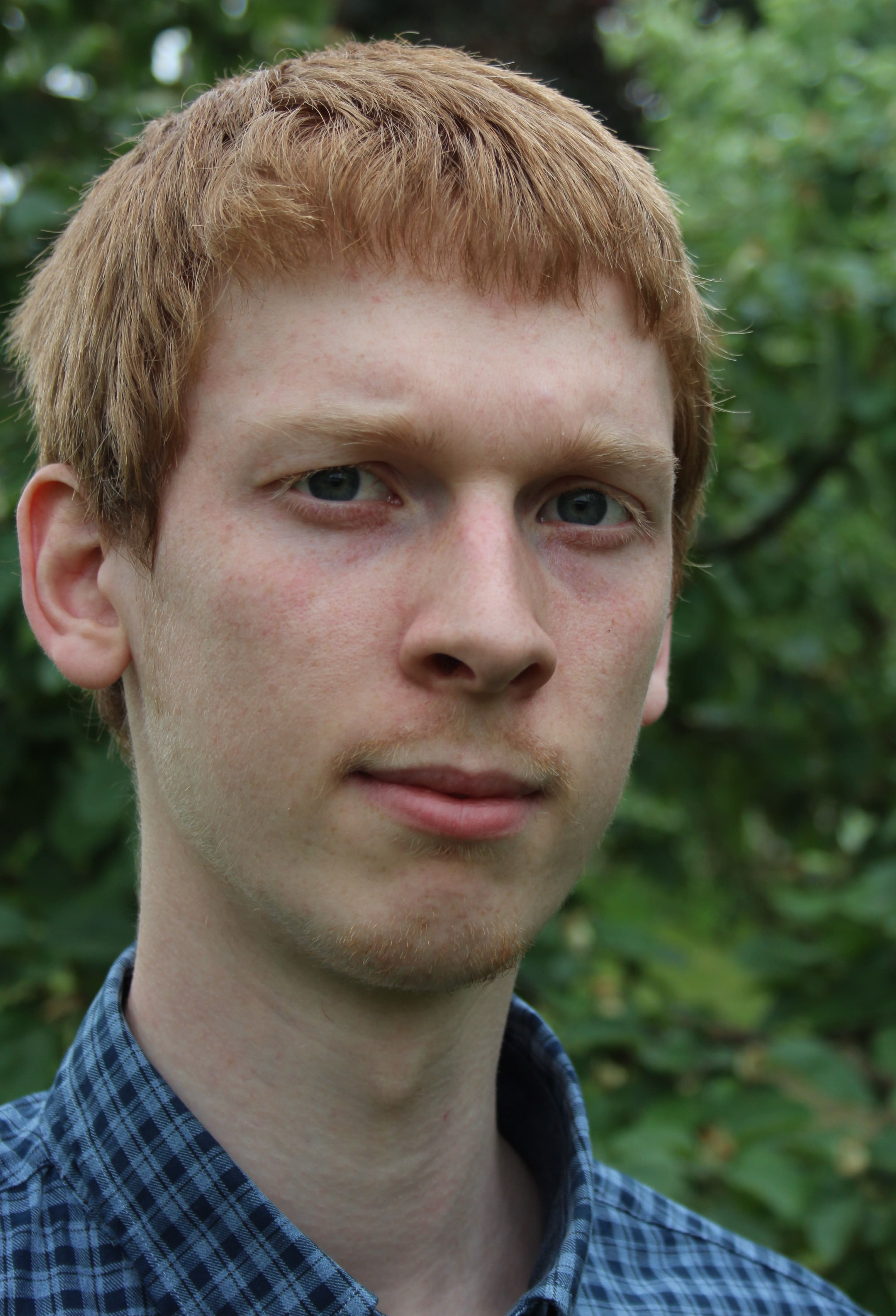}}] 
{Tim Martin} received the Master’s degree in Engineering Cybernetics from the University of Stuttgart, Germany, in 2018. Since 2018, he has been a Research and Teaching Assistant at the Institute for Systems Theory and Automatic Control and a member of the Graduate School Simulation Technology at the University of Stuttgart. His research interests include data-driven system analysis and control with focus on nonlinear systems.
\end{IEEEbiography}
\vspace{-15cm}
\begin{IEEEbiography}[{\includegraphics[width=1.2in,height=1.25in,clip,keepaspectratio]{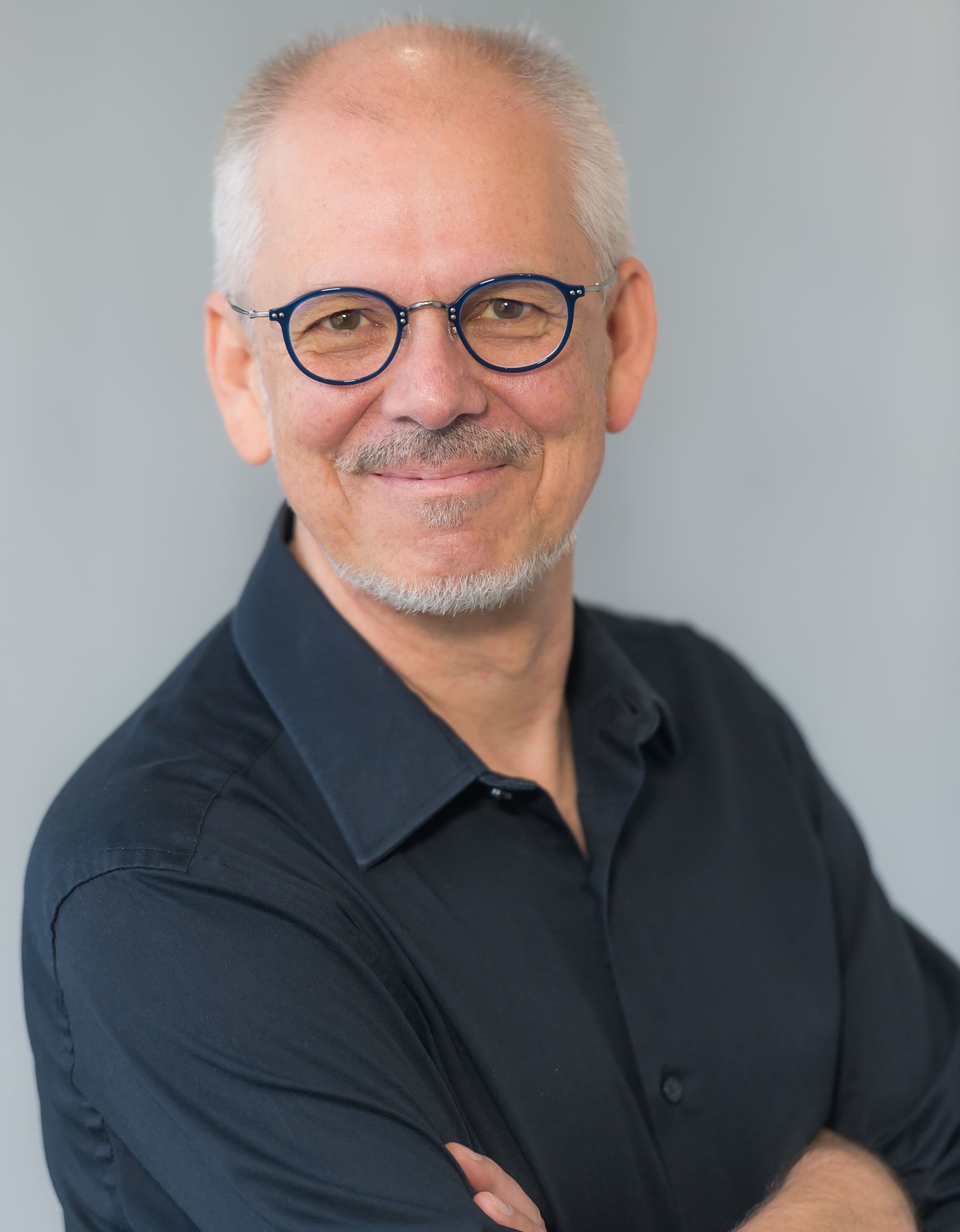}}]
{Frank Allgöwer} is professor of mechanical engineering at the University of Stuttgart, Germany, and Director of the Institute for Systems Theory and Automatic Control (IST) there.\\\indent
He is active in serving the community in several roles: Among others he has been President of the International Federation of Automatic Control (IFAC) for the years 2017-2020, Vicepresident for Technical Activities of the IEEE Control Systems Society for 2013/14, and Editor of the journal Automatica from 2001 until 2015. From 2012 until 2020 he served in addition as Vice-president for the German Research Foundation (DFG), which is Germany’s most important research funding organization.\\\indent
His research interests include predictive control, data-based control, networked control, cooperative control, and nonlinear control with application to a wide range of fields including systems biology.	
%{Frank Allgöwer} studied Engineering Cybernetics and Applied Mathematics in Stuttgart and at the University	of California, Los Angeles (UCLA), respectively, and received his Ph.D. degree from the University of Stuttgart in Germany.\\\indent
%Since 1999 he is the director of the Institute for Systems Theory and Automatic Control and professor at the University of Stuttgart. His research interests include networked control, cooperative control, predictive control, and nonlinear control with application to a wide range of fields including systems biology.\\\indent
%For the years 2017–2020, Frank Allgöwer serves as President of the International Federation of Automatic Control (IFAC) and since 2012 as Vice President of the German Research Foundation DFG. 
\end{IEEEbiography}

\end{document}